\listfiles
\documentclass[11pt, a4paper]{article} 

\usepackage[left=2.5cm,right=2.5cm,top=2cm,bottom=3cm]{geometry}

\usepackage[english]{babel}

\usepackage{bbm} 
\usepackage{amsmath}
\usepackage{amsfonts}
\usepackage{amssymb}
\usepackage{amsthm}
\usepackage{dsfont}
\usepackage{tikz}
\usetikzlibrary{calc,arrows.meta}
\usepackage{nicematrix}
\usepackage{pgfplots} 
\pgfplotsset{compat=1.16}
\usepgfplotslibrary{patchplots}

\usepackage{multirow}
\usepackage{adjustbox}
\usepackage{booktabs}
\usepackage{float}

\usepackage{caption}
\usepackage{subcaption}
\usepackage{threeparttable}

\usepackage{rotating}

\usepackage{natbib}
\bibliographystyle{apalike}

\setlength{\bibsep}{0pt}
\usepackage{etoolbox}

\usepackage{appendix}

\usepackage{hyperref}

\usepackage{enumerate}

\newcommand{\E}{\mathrm{E}}

\newcommand{\p}{\mathbb{P}}

\newtheorem{theorem}{Theorem}[section]
\newtheorem{proposition}[theorem]{Proposition}

\newtheorem{lemma}[theorem]{Lemma}
\theoremstyle{definition}
\newtheorem{definition}[theorem]{Definition}
\newtheorem{assumption}{Assumption}
\newtheorem{example}[theorem]{Example}
\theoremstyle{remark}
\newtheorem{rem}[theorem]{Remark}

\usepackage[colorinlistoftodos,textsize=tiny]{todonotes}
\newcommand{\Comments}{1}
\newcommand{\mynote}[2]{\ifnum\Comments=1\textcolor{#1}{#2}\fi}
\newcommand{\mytodo}[2]{\ifnum\Comments=1%
	\todo[linecolor=#1!80!black,backgroundcolor=#1,bordercolor=#1!80!black]{#2}\fi}

\ifnum\Comments=1               
\setlength{\marginparwidth}{0.8in}
\fi

\ifnum\Comments=1               
\setlength{\marginparwidth}{0.8in}
\fi

\ifnum\Comments=1               
\setlength{\marginparwidth}{0.8in}
\fi

\begin{document}
	
	\title{Proper Correlation Coefficients for Nominal Random Variables\thanks{I thank Timo Dimitriadis, Daniel Gutknecht, Marc-Oliver Pohle, Christian H.\ Weiß, Johanna Nešlehová, Adrian Vetta and seminar participants at Goethe University, Frankfurt (2024, 2025), McGill University, Montreal (2025) and at the 12th HKMetrics workshop (2025) as well as conference participants at Statistische Woche, Wiesbaden (2025) for helpful comments.}}

	\author{Jan-Lukas Wermuth\thanks{Goethe University Frankfurt, RuW Building, Theodor-W.-Adorno-Platz 4, 60323 Frankfurt, Germany, e-mail: \href{mailto: wermuth@econ.uni-frankfurt.de}{wermuth@econ.uni-frankfurt.de}}}
	
	\maketitle

\begin{abstract}
This paper develops an intuitive concept of perfect dependence between two variables of which at least one has a nominal scale. Perfect dependence is attainable for all marginal distributions. It furthermore proposes a set of dependence measures that are 1 if and only if this perfect dependence is satisfied. The advantages of these dependence measures relative to classical dependence measures like contingency coefficients, Goodman-Kruskal's lambda and tau and the so-called uncertainty coefficient are twofold. Firstly, they are defined if one of the variables exhibits continuities. Secondly, they satisfy the property of attainability. That is, they can take all values in the interval $[0,1]$ irrespective of the marginals involved. Both properties are not shared by classical dependence measures which need two discrete marginal distributions and can in some situations yield values close to 0 even though the dependence is strong or even perfect.  

Additionally, the paper provides a consistent estimator for one of the new dependence measures together with its asymptotic distribution under independence as well as in the general case. This allows to construct confidence intervals and an independence test with good finite sample properties, as a subsequent simulation study shows. Finally, two applications on the dependence between the variables country and income, and country and religion, respectively, illustrate the use of the new measure.
\end{abstract}

	\textbf{Keywords:} Goodman-Kruskal's Gamma; Confidence Intervals; Independence Tests; Statistical Dependence
	\newline
	
	
	\section{Introduction}
	\label{Introduction}
    Most random variables featuring in statistical applications are real-valued or exhibit at least an ordinal scale. Excluding stochastic processes, which can be interpreted as function-valued random variables, the only remaining type of random variables that appear as data in reality are those with a nominal scale. That is, they map to a set $\Omega'$ with finite cardinality consisting of elements that do not allow for any meaningful ordering (see Subsection \ref{subsec:setup} for details). While those variables are less interesting from a mathematical point of view as less measures of central tendency or dispersion and no cdf exists, they are still of immense practical relevance. 

    \begin{table}[tb]
	\caption{Source: \cite{WRD2025} (2020 data)} 
	\label{tab:countryrel_contingency_table}
    \begin{subtable}{.6\linewidth}
		\centering
        \subcaption{$3\times 3$ contingency table with absolute frequencies.}
		\label{tab:countryrel_contingency_tablea}
		\begin{tabular}{ccccc}
			\addlinespace
			\toprule
			& Christians &  Jews & Muslims &\\
			\midrule 
			Germany&56,071,000&127,000&5,351,000&61,549,000\\     
			Poland&36,782,000&3,100&39,200&36,824,300\\  
			Czechia&3,684,000&3,700&13,400&3,701,100\\
			&96,537,000&133,800&5,403,600&102,074,400\\
			\bottomrule
		\end{tabular}
	\end{subtable}
	\begin{subtable}{.39\linewidth}
		\centering
        \subcaption{Four dependence measures}
		\label{tab:countryrel_contingency_tableb}
		\begin{tabular}{cc}
			\addlinespace
			\toprule
			measure & value  \\
			\midrule 
			Cramér's V & 0.13 \\
			Pearson's C &  0.19\\
			Goodman-Kruskal's $\tau$ & 0.03  \\
			Uncertainty Coefficient&0.05\\
			\bottomrule
		\end{tabular}
	\end{subtable}
\end{table}
    In particular, the question of how to characterize statistical dependence between random variables when at least one variable is measured on a nominal scale remains largely unexplored in the literature. Most existing work on dependence assumes continuous marginals, since the copula representation is particularly well-behaved in this case \citep{sklar1959, nelsenIntroduction2006, genestEverything2007, durante2016principles}. Notable exceptions include \cite{geenens2020copula}, \cite{neslehova2007rank}, \cite{perroneGeometry2019}, \cite{genestEstimation2013}, and \cite{genest2007primer}. However, these studies focus on random variables whose values exhibit a reasonable ordering.
    
    For example, consider Table \ref{tab:countryrel_contingency_tablea} in which the absolute frequencies of Christians, Jews and Muslims living in Germany, Poland and Czechia are displayed. It is evident that the marginal frequencies do not allow for a concentration of either religion in one country, which would probably be the plainest type of (perfect) dependence possible. Instead, we have three primarily Christian countries which have Jewish and Muslim minorities of differing size. In fact, since Poland and Czechia have in the past enacted much stricter migration policies than Germany, the Muslims in this example are almost fully concentrated in Germany. The same holds true for Jews whose (re)settlement in Germany has been considered a great success after the Shoah. Taking the marginal frequencies as given, it is hard to imagine any bivariate distribution with a stronger dependence than the one displayed (besides removing Jews and Muslims completely out of Poland and Czechia). Note that fixing the marginals should always be the starting point for any dependence assessment since the two marginals and the dependence are three orthogonal objects, i.e.\ objects not carrying any information about each other \citep{geenens2023dependence}. Still, a selection of dependence measures in Table \ref{tab:countryrel_contingency_tableb} indicates little to no dependence. This raises the question to which extent those dependence measures sensibly measure the dependence inherent in this table.
    
    In mathematical terms, the nominal case creates several problems which explains why the stated dependence measures behave as they do. As no bivariate cdf exists, Fréchet-Hoeffding bounds \citep{frechet1951tableaux, hoeffding1940} and a (partial) dependence ordering \citep{tchen1980inequalities, yanagimoto1969partial} are not available. Consequently, dependence measures for the nominal case such as contingency coefficients, Goodman-Kruskal's lambda and tau and the uncertainty coefficient cannot adhere to such concepts which causes serious problems in the interpretation of these measures. For example, it is easily possible to construct marginal distributions which constrain the range of these coefficients from $[0,1]$ to $[0,c)$ with $0\le c\le1$ (see Section \ref{sec:improper_measures}). This is especially true for the introductory example and explains why the reported values are so small. However, since the marginal distributions have nothing to do with the dependence (see again \cite{geenens2023dependence}), any dependence measure showing such behavior is surely suboptimal.

    In addition to that, traditional measures critically rely on the assumption that both variables whose dependence shall be assessed are discrete. Thus, nominal-continuous comparisons such as country and income (on an individual level) or nominal-mixed comparisons such as country and precipitation (on a municipality level) are impossible. 
    
    The ambition of this paper therefore firstly is to offer a proposal for a definition of perfect dependence in the nominal case that is attainable irrespective of the marginals involved in Section \ref{sec:perfect_dependence}. Section \ref{sec:desirable_properties} collects a set of desirable properties whose fulfillment qualifies a dependence measure as \textit{proper}, one of those being that the measure attains the upper bound 1 if and only if this concept of perfect dependence is satisfied. To the best of my knowledge, existing sets of desirable properties are almost exclusively tailored towards real-valued random variables \citep{Renyi1959, Schweizer1981, scarsini1984measures, Mari2001, Embrechts2002, Balakrishnan2009, fissler2023generalised}. Appendix B in \cite{weiss2008measuring} marks a noteworthy exception.
    
    Section \ref{sec:improper_measures} shows the improperness of all existing dependence measures by exploiting their inability to deal with continuities in one marginal distribution and their non-attainability, i.e.\ their characteristic that the upper bound 1 can only be attained if the marginal distributions exhibit a specific structure. 

    Section \ref{sec:proper_measures} proposes a set of measures that cure the shortcomings of the improper measures in the nominal-nominal case and comply with the properties needed for properness. Additionally, it establishes an analogy between one of those new measures and the maximal correlation by \citet{gebelein1941maximal} and \citet{hirschfeld1935maximal}, an undirected dependence measure for real-valued random variables mapping to $[0,1]$. Moreover, it features a proposal for a dependence measure in the nominal-continuous case.
    
    Section \ref{sec:statistical_inference} defines a consistent estimator for the measure(s) based on Goodman-Kruskal's $\gamma$ \citep{Goodman1954} and computes its asymptotic distribution, both with and without the assumption of independence between the involved random variables. These results can in turn be used to construct confidence intervals and an independence test. Both methods of statistical inference will subsequently (in Section \ref{sec:sims}) be examined with respect to their finite sample performance in a simulation study that features a selected number of DGPs. 

    The paper closes with two applications in Section \ref{sec:case_study}, one concerning the dependence between country and income and the other concerning the dependence between country and religion. Section \ref{sec:conclusion} concludes. The Appendix extends Section \ref{sec:improper_measures}, falsifies an alternative concept of perfect dependence and explains which algorithms lead to a fast computation of the estimator. Additionally, it contains all the proofs as well as additional figures.

    \sloppy
    Moreover, I provide an \texttt{R} \citep{RCoreTeam} package \texttt{NCor} with an implementation of the estimator (by using the algorithms introduced in Appendix \ref{sec-App:Computation}), its confidence intervals and the independence test under \texttt{https://github.com/jan-lukas-wermuth/NCor}. The replication material for all the results in the paper is available under \texttt{https://github.com/jan-lukas-wermuth/replication\_NCor}.

	\section{Dependence Concepts for Nominal Random Variables} \label{sec:perfect_dependence}
	\subsection{Setup} 
    \label{subsec:setup}
	Consider two random variables $X$ and $Y$ defined on the same probability space $(\Omega, \mathcal{F}, \mathbb{P})$. On the one hand, at least one of the variables has a nominal scale, i.e.\ it maps to a set $\Omega'$ $(\Omega'')$ with $3\leq|\Omega'|<\infty$ $(3\leq|\Omega''|<\infty)$ consisting of arbitrary elements without a natural ordering.\footnote{It can, for example, be an alphabet, the set of all religions or the set of all nationalities.} We assume that these sets contain only elements with positive probability. The reason why the binary case is excluded is that you can always interpret such variables on an ordinal scale, i.e.\ as an event happening or not, and we have already treated this case in \citet{pohle2024measuring}. On the other hand, at most one of the variables exhibits at least an ordinal scale and therefore maps to $\mathbb{R}$ (or, at least, can be coded by elements of $\mathbb{R}$). Consequently, the marginal distribution(s) of the nominal variable(s) is (are) discrete, and the marginal distribution of the potential other random variable can be either discrete, continuous, or mixed. We are interested in measuring the strength of dependence between these two random variables. Any attempt to measure the direction of dependence is doomed due to the presence of at least one nominal random variable, which makes such an interpretation impossible.\footnote{\citet{weiss2008measuring} mention a possibility to define negative dependence also in the nominal case. However, a prerequisite for this definition is that the ranges of $X$ and $Y$ are identical. This requirement is considered too strict for our purposes.}
	
	\subsection{Perfect Dependence}\label{subsec:perfect_dependence}
	I now introduce a concept of perfect dependence for nominal random variables. Any such concept must be invariant with respect to the marginal distributions, i.e.\ it must be possible to attain perfect dependence irrespective of the shape and values that the involved marginals take \citep{pohle2024measuring, pohle2024}. The view that this paper holds is that dependence within real-valued random vectors is not fundamentally different from the dependence within the vectors considered here. Therefore, we can start our analysis with an arbitrary $\mathbb{R}^2$-valued random vector $(X,Y)$ with at least one discrete marginal distribution. We assume that this vector exhibits perfect positive or negative dependence. This is equivalent to its CDF $F_{X,Y}$ being identical to the Fréchet-Hoeffding upper bound $(F_{X,Y}(x,y):=\min\{F_X(x),F_Y(y)\})$ or the Fréchet-Hoeffding lower bound $(F_{X,Y}(x,y):=\max\{0,F_X(x)+F_Y(y)-1\})$, respectively. The replacement of the values of the discrete component(s) by elements of $\Omega'$ $(\Omega'')$ shall now only eliminate the directional interpretation. The attribute of perfectness however is sustained such that $(X,Y)$ is perfectly dependent, without any notion of positive or negative dependence.  
	
	\begin{definition}[Perfect Dependence]\label{def:perfect_dependence}Consider two cases:
		\begin{enumerate}
			\item Only $X$ (w.l.o.g.) has a nominal interpretation and we denote $|\Omega'|=:a\in \mathbb{N}_{\ge 3}$. We sort the elements of $\Omega'$ in an arbitrary but fixed way and write $\Omega'=\{\omega'_1, ..., \omega'_a\}$. Then, let $S_a$ be the set of all permutations of $\{1, ..., a\}$. With each $s_a\in S_a$, we associate $X^{s_a}$, the random variable $X$ with the numbering $s_a$ $(\omega'_1:=s_{a}(1), ..., \omega'_a:=s_{a}(a))$. $X$ and $Y$ are perfectly dependent if there exists an $s_a\in S_a$ such that $F_{X^{s_a},Y}$ is equivalent to either the Fréchet-Hoeffding lower or the Fréchet-Hoeffding upper bound.
			\item $X$ and $Y$ have a nominal interpretation and we denote $|\Omega'|=:a\in \mathbb{N}_{\ge 3}, |\Omega''|=:b\in \mathbb{N}_{\ge 3}$. We sort the elements of $\Omega'$ and $\Omega''$ in an arbitrary but fixed way and write $\Omega'=\{\omega'_1, ..., \omega'_a\}$ and $\Omega''=\{\omega''_1, ..., \omega''_b\}$. Then, let $S_a$ be the set of all permutations of $\{1, ..., a\}$ and $S_b$ be the set of all permutations of $\{1, ..., b\}$. With each $s_a\in S_a$, we associate $X^{s_a}$, the random variable $X$ with the numbering $s_a$ $(\omega'_1:=s_{a}(1), ..., \omega'_a:=s_{a}(a))$. Also, with each $s_b\in S_b$, we associate $Y^{s_b}$, the random variable $Y$ with the numbering $s_b$ $(\omega''_1:=s_{b}(1), ..., \omega''_b:=s_{b}(b))$. $X$ and $Y$ are perfectly dependent if there exist $s_a\in S_a$ and $s_b\in S_b$ such that $F_{X^{s_a},Y^{s_b}}$ is equivalent to either the Fréchet-Hoeffding lower or the Fréchet-Hoeffding upper bound.
		\end{enumerate}
	\end{definition}
	
	The following lemma shows that the above definition is unnecessarily complicated. In principle, we could drop one of the bounds from our definition.
	
	\begin{lemma}\label{lem:perfectdependence}
		In case 1 of Definition \ref{def:perfect_dependence}, the following two statements are equivalent:
		\begin{enumerate}[(i)]
			\item There exists a numbering $s_a\in S_a$ such that $F_{X^{s_a},Y}$ is equivalent to the Fréchet-Hoeffding lower bound.
			\item There exists a numbering $s_a\in S_a$ such that $F_{X^{s_a},Y}$ is equivalent to the Fréchet-Hoeffding upper bound.
		\end{enumerate}
		In case 2 of Definition \ref{def:perfect_dependence}, the following two statements are equivalent:
		\begin{enumerate}[(i)]
			\item There exist numberings $s_a\in S_a$ and $s_b\in S_b$ such that $F_{X^{s_a},Y^{s_b}}$ is equivalent to the Fréchet-Hoeffding lower bound.
			\item There exist numberings $s_a\in S_a$ and $s_b\in S_b$ such that $F_{X^{s_a},Y^{s_b}}$ is equivalent to the Fréchet-Hoeffding upper bound.
		\end{enumerate}
	\end{lemma}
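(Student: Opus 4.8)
The plan is to exploit the obvious symmetry that "reversing the numbering" of a nominal variable swaps the two Fréchet--Hoeffding bounds, together with the fact that the set of admissible numberings is closed under this reversal. Concretely, given $s_a\in S_a$, define its \emph{reversal} $\bar s_a\in S_a$ by $\bar s_a(k):=a+1-s_a(k)$. This is again a permutation of $\{1,\dots,a\}$ (it equals $r\circ s_a$ for the order-reversing bijection $r(j)=a+1-j$), and it satisfies $X^{\bar s_a}=a+1-X^{s_a}$ pointwise on $\Omega$. Because the two directions of the claimed equivalence are mirror images of each other, it suffices to establish one implication, say (ii)$\Rightarrow$(i): I will show that if $F_{X^{s_a},Y}$ equals the Fréchet--Hoeffding upper bound, then $F_{X^{\bar s_a},Y}$ equals the Fréchet--Hoeffding lower bound; running the same argument with the two bounds interchanged yields (i)$\Rightarrow$(ii).

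For that key implication I would argue directly on distribution functions. Since $X^{\bar s_a}$ takes values in $\{1,\dots,a\}$, for $x\in[m,m+1)$ with $1\le m\le a-1$ one has, rewriting things via the event $\{Y\le y\}$ and its sub-event,
\begin{align*}
F_{X^{\bar s_a},Y}(x,y) &= \p\bigl(X^{\bar s_a}\le m,\;Y\le y\bigr)=\p\bigl(X^{s_a}\ge a+1-m,\;Y\le y\bigr)\\
&= F_Y(y)-F_{X^{s_a},Y}(a-m,\,y)=F_Y(y)-\min\bigl(F_{X^{s_a}}(a-m),\,F_Y(y)\bigr),
\end{align*}
the last equality being the hypothesis. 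Since $F_{X^{\bar s_a}}(x)=\p(X^{s_a}\ge a+1-m)=1-F_{X^{s_a}}(a-m)$, substituting turns the right-hand side into $\max\bigl(0,\,F_{X^{\bar s_a}}(x)+F_Y(y)-1\bigr)$, exactly the lower bound; the boundary regimes $x<1$ and $x\ge a$ are handled separately, both sides reducing to $0$ and to $F_Y(y)$ respectively. Nothing in this computation uses discreteness of $Y$, so the identical display proves case 2 verbatim: given $s_a,s_b$ realizing the upper bound for $F_{X^{s_a},Y^{s_b}}$, replace $s_a$ by $\bar s_a$ and treat $Y^{s_b}$ as the second variable (reversing $s_b$ instead would work equally well, and reversing both would return the upper bound).

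The only steps needing care are, first, picking the \emph{correct} notion of reversal — it must be the one giving $X^{\bar s_a}=a+1-X^{s_a}$ as random variables, not the superficially similar numbering $k\mapsto s_a(a+1-k)$, which is not a pointwise affine transform of $X^{s_a}$ — and, second, the bookkeeping at the jumps of the CDFs (the half-open intervals $[m,m+1)$ and the two boundary cases), which is routine but must be carried out to obtain genuine equality of distribution functions. A slicker but less self-contained alternative would invoke the standard characterisation that $F_{U,V}$ equals the upper (lower) Fréchet--Hoeffding bound iff $(U,V)$ is comonotone (countermonotone) iff $(U,V)\stackrel{d}{=}(f(Z),g(Z))$ for some random variable $Z$ and non-decreasing (respectively one non-decreasing and one non-increasing) $f,g$; then $X^{\bar s_a}=a+1-X^{s_a}$ post-composes the non-decreasing map realizing $X^{s_a}$ with a strictly decreasing bijection, flipping comonotonicity to countermonotonicity while leaving $Y$ untouched. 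I would present the direct CDF computation as the main proof since it keeps the argument elementary.
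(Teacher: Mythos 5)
Your proposal is correct and rests on the same key idea as the paper's proof: the reversal $\bar s_a(k)=a+1-s_a(k)$ is exactly the paper's ``complementary permutation'' $s_a^-$, and in case 2 both arguments flip only one of the two numberings while holding the other fixed. The only difference is in the verification step — you check the bound-swap by a direct computation on the joint CDF (with the boundary cases at the jumps), whereas the paper invokes the quantile-coupling characterisation $(X^{s_a},Y)\sim(F^{\leftarrow}_{X^{s_a}}(U),F^{\leftarrow}_Y(U))$ and passes from $U$ to $1-U$ — which is essentially the ``slicker alternative'' you mention at the end.
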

	
	\section{Desirable Properties of Dependence Measures}
	\label{sec:desirable_properties}
	We want to summarize the dependence between $X$ and $Y$ in a single number, a dependence measure.
	\begin{definition}[Dependence Measure]
		A dependence measure for the non-deterministic random variables $X$ and $Y$ is a mapping $\delta:D\to \mathbb{R}\ (\mathcal{L}_{X,Y}\mapsto \delta(\mathcal{L}_{X,Y}))$, where we write $\delta(X, Y):=\delta(\mathcal{L}_{X,Y})$ and define $D:=\mathcal{M}_1(\Omega'\times \Omega'')$ or $D:=\mathcal{M}_1(\Omega'\times \mathbb{R})$. $\mathcal{M}_1(\cdot)$ denotes the set of all probability measures on the respective product set, equipped with either $\mathcal{P}(\Omega')\otimes \mathcal{P}(\Omega'')$ or $\mathcal{P}(\Omega')\otimes \mathcal{B}(\mathbb{R})$. 
	\end{definition}
	As already mentioned before, a fundamental difference to the case of two real-valued random variables is the impossibility to interpret directions of dependence. Thus, this paper aims to construct a dependence measure that indicates only the strength of dependence, but not the direction. In order to perform a substantial assessment of whether a measure deserves the attribute of \textit{properness}, we need a set of desirable properties.
	
	\begin{definition}[Proper Dependence Measure]\label{def:proper}
		We call a dependence measure $\delta(X, Y)$ for the non-deterministic random variables $X$ and $Y$ of which at least $X$ (w.l.o.g.) has a nominal scale \emph{proper}, if it fulfills the following characteristics:
		\begin{enumerate}[(A)]
			\item \emph{Existence}: $\delta(X,Y)$ exists for any pair of random variables $X$ and $Y$.
			\item \emph{Normalization}: $0\le \delta(X,Y)\le 1$.
			\item \emph{Independence}: $\delta(X, Y)=0$ if and only if $X$ and $Y$ are independent.
			\item \emph{Attainability}: $\delta(X, Y)=1$ if and only if $X$ and $Y$ are perfectly dependent.
			\item \emph{Absolute invariance to strictly monotonic transformations} (only applicable if $Y$ has at least an ordinal scale):
			\begin{align*}
				\delta(X,g(Y))=
					\delta(X,Y),\ \text{g: } \mathbb{R}\to \mathbb{R} \text{ strictly monotonic on the range of } Y
			\end{align*}
			\item \emph{Symmetry}: $\delta(X, Y)=\delta(Y, X)$.
		\end{enumerate}
	\end{definition}
	
	Relative to the real-valued case, the most remarkable difference is the absence of negative values in the co-domain of the dependence measure. This is natural because without the option to measure direction of dependence, there is no need to distinguish between positive and negative values. Additionally, a (partial) dependence ordering similar to \citet{yanagimoto1969partial} and \citet{tchen1980inequalities} is in my opinion difficult to construct and therefore left for future research. As a consequence, I also do not include any property formalizing compliance of a dependence measure with such a concept. Finally, the attainability axiom uses the novel definition of perfect dependence and is thus fundamentally different from the attainability definition in the real-valued case.
	
	\section{Improper Dependence Measures}
	\label{sec:improper_measures}
	This section together with Appendix \ref{sec:AppendixImproperMeasures} reviews the most popular dependence measures for the nominal case and thus extends Appendix B in \cite{weiss2008measuring}. More precisely, it considers several measures based on Pearson's mean square contingency coefficient. Appendix \ref{sec:AppendixImproperMeasures} contains measures based on proportional reduction of predictive error and an entropy-based measure. Measures of agreement such as Cohen's $\kappa$ \citep{cohen1960agreement} are left out of the discussion because their assumption that $X$ and $Y$ take exactly the same values is considered too strict for our purposes.
	\subsection{Measures Based on Pearson's Mean Square Contingency Coefficient}
\begin{table}[tb]
\centering
\begin{minipage}[t]{0.47\textwidth}
\centering
\caption{Google and Google Scholar search results (6th January 2026)}
\label{tab:measures_scholar}
\begin{tabular}{ccc}
    \toprule
    measure &  Google & Scholar    \\
    \midrule 
    Cram\'er's V & 320.000 & 29.800 \\
    Pearson's C & 95.300 & 26.200 \\
    Tschuprow's T & 1.940 & 245 \\
    Goodman-Kruskal's $\lambda$ & 1.110 & 44 \\
    Goodman-Kruskal's $\tau$ & 278 & 31 \\
    Uncertainty Coefficient &19.800 & 5.020\\
    \bottomrule
\end{tabular}
\end{minipage}
\hfill
\begin{minipage}[t]{0.47\textwidth}
\centering
\caption{An $a \times b$ contingency table with $p_{ij}:=\mathbb{P}(X=x_i, Y=y_j)$, $p_{i\cdot}:=\sum_{j=1}^bp_{ij}$ and $p_{\cdot j}:=\sum_{i=1}^ap_{ij}$.}
		\label{tab:contingency_table}
		\begin{tabular}{ccccc}
			\toprule
			& $y_1$ &  $\hdots$ & $y_b$ \\
			\midrule 
			$x_1$         &   $p_{11}$   & $\hdots$& $p_{1b}$  &$p_{1\cdot}$   \\
			$\vdots$           & $\vdots$  &   $\ddots$    &   $\vdots$   \\  
			$x_a$&   $p_{a1}$ & $\hdots$ & $p_{ab}$&$p_{a\cdot}$ \\
			&$p_{\cdot 1}$&$\hdots$&$p_{\cdot b}$&1\\
			\bottomrule
		\end{tabular}
\end{minipage}

\end{table}
	The most popular dependence measures for the nominal case arise as normalizations of Pearson's Mean Square Contingency $(MSC)$ coefficient (see Table \ref{tab:measures_scholar} and \citet{liebetrau1983measures} for an overview). This coefficient is the population analogue of the test statistic in Pearson's $\chi^2$ independence test. It is only defined if both marginal distributions are discrete, i.e.\ if $X$ takes the values $x_1, ..., x_a$ and $Y$ takes the values $y_1, ..., y_b$:
	\begin{align*}
    MSC(X,Y):=\sum_{i=1}^{a}\sum_{j=1}^{b}\frac{(\mathbb{P}(X=x_i, Y=y_j)-\mathbb{P}(X=x_i)\mathbb{P}(Y=y_j))^2}{\mathbb{P}(X=x_i)\mathbb{P}(Y=y_j)}
	\end{align*}
    
	Cramér's $V$ \citep{Cramer1945}, Tschuprow's $T$ \citep{Tschuprow1925}, Pearson's Contingency Coefficient $PC$ \citep{Pearson1904} and Sakoda's $S$ \citep{sakoda1977measures} are defined as 
	\begin{align*}
		V(X,Y):=&\sqrt{\frac{MSC(X,Y)}{\min\{a-1,b-1\}}}, \quad T(X,Y):=\sqrt{\frac{MSC(X,Y)}{\sqrt{(a-1)(b-1)}}}, \\
		PC(X,Y):=&\sqrt{\frac{MSC(X,Y)}{1+MSC(X,Y)}} \quad \text{and}\quad S(X,Y):=\sqrt{\frac{MSC(X,Y)\min\{a,b\}}{(1+MSC(X,Y))\min\{a-1,b-1\}}}.
	\end{align*}
	Even though their improperness is immediate due to the necessity to have two discrete marginal distributions, it is insightful to analyze their violation of the attainability property in more detail. The following lemma is well-known in the literature.
	
	\begin{lemma}[$MSC$ attainability]\label{lem:MSC_attainability}
		It holds that $0\le MSC(X,Y)\le \min\{a,b\}-1$. Additionally, the following two statements hold:
		\begin{itemize}
			\item We have $MSC(X,Y)=0$ if and only if $X$ and $Y$ are independent.
			\item We have $MSC(X,Y)=\min\{a,b\}-1$ if and only if $\mathbb{P}(X=x_i, Y=y_j)=\mathbb{P}(X=x_i)$ with some $j=1, ..., b$ holds for all $i=1, ..., a$ (in the case of $a\ge b$) or $\mathbb{P}(X=x_i, Y=y_j)=\mathbb{P}(Y=y_j)$ with some $i=1, ..., a$ holds for all $j=1, ..., b$ (in the case of $a\le b$).
		\end{itemize}
	\end{lemma}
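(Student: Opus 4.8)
The plan is to work directly with the explicit double sum. First I would expand the square in the numerator of $MSC(X,Y)$ and use $\sum_i p_{i\cdot} = \sum_j p_{\cdot j} = \sum_{i,j} p_{ij} = 1$ (in the notation of Table~\ref{tab:contingency_table}) to get the identity
\begin{align*}
MSC(X,Y) = \sum_{i=1}^{a}\sum_{j=1}^{b}\frac{p_{ij}^2}{p_{i\cdot}\,p_{\cdot j}} - 1,
\end{align*}
which is well defined since the standing assumption that every value has positive probability gives $p_{i\cdot},p_{\cdot j}>0$. From the original form of $MSC$ as a sum of nonnegative terms weighted by the positive numbers $(p_{i\cdot}p_{\cdot j})^{-1}$, nonnegativity is immediate, and $MSC(X,Y)=0$ holds precisely when every numerator vanishes, i.e.\ $p_{ij}=p_{i\cdot}p_{\cdot j}$ for all $i,j$, which is the definition of independence. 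This settles the lower bound and the first bullet.

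For the upper bound I would use the elementary inequality $p_{ij}^2\le p_{i\cdot}\,p_{ij}$, valid because $0\le p_{ij}\le p_{i\cdot}$. Dividing by $p_{i\cdot}p_{\cdot j}$ and summing over $i$ gives $\sum_i \frac{p_{ij}^2}{p_{i\cdot}p_{\cdot j}} \le \frac{1}{p_{\cdot j}}\sum_i p_{ij}=1$, and summing over $j$ then yields $\sum_{i,j}\frac{p_{ij}^2}{p_{i\cdot}p_{\cdot j}}\le b$. Running the symmetric argument with the roles of $i$ and $j$ (and $p_{ij}\le p_{\cdot j}$) interchanged gives the bound $a$, hence $MSC(X,Y)\le\min\{a,b\}-1$.

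It remains to characterize equality, which is the step that needs the most care. Assume w.l.o.g.\ $a\ge b$, so the binding bound is $b$; then $MSC(X,Y)=b-1$ forces equality in $p_{ij}^2\le p_{i\cdot}p_{ij}$ for every pair $(i,j)$, i.e.\ $p_{ij}\in\{0,p_{i\cdot}\}$ for all $i,j$. Since $\sum_j p_{ij}=p_{i\cdot}>0$, for each $i$ at least one $j$ must have $p_{ij}=p_{i\cdot}$, which is exactly the stated condition $\mathbb{P}(X=x_i,Y=y_j)=\mathbb{P}(X=x_i)$. Conversely, if for every $i$ there is some $j$ with $p_{ij}=p_{i\cdot}$, then, because $\sum_j p_{ij}=p_{i\cdot}$ and all summands are nonnegative, that $j$ is unique and $p_{ij'}=0$ for $j'\ne j$; hence $p_{ij}^2=p_{i\cdot}p_{ij}$ for every pair, so every inequality above is an equality and $MSC(X,Y)=b-1$. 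The case $a\le b$ is treated symmetrically with columns in place of rows.

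The only mild subtlety throughout is bookkeeping of the zero entries and invoking positivity of all marginal probabilities so that no term $p_{ij}^2/(p_{i\cdot}p_{\cdot j})$ is indeterminate; beyond that the argument is entirely elementary, consistent with the remark that the lemma is well known.
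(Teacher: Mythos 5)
Your proposal is correct and follows essentially the same route as the paper: rewrite $MSC(X,Y)=\sum_{i,j}p_{ij}^2/(p_{i\cdot}p_{\cdot j})-1$, bound this sum by $\min\{a,b\}$ using $p_{ij}\le p_{i\cdot}$ (resp.\ $p_{ij}\le p_{\cdot j}$), and characterize equality by each row (resp.\ column) having a single nonzero entry equal to its marginal. Your treatment of the equality case is just a slightly more explicit bookkeeping of the same argument, so there is nothing to add.
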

	If we interpret the previous lemma with a contingency table with $a$ rows and $b$ columns (see Table \ref{tab:contingency_table}), the second statement is equivalent to each row (if $a\ge b$) or each column (if $a\le b$) containing only one non-zero entry. Equipped with this lemma, we can analyze the attainability properties of the other coefficients.
	\begin{lemma}[attainability]\label{lem:V_attainability}
		It holds that $0\le V(X,Y),T(X,Y),PC(X,Y),S(X,Y)\le 1$. Additionally, the following statements hold:
		\begin{itemize}
			\item We have $V(X,Y)=0$ if and only if $T(X,Y)=0$ if and only if $PC(X,Y)=0$ if and only if $S(X,Y)=0$ if and only if $X$ and $Y$ are independent.
			\item We have $V(X,Y)=1$ if and only if $S(X,Y)=1$ if and only if $MSC(X,Y)=\min\{a,b\}-1$.
			\item We have $T(X,Y)=1$ if and only if $MSC(X,Y)=\min\{a,b\}-1$ and $a=b$.
			\item We never have $PC(X,Y)=1$.
		\end{itemize}
	\end{lemma}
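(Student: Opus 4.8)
The plan is to read everything off Lemma~\ref{lem:MSC_attainability}, which fixes the range $0\le MSC(X,Y)\le m-1$ of $MSC$ (writing $m:=\min\{a,b\}$, so that $\min\{a-1,b-1\}=m-1$) together with its two boundary characterizations. The unifying observation is that each of $V$, $T$, $PC$, $S$ is the square root of a strictly increasing function of $MSC$ on $[0,m-1]$, so that the value $0$ and every attained supremum can be transported back and forth between $MSC$ and the coefficient.

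First I would establish the normalization $0\le V,T,PC,S\le 1$. Nonnegativity is immediate since $MSC\ge 0$ and every denominator is strictly positive. For $V$ the upper bound is just $MSC/(m-1)\le 1$. For $PC$ it is $MSC/(1+MSC)<1$, valid for any finite $MSC\ge 0$. For $S$ I would write $S(X,Y)^2=\tfrac{MSC}{1+MSC}\cdot\tfrac{m}{m-1}$ and bound the first factor by $\tfrac{m-1}{m}$ using monotonicity of $t\mapsto t/(1+t)$, so the product is at most $1$. For $T$ the one extra ingredient is the elementary inequality $\min\{a-1,b-1\}\le\sqrt{(a-1)(b-1)}$ (the geometric mean of two nonnegative numbers lies between their minimum and maximum); combined with $MSC\le\min\{a-1,b-1\}$ this gives $T(X,Y)^2=MSC/\sqrt{(a-1)(b-1)}\le 1$.

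Next I would treat the value $0$: since $m-1$, $\sqrt{(a-1)(b-1)}$, $1+MSC$ and $(1+MSC)(m-1)$ are all positive, each of $V$, $T$, $PC$, $S$ vanishes if and only if $MSC(X,Y)=0$, and Lemma~\ref{lem:MSC_attainability} identifies $MSC=0$ with independence. Finally, for the value $1$: $V(X,Y)=1\Leftrightarrow MSC=m-1$ directly; $S(X,Y)=1\Leftrightarrow\tfrac{MSC}{1+MSC}=\tfrac{m-1}{m}\Leftrightarrow MSC=m-1$ by injectivity of $t\mapsto t/(1+t)$; and by Lemma~\ref{lem:MSC_attainability} the condition $MSC=m-1$ is exactly the stated one-nonzero-entry-per-row (or per-column) structure. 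For $T$, equality $T(X,Y)=1$ means $MSC=\sqrt{(a-1)(b-1)}$, and chaining with $MSC\le\min\{a-1,b-1\}\le\sqrt{(a-1)(b-1)}$ forces both inequalities to be equalities, hence $MSC=m-1$ and $a=b$; the converse is a one-line check. And $PC(X,Y)=1$ is impossible because it would require $MSC=1+MSC$.

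I do not expect a genuine obstacle here: every assertion reduces either to monotonicity of a one-variable function or to the geometric-mean/minimum inequality. The only step deserving a little care is the equality analysis for $T$, where one must note that $\min\{a-1,b-1\}=\sqrt{(a-1)(b-1)}$ holds precisely when $a=b$, so that $T$ can reach $1$ only on square tables.
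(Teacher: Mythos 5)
Your proposal is correct and follows essentially the same route as the paper: the paper's proof only writes out the third statement, using exactly your key inequality $\min\{a-1,b-1\}=\sqrt{(\min\{a-1,b-1\})^2}\le\sqrt{(a-1)(b-1)}$ with equality if and only if $a=b$, and leaves the remaining bullets as immediate consequences of Lemma~\ref{lem:MSC_attainability}. Your additional monotonicity arguments for the normalization, the zero case, $V$, $S$ and $PC$ simply spell out what the paper treats as obvious, and they are all sound.
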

	Given the conditions under which the coefficients attain their maximum value of 1, it is evident that it is easily possible to construct marginal distributions which do not allow a fulfillment of these conditions (see e.g.\ the introductory example). Hence, they can be regarded as too strict as the following lemma shows.
	\begin{lemma}[dependence concepts]\label{lem:MSC_DC}
		If a discrete random vector $(X,Y)$ fulfills $MSC(X,Y)=\min\{a,b\}-1$, it also fulfills the concept of perfect dependence defined in Definition \ref{def:perfect_dependence}. The reverse statement is wrong in the following sense:
        \begin{enumerate}
            \item The concept of perfect dependence as defined in Definition \ref{def:perfect_dependence} may be satisfied with marginals that make $MSC(X,Y)=\min\{a,b\}-1$ impossible. 
            \item The concept of perfect dependence as defined in Definition \ref{def:perfect_dependence} may be satisfied with marginals that make $MSC(X,Y)=\min\{a,b\}-1$ possible. Still, it may hold that $MSC(X,Y)\ne \min\{a,b\}-1$.
        \end{enumerate}
	\end{lemma}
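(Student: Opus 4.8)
\emph{The implication.} I would prove the forward direction directly from Lemma \ref{lem:MSC_attainability} and then give two explicit counterexamples for the converse. For the implication the plan is to convert the combinatorial content of Lemma \ref{lem:MSC_attainability} into a numbering that realises a Fréchet--Hoeffding bound. Suppose $MSC(X,Y)=\min\{a,b\}-1$ and assume $a\ge b$ (in case~2 of Definition \ref{def:perfect_dependence} this is w.l.o.g.\ by symmetry of $X$ and $Y$), so the value is $b-1$. By Lemma \ref{lem:MSC_attainability} every row of the contingency table carries a single non-zero entry, and since $\mathbb{P}(X=x_i)>0$ for all $i$ it is unique; hence there is a map $c\colon\{1,\dots,a\}\to\{1,\dots,b\}$ with $\mathbb{P}(X=x_i,Y=y_{c(i)})=\mathbb{P}(X=x_i)$, i.e.\ $Y=y_{c(X)}$ almost surely, and $c$ is onto since every column probability is positive. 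I would then take $s_a\in S_a$ enumerating the indices of $c^{-1}(1)$ first, then those of $c^{-1}(2)$, and so on through $c^{-1}(b)$ (and $s_b=\mathrm{id}$ in case~2, the $y_j$ already being ordered in case~1). Writing $M_l:=|c^{-1}(\{1,\dots,l\})|$, this numbering makes $\{Y\le y_l\}$ and $\{X^{s_a}\le M_l\}$ coincide up to a null set, whence
\begin{align*}
F_{X^{s_a},Y}(u,y_l)=\mathbb{P}\big(X^{s_a}\le\min\{u,M_l\}\big)=\min\{F_{X^{s_a}}(u),F_Y(y_l)\}
\end{align*}
by monotonicity of $F_{X^{s_a}}$ together with $F_{X^{s_a}}(M_l)=F_Y(y_l)$; as both sides are constant in the second argument between consecutive atoms of $Y$, the identity holds for all $(u,v)$, so $F_{X^{s_a},Y}$ is the upper Fréchet--Hoeffding bound and $(X,Y)$ is perfectly dependent by Definition \ref{def:perfect_dependence}. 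The only genuine work here is the bookkeeping making the blocks $c^{-1}(l)$ contiguous and increasingly ordered under $s_a$; the comonotonicity identity is then immediate, and this is the step I expect to be the (mild) main obstacle.

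\emph{Counterexample to part~1.} I would pick marginals for which $MSC$-maximality is structurally impossible while a Fréchet--Hoeffding coupling -- which always exists -- still witnesses perfect dependence. Take $a=b=3$, give $X$ the marginal $(1/3,1/3,1/3)$ and $Y$ the marginal $(1/2,1/4,1/4)$, and let $(X,Y)$ be comonotone with respect to a fixed numbering of the nominal variable(s). Then $F_{X,Y}$ is the upper Fréchet--Hoeffding bound, so $(X,Y)$ is perfectly dependent; but by Lemma \ref{lem:MSC_attainability}, $MSC(X,Y)=2$ would force each row to carry a unique non-zero entry, which for $a=b$ makes the $Y$-marginal a permutation of the $X$-marginal -- impossible here. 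Hence no coupling of these marginals reaches $MSC=\min\{a,b\}-1$, while perfect dependence is attained. (The introductory country/religion table is a real-data instance of the same phenomenon.)

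\emph{Counterexample to part~2.} Now I would keep the marginals compatible with $MSC$-maximality but pick a perfectly dependent law other than an $MSC$-maximal one. Let $a=b=3$ with $X$-marginal $(1/2,1/4,1/4)$ and $Y$-marginal $(1/4,1/2,1/4)$; being permutations of each other, the coupling placing mass $1/2$ on $(x_1,y_2)$ and $1/4$ on each of $(x_2,y_1)$ and $(x_3,y_3)$ attains $MSC=2$, so these marginals admit $MSC$-maximality. Taking $(X,Y)$ comonotone instead puts mass $1/4$ on each of $(x_1,y_1),(x_1,y_2),(x_2,y_2),(x_3,y_3)$; this is the upper Fréchet--Hoeffding bound, so $(X,Y)$ is perfectly dependent, yet its first row has two non-zero entries and thus $MSC(X,Y)\ne\min\{a,b\}-1$ by Lemma \ref{lem:MSC_attainability}. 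Once the small marginals above are fixed, both counterexamples follow directly from Lemma \ref{lem:MSC_attainability} and Definition \ref{def:perfect_dependence}.
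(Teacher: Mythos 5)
Your proposal is correct and takes essentially the same route as the paper: the forward direction converts the single-non-zero-entry structure from Lemma \ref{lem:MSC_attainability} into a block numbering realising the Fréchet--Hoeffding upper bound (the paper compresses this into one line via $(X^{s_a},Y^{s_b})=(F^{\leftarrow}_{X^{s_a}}(U),F^{\leftarrow}_{Y^{s_b}}(U))$), and the converse is settled by the same two kinds of counterexamples -- marginals that structurally forbid $MSC$-maximality paired with a comonotone coupling, and permutation-compatible marginals carrying a perfectly dependent but non-$MSC$-maximal coupling (the paper uses marginals $(0.1,0.7,0.2)/(0.3,0.6,0.1)$ and a countermonotonic table with identical marginals $(0.7,0.2,0.1)$, which are interchangeable with your choices). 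Note that, exactly like the paper's own proof, your argument covers case 1 of Definition \ref{def:perfect_dependence} only when $a\ge b$ (your w.l.o.g.\ is justified by symmetry only when both variables are nominal), so the sub-case of a single nominal variable with $a<b$ is glossed over in both treatments.
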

	
	\section{Proper Dependence Measures}
	\label{sec:proper_measures}
	The essential idea behind the construction of proper dependence measures is the concept of perfect dependence explained in Subsection \ref{subsec:perfect_dependence} together with property (D) from Definition \ref{def:proper}.
	
	\begin{definition}[Proper Measure]\label{def:wermuth_coefficient}
		Consider again the two cases from Definition \ref{def:perfect_dependence}:
		\begin{enumerate}
			\item Only $X$ (w.l.o.g.) has a nominal interpretation and we denote $|\Omega'|=:a\in \mathbb{N}_{\ge 3}$. We sort the elements of $\Omega'$ in an arbitrary but fixed way and write $\Omega'=\{\omega'_1, ..., \omega'_a\}$. Then, let $S_a$ be the set of all permutations of $\{1, ..., a\}$. With each $s_a\in S_a$, we associate $X^{s_a}$, the random variable $X$ with the numbering $s_a$ $(\omega'_1:=s_{a}(1), ..., \omega'_a:=s_{a}(a))$. The proposed measure is \begin{align*}
				\delta^{*}(X, Y)=\max_{s_a\in S_a}\delta(X^{s_a},Y)
			\end{align*}
			with some proper correlation coefficient $\delta$ \citep{pohle2024}.
			\item $X$ and $Y$ have a nominal interpretation and we denote $|\Omega'|=:a\in \mathbb{N}_{\ge 3}, |\Omega''|=:b\in \mathbb{N}_{\ge 3}$. We sort the elements of $\Omega'$ and $\Omega''$ in an arbitrary but fixed way and write $\Omega'=\{\omega'_1, ..., \omega'_a\}$ and $\Omega''=\{\omega''_1, ..., \omega''_b\}$. Then, let $S_a$ be the set of all permutations of $\{1, ..., a\}$ and $S_b$ be the set of all permutations of $\{1, ..., b\}$. With each $s_a\in S_a$, we associate $X^{s_a}$, the random variable $X$ with the numbering $s_a$ $(\omega'_1:=s_{a}(1), ..., \omega'_a:=s_{a}(a))$. Also, with each $s_b\in S_b$, we associate $Y^{s_b}$, the random variable $Y$ with the numbering $s_b$ $(\omega''_1:=s_{b}(1), ..., \omega''_b:=s_{b}(b))$. The proposed measure is \begin{align*}
				\delta^{*}(X, Y)=\max_{s_a\in S_a, s_b\in S_b}\delta(X^{s_a},Y^{s_b})\end{align*}
			with some proper correlation coefficient $\delta$ \citep{pohle2024}.
		\end{enumerate}
	\end{definition}
	
	Since Goodman-Kruskal's $\gamma$ is a proper dependence measure with straightforward inference procedures \citep{pohle2024inference}, the subsequent analysis will focus on this special case.
    	
	\begin{proposition}\label{prop:propercase1}
	Assume case 1 from Definition \ref{def:perfect_dependence}. Then, $\gamma^*(X,Y)$ fulfills all axioms from Definition \ref{def:proper} besides axiom (C). More precisely, $\gamma^*(X,Y)=0$ does not imply independence between $X$ and $Y$.
	\end{proposition}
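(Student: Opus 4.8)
The plan is to separate the five axioms that $\gamma^*$ satisfies — (A), (B), (D), (E), (F) — from axiom (C), which fails. The first group I would obtain by inheritance from the properties of Goodman-Kruskal's $\gamma$ as a proper correlation coefficient \citep{pohle2024}: $\gamma$ exists for every non-degenerate pair, $-1\le\gamma\le1$ with $\gamma=1$ exactly in the comonotone case, $\gamma$ is unchanged under strictly increasing transformations of an argument and changes sign under strictly decreasing ones, and $\gamma$ is symmetric. The one extra ingredient is a \emph{reversal identity}: with $r(i):=a+1-i$ the order-reversing permutation, $\gamma(X^{r\circ s_a},Y)=-\gamma(X^{s_a},Y)$ for every $s_a\in S_a$, since relabelling the ordinal argument in reverse interchanges concordances and discordances. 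Consequently the finite set $\{\gamma(X^{s_a},Y):s_a\in S_a\}$ is symmetric about $0$.

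Granting this, the positive axioms are quick. (A): each $\gamma(X^{s_a},Y)$ is defined and $S_a$ is finite, so the maximum is attained. (B): $\gamma\le1$ gives $\gamma^*\le1$, and symmetry of the value set about $0$ gives $\gamma^*\ge0$. (D): $\gamma^*(X,Y)=1$ iff $\gamma(X^{s_a},Y)=1$ for some $s_a$, iff $(X^{s_a},Y)$ is comonotone for some $s_a$, iff $F_{X^{s_a},Y}$ equals the Fréchet-Hoeffding upper bound for some $s_a$, which by Lemma~\ref{lem:perfectdependence} is exactly perfect dependence in the sense of Definition~\ref{def:perfect_dependence}. (E): a strictly increasing $g$ leaves each $\gamma(X^{s_a},Y)$ fixed, and a strictly decreasing $g$ turns $\gamma(X^{s_a},g(Y))$ into $-\gamma(X^{s_a},Y)=\gamma(X^{r\circ s_a},Y)$, so since $s_a\mapsto r\circ s_a$ permutes $S_a$ the maximum over $S_a$ is unchanged; in both cases $\gamma^*(X,g(Y))=\gamma^*(X,Y)$. (F): by symmetry of $\gamma$, $\gamma^*(X,Y)=\max_{s_a}\gamma(X^{s_a},Y)=\max_{s_a}\gamma(Y,X^{s_a})=\gamma^*(Y,X)$.

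The substantive part is the failure of (C). Since the value set is symmetric about $0$, $\gamma^*(X,Y)=0$ forces $\gamma(X^{s_a},Y)=0$ for \emph{every} $s_a\in S_a$, so I must produce a joint law with this property that is not a product law. I would take $a=3$ with $Y$ three-valued and work with the $3\times3$ probability matrix $P=(p_{ij})$, rows indexed by $X$. Writing $b_{ik}:=\sum_{j<j'}(p_{ij}p_{kj'}-p_{ij'}p_{kj})$ for $i<k$ and denoting by $P_c,P_d$ the concordance and discordance probabilities, a short computation shows that for each of the six orderings of the rows the numerator $P_c-P_d$ of $\gamma$ equals $\pm2$ times one of $b_{12}+b_{13}+b_{23}$, $b_{12}+b_{13}-b_{23}$, $b_{13}+b_{23}-b_{12}$, and that these three vanish simultaneously if and only if $b_{12}=b_{13}=b_{23}=0$. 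Now the antisymmetric form $b(u,v):=\sum_{j<j'}(u_jv_{j'}-u_{j'}v_j)$ on $\mathbb{R}^3$ annihilates $w:=(1,-1,1)$, i.e.\ $b(w,\cdot)\equiv0$, so every plane through $w$ has all its pairwise $b$-values equal to zero. Taking the rows $R_1=(1,1,0)$, $R_2=(1,1,0)+w=(2,0,1)$ and $R_3=R_1+R_2=(3,1,1)$ — which span such a plane — and rescaling $P$ to a probability matrix gives positive row and column margins, $b_{12}=b_{13}=b_{23}=0$, hence $\gamma(X^{s_a},Y)=0$ for all $s_a$ and $\gamma^*(X,Y)=0$, while $R_1$ and $R_2$ are not proportional, so $X$ and $Y$ are dependent. (Here $P_c+P_d>0$ for every ordering — e.g.\ the cells $(1,1)$ and $(2,3)$ both carry positive mass — so each $\gamma$ is genuinely defined.)

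The step I expect to be the real obstacle is this construction: exhibiting a single distribution on which \emph{all} relabellings of the nominal categories simultaneously annihilate $\gamma$ while dependence survives. What makes it manageable is the reduction of that requirement to the three bilinear identities $b_{ik}=0$; this reduction also explains a priori why counterexamples must exist, since those identities cut out a set of codimension three that strictly contains the family of product laws. Everything after that — checking entrywise nonnegativity, positivity of the margins, and non-degeneracy — is routine.
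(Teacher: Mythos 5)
Your treatment of the positive axioms (A), (B), (D), (E), (F) is essentially the paper's own: both arguments inherit the relevant properties of $\gamma$ and lean on the reversal identity $\gamma(X^{s_a^-},Y)=-\gamma(X^{s_a},Y)$ for the complementary permutation $s_a^-$, with Lemma \ref{lem:perfectdependence} (plus the characterization of $\gamma=1$ via the Fr\'echet--Hoeffding upper bound) closing the loop for attainability. Where you genuinely diverge is the counterexample showing the failure of (C). The paper takes $Y\sim\mathcal{N}(0,1)$ and lets $X$ indicate the bands $|Y|\ge 3$, $2\le|Y|<3$, $|Y|<2$; the symmetry $Y\mapsto -Y$ preserves the joint law and $X$ while swapping concordant and discordant pairs, so every $\gamma(X^{s_a},Y)$ vanishes even though $X$ is a deterministic function of $Y$. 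You instead construct a purely discrete $3\times 3$ table: you reduce ``$\gamma$ vanishes under every relabeling'' to the three bilinear identities $b_{12}=b_{13}=b_{23}=0$ (correct: the six orderings give $\pm 2$ times three sign-combinations whose simultaneous vanishing forces all $b_{ik}=0$), and you exploit that the antisymmetric form $b$ annihilates $w=(1,-1,1)$, so the rows $R_1=(1,1,0)$, $R_2=R_1+w$, $R_3=R_1+R_2$, normalized by $10$, have all pairwise $b$-values zero; I verified the three $b$'s, the positivity of the margins and of the $\gamma$-denominators, and the dependence (e.g.\ $p_{13}=0$ while the product of margins is $0.04$). Both routes are valid: the paper's is a one-line symmetry argument and exhibits a more dramatic failure (perfect functional dependence with $\gamma^*=0$), while yours stays entirely in the discrete setting, is hand-checkable, and the reduction to the $b_{ik}$'s explains structurally why such examples are plentiful rather than accidental. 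One small addition you should make: the ``more precisely'' in the statement tacitly asserts that the forward half of (C) survives, and the paper's proof records it --- independence of $X$ and $Y$ implies independence of $X^{s_a}$ and $Y$ for every $s_a$, hence $\gamma^*(X,Y)=0$; add that one sentence.
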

    \begin{proposition}\label{prop:propercase2}
		Assume case 2 from Definition \ref{def:perfect_dependence}. Then, $\gamma^*(X,Y)$ is proper.
	\end{proposition}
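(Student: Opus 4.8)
The plan is to verify axioms (A)--(F) of Definition~\ref{def:proper} for $\delta^{*}=\gamma^{*}$, using throughout that $\gamma$ is a proper correlation coefficient \citep{pohle2024,pohle2024inference} --- so $-1\le\gamma\le1$, $\gamma=0$ under independence, and $\gamma=1$ exactly when the joint CDF is the Fréchet-Hoeffding upper bound --- together with Lemma~\ref{lem:perfectdependence} and the elementary fact that reversing the numbering of one argument interchanges concordant and discordant pairs, so that $\gamma(X^{\bar s_a},Y^{s_b})=-\gamma(X^{s_a},Y^{s_b})$ with $\bar s_a$ the order-reversal of $s_a$. Axiom~(A) holds because $\gamma$ is defined on each of the finitely many relabellings and a finite maximum exists; (F) because $\gamma$ is symmetric and the maximum runs over relabellings of both arguments; and (E) does not apply in case~2 since $Y$ is nominal. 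For (B), the reversal identity makes the set $\{\gamma(X^{s_a},Y^{s_b}):s_a\in S_a,\ s_b\in S_b\}$ symmetric about $0$, hence $0\le\gamma^{*}\le1$. For (D): if $X,Y$ are perfectly dependent then, by Definition~\ref{def:perfect_dependence} and Lemma~\ref{lem:perfectdependence}, some numbering pair makes $F_{X^{s_a},Y^{s_b}}$ the Fréchet-Hoeffding upper bound, hence $\gamma(X^{s_a},Y^{s_b})=1$ and $\gamma^{*}=1$; conversely $\gamma^{*}=1$ forces $\gamma(X^{s_a},Y^{s_b})=1$ for some pair, so that CDF is the upper bound and $X,Y$ are perfectly dependent. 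The ``if'' half of (C) is immediate: relabelling preserves independence, so $X\perp Y$ gives $\gamma(X^{s_a},Y^{s_b})=0$ for every pair and $\gamma^{*}=0$.

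The substance of the proof is the ``only if'' half of (C): $\gamma^{*}(X,Y)=0\Rightarrow X\perp Y$. Since the set above is symmetric about $0$, $\gamma^{*}=0$ forces $\gamma(X^{s_a},Y^{s_b})=0$ for \emph{every} $(s_a,s_b)\in S_a\times S_b$. Write $r_{ij}=\mathbb{P}(X=\omega'_i,\,Y=\omega''_j)$. Since every row and every column of $r$ is nonzero, a short case distinction produces two support cells in distinct rows and distinct columns, so the probability that two independent copies of $(X,Y)$ form a concordant or a discordant pair is positive; hence $\gamma$ is well defined, and since the difference of the concordance and discordance probabilities equals $2C(r)$ with
\begin{equation*}
C(r):=\sum_{i<k}\ \sum_{j<l}\bigl(r_{ij}r_{kl}-r_{il}r_{kj}\bigr),
\end{equation*}
we have $\gamma=0$ if and only if $C(r)=0$. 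Relabelling $X$ by $s_a$ and $Y$ by $s_b$ merely permutes the rows and columns of $r$, and every such permutation occurs; hence $C$ vanishes on \emph{all} row-and-column permutations of $r$.

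It remains to prove the combinatorial statement: if $C$ vanishes on all row/column permutations of a nonnegative matrix $r$ with strictly positive row and column sums, then every $2\times2$ minor of $r$ vanishes, so $r$ has rank~$1$, so $r_{ij}=p_{i\cdot}\,p_{\cdot j}$ and $X\perp Y$. The tool is the ``adjacent transposition'' identity: for an antisymmetric matrix $M$ and $\pi'=\pi\circ(t,t{+}1)$ one has $\sum_{i<k}M_{\pi(i)\pi(k)}-\sum_{i<k}M_{\pi'(i)\pi'(k)}=2M_{\pi(t)\pi(t+1)}$, every other summand cancelling in pairs. Applying this in the column index, with the rows held in an arbitrary order $\sigma$, turns ``$C=0$ for all column permutations'' into $\sum_{i<k}\bigl(r_{\sigma(i)p}r_{\sigma(k)q}-r_{\sigma(i)q}r_{\sigma(k)p}\bigr)=0$ for all $p\ne q$ and all $\sigma$; applying it once more in $\sigma$, for fixed $p,q$ (the matrix $(r_{ip}r_{kq}-r_{iq}r_{kp})_{i,k}$ being antisymmetric), yields $r_{ep}r_{fq}-r_{eq}r_{fp}=0$ for all $e\ne f$ and all $p\ne q$ --- exactly the vanishing of every $2\times2$ minor. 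I expect this combinatorial lemma to be the main obstacle: the transposition identity is the right device, but one must track carefully which summands change under a transposition to confirm the claimed pairwise cancellation, and one must rule out the degenerate configuration in which no pair of copies can be concordant or discordant. Everything else follows from the properness of $\gamma$; the extra freedom to permute $Y$ as well (absent in case~1) is precisely what rescues axiom~(C).
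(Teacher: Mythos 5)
Your proof is correct, and while the bookkeeping axioms (A), (B), (D), (F) are handled exactly as in the paper (reversal/complementary permutation for nonnegativity, the Fr\'echet--Hoeffding characterization of $\gamma=1$ for attainability), your argument for the converse of axiom (C) is genuinely different from the paper's. The paper proceeds probabilistically and incrementally: it first treats the binary case via the identity $\tau^{K}(\mathds{1}_{\{\omega'_1\}},\mathds{1}_{\{\omega''_1\}})=2(\mathbb{P}(X=\omega'_1,Y=\omega''_1)-\mathbb{P}(X=\omega'_1)\mathbb{P}(Y=\omega''_1))$, then for $b=2$ and general $a$ compares the concordance--discordance balance under a chosen numbering with the one obtained by complementarily permuting the $a-1$ largest $X$-values, subtracts the two equations to get uncorrelatedness (hence independence) of the events $\{X^{s_a}=1\}$ and $\{Y^{s_b}=2\}$, and finally iterates this $b$ times to recover cellwise factorization. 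You instead encode $\gamma=0$ as the vanishing of the concordance functional $C(q)$ on every row/column permutation $q$ of the joint pmf matrix and apply the adjacent-transposition cancellation identity for antisymmetric arrays twice (once in the columns, once in the rows) to conclude that every $2\times2$ minor vanishes, i.e.\ the matrix has rank one, whence $r_{ij}=p_{i\cdot}p_{\cdot j}$; the identity you state is elementary and checks out (all cross terms cancel in pairs, only the swapped pair survives), and your degeneracy concern is settled by the setup assumption that all marginal atoms have positive probability, which guarantees two support cells in distinct rows and columns so the denominator $1-\nu$ is positive (and permutation-invariant). Your route is more compact, delivers the product-form factorization in one stroke, and as a by-product shows that adjacent transpositions alone already force independence, whereas the paper's route stays closer to the probabilistic binary-comparison machinery it builds on elsewhere; either argument fully establishes the proposition.
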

	
	\subsection{Relationship to Undirected Dependence Measures for Real-Valued Random Variables}
	Even though $\gamma^*$ is a measure for nominal random variables, the idea behind its construction is less connected with existing measures for the nominal case than with existing measures for $\mathbb{R}^2$-valued random vectors that map to $[0,1]$. Basically, those measures sacrifice the directional interpretation for a stronger interpretation of the 0 (the measures are 0 if and only if $X$ and $Y$ are independent) and a wider concept of perfect dependence which allows those measures to detect U-shaped relationships, for example. One particularly similar coefficient has been introduced by \citet{hirschfeld1935maximal} and \citet{gebelein1941maximal} and is since called the maximal correlation. It is defined as
	\begin{align*}
		S_r(X,Y):=\sup_{f,g}r(f(X),g(Y)),
	\end{align*}
	where $r$ denotes the Pearson correlation and the supremum is taken over the set of all Borel-measurable functions $f,g:\mathbb{R}\to \mathbb{R}$. If we now consider 
	\begin{align*}
		S_{\gamma}(X,Y):=\sup_{f,g}\gamma(f(X),g(Y)),
	\end{align*}
	and restrict the set of functions over which we maximize to the set of all bijective functions, an interesting simplification emerges.
	\begin{lemma}\label{lem:maximalcorrelation}
    Assume $X$ and $Y$ are both discrete $\mathbb{R}$-valued random variables. It holds that 
		\begin{align*}
			S_{\gamma}(X,Y)=\max_{s_a\in S_a, s_b\in S_b}\gamma(X^{s_a}, Y^{s_b}),
		\end{align*}
		where $(X^{s_a}, Y^{s_b})$ is defined as in case 2 of Definition \ref{def:wermuth_coefficient}, albeit without the necessity to have a nominal interpretation for $X$ and $Y$.
	\end{lemma}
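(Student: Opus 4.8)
The plan is to show that, as $f$ and $g$ range over all (Borel) bijections of $\mathbb{R}$, the number $\gamma(f(X),g(Y))$ runs through exactly the finite set $\{\gamma(X^{s_a},Y^{s_b}):s_a\in S_a,\ s_b\in S_b\}$; the asserted identity, with the supremum attained as a maximum, then follows at once. I would write $x_1<\dots<x_a$ and $y_1<\dots<y_b$ for the finite supports of $X$ and $Y$ and $p_{ij}=\mathbb{P}(X=x_i,Y=y_j)$, and recall that $(X^{s_a},Y^{s_b})$ is the pair obtained by relabelling the value $x_i$ of $X$ to $s_a(i)$ and $y_j$ of $Y$ to $s_b(j)$, as in case~2 of Definition~\ref{def:wermuth_coefficient}.

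The only feature of $\gamma$ I would use is its representation for discrete ordinal data, $\gamma=(P_c-P_d)/(P_c+P_d)$, where $P_c$ and $P_d$ are the concordance and discordance probabilities of two independent copies of $(X,Y)$. Concordance of $((x_i,y_j),(x_k,y_l))$ is the event $\operatorname{sgn}(x_i-x_k)\operatorname{sgn}(y_j-y_l)=1$, discordance the event that this product equals $-1$; hence $P_c$, $P_d$ and $\gamma$ depend on the joint law only through the probabilities $p_{ij}$ and the sign pattern of the differences of support points. Two consequences I would record: (a) $\gamma$ is invariant under any transformation that is strictly increasing on the support of a coordinate; and (b) for a bijection $f$ one has $\{f(X)=f(x_i)\}=\{X=x_i\}$, so passing from $(X,Y)$ to $(f(X),g(Y))$ leaves every $p_{ij}$ unchanged and merely permutes the support points, whence $\gamma(f(X),g(Y))$ depends on $f,g$ only through the rankings they induce on $\{x_1,\dots,x_a\}$ and $\{y_1,\dots,y_b\}$. (In particular $P_c+P_d$ is the same for $(X,Y)$, $(f(X),g(Y))$ and $(X^{s_a},Y^{s_b})$, so $\gamma$ is well-defined for all three whenever it is for $(X,Y)$.)

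Next I would establish the two inclusions. For ``$\subseteq$'': given bijections $f,g$, let $\sigma\in S_a$ send $i$ to the rank of $f(x_i)$ among $f(x_1),\dots,f(x_a)$, and let $\tau\in S_b$ be defined analogously from $g$. Post-composing $f$ with the strictly increasing rank map on $\{f(x_1),\dots,f(x_a)\}$ (and $g$ with its counterpart) and invoking (a)--(b) gives $\gamma(f(X),g(Y))=\gamma(X^{\sigma},Y^{\tau})$, since $X^{\sigma}$ is by construction the variable assigning $\sigma(i)$ to $x_i$. For ``$\supseteq$'': given $s_a\in S_a$ and $s_b\in S_b$, I would fix a Borel bijection $f:\mathbb{R}\to\mathbb{R}$ with $f(x_i)=s_a(i)$ for every $i$ — such $f$ exists, e.g.\ by prescribing these finitely many distinct values and then matching the complementary intervals of $\mathbb{R}\setminus\{x_1,\dots,x_a\}$ and $\mathbb{R}\setminus\{s_a(1),\dots,s_a(a)\}$ (equal in number, $a-1$ bounded and $2$ unbounded) by affine maps — and similarly a Borel bijection $g$ with $g(y_j)=s_b(j)$; the previous step then yields $\gamma(f(X),g(Y))=\gamma(X^{s_a},Y^{s_b})$.

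Combining the inclusions, $\{\gamma(f(X),g(Y)):f,g\text{ bijective}\}$ equals the finite set $\{\gamma(X^{s_a},Y^{s_b}):(s_a,s_b)\in S_a\times S_b\}$, whose supremum is a maximum; this is exactly $S_\gamma(X,Y)=\max_{s_a\in S_a,s_b\in S_b}\gamma(X^{s_a},Y^{s_b})$. The step deserving the most care — the likely main obstacle when writing it out — is the reduction in ``$\subseteq$'': one must verify that the joint law of $(f(X),g(Y))$ is literally the law obtained from $(X,Y)$ by relabelling support points according to the induced permutations, so that $\gamma$ genuinely cannot distinguish $(f(X),g(Y))$ from $(X^{\sigma},Y^{\tau})$. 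Everything else is bookkeeping, and the standing assumption that $X,Y$ are discrete with finite support (so that $S_a$, $S_b$ are finite) is what makes the ``$\max$'' meaningful.
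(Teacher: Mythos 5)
Your proposal is correct and follows essentially the same route as the paper's proof: since $\gamma$ is rank-based, $\gamma(f(X),g(Y))$ depends on the bijections $f,g$ only through the orderings they induce on the (finite) supports, so the supremum over bijections collapses to a maximum over the finitely many pairs of permutations. You merely spell out two steps the paper leaves implicit --- the verification that the joint law of $(f(X),g(Y))$ is the relabelled law $(X^{\sigma},Y^{\tau})$, and the explicit construction of a Borel bijection realizing each permutation (needed for the reverse inclusion) --- which is a welcome but not substantively different elaboration.
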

	Intuitively, the rank-based nature of $\gamma$ makes it sufficient to consider all possible orderings of the values that $X$ and $Y$ can take. In the case of two discrete distributions, the innovation of our proposal is now to consider nominal random variables together with functions $f:\Omega'\to \mathbb{R}$ (and $g: \Omega''\to \mathbb{R}$)\footnote{\citet{holzmann2024lancaster} already mention the ad hoc extension of the maximal correlation to $\mathcal{X}$- and $\mathcal{Y}$-valued random variables $X$ and $Y$, where $\mathcal{X}$ and $\mathcal{Y}$ are arbitrary measurable spaces.} and maximize $\gamma$, a correlation coefficient with superior properties relative to Pearson correlation \citep{pohle2024}.
	
	\section{Statistical Inference}
	\label{sec:statistical_inference}
	The population coefficient introduced in the previous section is only of practical use if it is possible to construct a consistent estimator for it. Additionally, knowledge about the asymptotic distribution could allow us to construct asymptotically valid confidence intervals as well as an independence test based on the coefficient. In the following, $\overset{d}{\to}$ denotes convergence in distribution and $\overset{p}{\to}$ denotes convergence in probability.
	\subsection{Estimator}
	Our estimator simply replaces the theoretical distribution $\mathcal{L}_{X,Y}$ implicitly used in Definition \ref{def:wermuth_coefficient} with the empirical distribution $$\widehat{\mathcal{L}}_{X,Y}:=\frac{1}{n}\sum_{i=1}^{n}\delta_{X_i,Y_i}.$$
    We make the following sampling assumption:
    \begin{assumption}\label{ass:iid}
        $\{X_i,Y_i\}_{i=1}^n$ is a bivariate i.i.d.\ sample.
    \end{assumption}
	\begin{definition}[Estimator]\label{def:estim}
		Consider again the two cases from Definition \ref{def:perfect_dependence}:
		\begin{enumerate}
			\item Only $X$ (w.l.o.g.) has a nominal interpretation. 
			The empirical marginal distribution $\widehat{\mathcal{L}}_X:=1/n\sum_{i=1}^{n}\delta_{X_i}$ has almost surely only a finite number of values with positive probability (relative frequency). We denote those values by $x_1, ..., x_k$, where it holds that $\{x_1, ..., x_k\}\subseteq (\omega_1', ..., \omega_a')$. Again, we redefine $x_1:=s_{k}(1), ..., x_k:=s_{k}(k)$ according to a permutation $s_k\in S_k$ and obtain a sample of $(X^{s_k},Y)$ with an artificially created random variable $X^{s_k}$. The maximum of the set of all $k!$ possible coefficients is our estimate:
			\begin{align*}
				\widehat{\gamma}^*_n(X,Y)=\max_{s_k \in S_k}\widehat{\gamma}_n(X^{s_k},Y).
			\end{align*}
			\item $X$ and $Y$ have a nominal interpretation. 
			The empirical marginal distributions $\widehat{\mathcal{L}}_X:=1/n\sum_{i=1}^{n}\delta_{X_i}$ and $\widehat{\mathcal{L}}_Y:=1/n\sum_{i=1}^{n}\delta_{Y_i}$ have almost surely only a finite number of values with positive probability (relative frequency). We denote those values by $x_1, ..., x_k$ and $y_1, ..., y_l$, respectively. It holds that $\{x_1, ..., x_k\}\subseteq (\omega_1', ..., \omega_a')$ and $\{y_1, ..., y_l\}\subseteq (\omega_1'', ..., \omega_b'')$. Again, we redefine $x_1:=s_{k}(1), ..., x_k:=s_{k}(k)$ and $y_1:=s_{l}(1), ..., y_l:=s_{l}(l)$ according to permutations $s_k\in S_k$ and $s_l\in S_l$ and obtain a sample of $(X^{s_k},Y^{s_l})$ with artificially created random variables $X^{s_k}$ and $Y^{s_l}$. The maximum of the set of all $k!\cdot l!$ possible coefficients\footnote{The estimation of this coefficient can be computationally challenging since the factorial increases very fast. For $k,l\le 7$, this is usually of no concern. However, the argument of computational feasibility is also the reason why we do not consider Bootstrap inference.\label{foot:computationalintensive}} is our estimate:
			\begin{align*}
				\widehat{\gamma}^*_n(X,Y)=\max_{s_k \in S_k, s_l \in S_l}\widehat{\gamma}_n(X^{s_k},Y^{s_l}).
			\end{align*}
		\end{enumerate}    
	\end{definition}
	
	\begin{proposition}\label{prop:gamma*_consistency}
		Under Assumption \ref{ass:iid}, $\widehat{\gamma}^*_n \overset{p}{\to} \gamma^*$.
	\end{proposition}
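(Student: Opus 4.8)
The plan is to reduce the claim to two ingredients: the consistency of the plug-in estimator of a single Goodman--Kruskal $\gamma$, and the fact that the empirical support of the nominal component(s) coincides with the true support outside an event of vanishing probability; the maximum over the finite, fixed family of recodings is then pushed through by the continuous mapping theorem. I will carry out the argument for case~1 of Definition \ref{def:estim}; case~2 is identical with an extra index $s_l \in S_l$ for the permutations of $\Omega''$.

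First I would show that the empirical support is eventually full. Set $p_i := \p(X = \omega_i') > 0$. Under Assumption \ref{ass:iid} the value $\omega_i'$ is missing from the sample with probability $(1-p_i)^n$, so a union bound gives $\p(k < a) \le \sum_{i=1}^{a}(1-p_i)^n \to 0$. On the event $\{k = a\}$ one has $S_k = S_a$ and $\widehat{\gamma}^*_n(X,Y) = \max_{s_a \in S_a}\widehat{\gamma}_n(X^{s_a},Y)$, i.e.\ the estimator maximizes over exactly the index set that defines $\gamma^*$.

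Next, for a fixed recoding $s_a \in S_a$, I would prove $\widehat{\gamma}_n(X^{s_a},Y) \overset{p}{\to} \gamma(X^{s_a},Y)$. Writing $P_c, P_d$ for the concordance and discordance probabilities of an independent pair of draws, $\gamma(X^{s_a},Y) = (P_c - P_d)/(P_c + P_d)$ and $\widehat{\gamma}_n(X^{s_a},Y) = (\widehat{P}_c - \widehat{P}_d)/(\widehat{P}_c + \widehat{P}_d)$, where $\widehat{P}_c, \widehat{P}_d$ are the proportions of concordant, resp.\ discordant, pairs --- bounded-kernel U-statistics of degree two, whose variances are $O(1/n)$, so $\widehat{P}_c \overset{p}{\to} P_c$ and $\widehat{P}_d \overset{p}{\to} P_d$. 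The key point is that $P_c + P_d = \p(X^{s_a}_1 \ne X^{s_a}_2,\, Y_1 \ne Y_2) > 0$: were it zero, then for any two values $x \ne x'$ of $X$ with positive probability the independent, a.s.\ equal variables $Y \mid X = x$ and $Y \mid X = x'$ would force $\mathcal{L}_{Y \mid X = x} = \mathcal{L}_{Y \mid X = x'}$ to be a common point mass, and ranging over all value pairs of $X$ (which has at least three atoms) would make $Y$ deterministic, a contradiction. Hence the denominator stays bounded away from $0$ in the limit and the continuous mapping theorem yields the claimed convergence; this is exactly the consistency of Goodman--Kruskal's $\gamma$ established in \citet{pohle2024inference}, to which one may also simply appeal.

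Finally, since $(t_{s_a})_{s_a \in S_a} \mapsto \max_{s_a \in S_a} t_{s_a}$ is continuous, the previous step together with the continuous mapping theorem gives $\max_{s_a \in S_a}\widehat{\gamma}_n(X^{s_a},Y) \overset{p}{\to} \max_{s_a \in S_a}\gamma(X^{s_a},Y) = \gamma^*(X,Y)$, and for every $\varepsilon > 0$,
\[
\p\big(|\widehat{\gamma}^*_n(X,Y) - \gamma^*(X,Y)| > \varepsilon\big) \le \p(k < a) + \p\Big(\big|\max_{s_a \in S_a}\widehat{\gamma}_n(X^{s_a},Y) - \gamma^*(X,Y)\big| > \varepsilon\Big),
\]
where on $\{k < a\}$ the (bounded) estimator makes this bound trivially valid; both terms vanish by the first and last steps, proving $\widehat{\gamma}^*_n \overset{p}{\to} \gamma^*$. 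I expect the only genuine obstacle to be the middle step --- establishing consistency of the plug-in $\gamma$ and, within it, that $P_c + P_d > 0$, which hinges on the non-degeneracy of $X$ and $Y$; the support bookkeeping and passing a finite maximum through the continuous mapping theorem are routine.
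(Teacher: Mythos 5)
Your proof is correct and takes essentially the same route as the paper's: show the empirical support is eventually full (so the maximum runs over the right index set), invoke consistency of $\widehat{\gamma}_n$ for each fixed permutation (which the paper simply cites from \citet{pohle2024inference}, combining the finitely many coordinates into vector convergence), and push the finite maximum through the continuous mapping theorem. The only differences are cosmetic: the paper handles the support issue via an almost-sure "for $n\ge N$ all categories appear" argument rather than your union bound, and you spell out the U-statistic/denominator-positivity details that the paper delegates to the cited reference.
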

    Note that however consistent, the estimator will in general be biased in finite samples (see Figure \ref{fig:bias} for a simulation study which shows small mean biases for moderate sample sizes). Due to the small size of the biases and the strong limits that are placed upon bias correction techniques by \citet{hirano2012impossibility}, I refrain from proposing alternative estimators.\footnote{Also, the results by \citet{chernozhukov2013intersection} constitute no remedy for this problem since half median unbiasedness is of limited value for a point estimate.}
	
	\subsection{Asymptotic Distribution}
	Imagine the true (or one of potentially many true) population numbering(s) was known to the researcher. Then, an alternative estimator $\widehat{\gamma}(X^{s_k^{(*)}},Y)$ or $\widehat{\gamma}(X^{s_k^{(*)}},Y^{s_l^{(*)}})$ could be considered in which $s_k^{(*)}$ and $s_l^{(*)}$ are fixed. In this case, the standard asymptotic distribution of $\widehat{\gamma}$ could be used for statistical inference \citep{pohle2024inference}. Unfortunately, this is in reality never the case and we have no choice but revert to the estimator introduced in Definition \ref{def:estim}. Since the maximum function in this estimator always chooses one specific numbering and can (also asymptotically) oscillate between several of those, the asymptotic distribution is non-standard. In order to calculate it, we first need the asymptotic distribution of the vector of inputs at the left hand side of (\ref{eq:estim}). Since the length of this vector depends on the number of values that $X$ (and $Y$) can take, this paper can only provide a strategy to compute the asymptotic variance and no closed-form solution. 
	\begin{proposition}\label{prop:jointasymptotics}
		Consider the two cases of Definition \ref{def:estim}.\footnote{We define the estimator with respect to the number of values with positive probability mass in the theoretical distribution of $X$ (and $Y$) and not with respect to the number of distinct values that occur in the sample. However infeasible in practice, this is convenient for this proposition because there almost surely exists a sample size $N\in \mathbb{N}$ for which $k=a$ and $l=b$ holds. Thus, until the sample size $N$ is reached, our vector of possible estimates grows which is difficult to capture notationally. Since this behavior is irrelevant asymptotically, we directly start with the vector of estimates that would prevail for sample sizes $n\ge N$.} Furthermore, assume that $(X,Y)$ are not perfectly dependent in the sense of Definition \ref{def:perfect_dependence}. Under Assumption \ref{ass:iid}, it holds that
		\begin{enumerate}
			\item We have    
			\begin{align}\label{eq:jointasymptotics1}
            \sqrt{n}\left(
				\begin{pmatrix}
					\vdots\\
					\widehat{\gamma}_n(X^{s_a},Y)\\
					\vdots
				\end{pmatrix}-
				\begin{pmatrix}
					\vdots\\
					\gamma(X^{s_a},Y)\\
					\vdots
				\end{pmatrix}\right)\overset{d}{\rightarrow}(U^{(1)}, ..., U^{(a!)})\sim\mathcal{N}_{a!}(0,\Sigma^1)
			\end{align}
			\item We have    
			\begin{align}\label{eq:jointasymptotics2}
            \sqrt{n}\left(
				\begin{pmatrix}
					\vdots\\
					\widehat{\gamma}_n(X^{s_a},Y^{s_b})\\
					\vdots
				\end{pmatrix}-
				\begin{pmatrix}
					\vdots\\
					\gamma(X^{s_a},Y^{s_b})\\
					\vdots
				\end{pmatrix}\right)\overset{d}{\rightarrow}(V^{(1)}, ..., V^{(a!\cdot b!)})\sim\mathcal{N}_{a!\cdot b!}(0,\Sigma^2).
			\end{align}
			For the asymptotic covariance matrices we have that $\Sigma^1=A^1\Sigma_U^1A^{1\top}$ with an  $a!\times 2\cdot a!$ matrix $A^1$ arising from the Delta Method and a $2\cdot a!\times 2\cdot a!$ matrix $\Sigma_U^1$ which arises from a multivariate CLT for U-statistics. In case 2, we have the same structure with $\Sigma^2=A^2\Sigma_U^2A^{2\top}$ and an $a!\cdot b!\times 2\cdot a!\cdot b!$-dimensional matrix $A^2$ as well as a $2 \cdot a!\cdot b!\times 2\cdot a!\cdot b!$-dimensional matrix $\Sigma_U^2$.
		\end{enumerate}
	\end{proposition}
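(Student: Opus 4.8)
The plan is to write each coordinate $\widehat{\gamma}_n(X^{s_a},Y)$ as a smooth function of a pair of U-statistics, stack these pairs over all numberings into one vector-valued U-statistic, invoke the multivariate central limit theorem for U-statistics, and finish with the multivariate Delta method. Recall first that, for a fixed numbering $s_a$, Goodman-Kruskal's sample coefficient is the exact ratio $\widehat{\gamma}_n(X^{s_a},Y)=(\widehat{P}^{s_a}_c-\widehat{P}^{s_a}_d)/(\widehat{P}^{s_a}_c+\widehat{P}^{s_a}_d)$, where $\widehat{P}^{s_a}_c$ and $\widehat{P}^{s_a}_d$ are the degree-two U-statistics generated by the bounded symmetric kernels $h^{s_a}_c((x_1,y_1),(x_2,y_2))=\mathbf{1}\{(x_1^{s_a}-x_2^{s_a})(y_1-y_2)>0\}$ and $h^{s_a}_d((x_1,y_1),(x_2,y_2))=\mathbf{1}\{(x_1^{s_a}-x_2^{s_a})(y_1-y_2)<0\}$, with $x^{s_a}$ the numerical label that $s_a$ attaches to category $x$ (the $\binom{n}{2}^{-1}$ factors cancel). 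Their expectations $P^{s_a}_c,P^{s_a}_d$ give $\gamma(X^{s_a},Y)=(P^{s_a}_c-P^{s_a}_d)/(P^{s_a}_c+P^{s_a}_d)$. Since $X$ has at least three atoms of positive probability and $Y$ is non-deterministic, $P^{s_a}_c+P^{s_a}_d=\mathbb{P}(X_1\neq X_2,\,Y_1\neq Y_2)>0$ for every $s_a$ (otherwise $Y$ would be almost surely constant), and the assumption that $(X,Y)$ is not perfectly dependent together with Lemma~\ref{lem:perfectdependence} excludes $\gamma(X^{s_a},Y)\in\{-1,1\}$; hence every $(P^{s_a}_c,P^{s_a}_d)$ lies in the open region on which $(p_c,p_d)\mapsto(p_c-p_d)/(p_c+p_d)$ is smooth, and $\widehat{\gamma}_n(X^{s_a},Y)$ is well defined with probability tending to one. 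In case~2 the same holds for each pair $(s_a,s_b)$, using the analogous kernels $h^{s_a,s_b}_c,h^{s_a,s_b}_d$ in which the categories of $Y$ are relabelled by $s_b$ as well.

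Next I would collect the $2\cdot a!$ scalar kernels (in case~1; the $2\cdot a!\cdot b!$ kernels in case~2) into one bounded symmetric $\mathbb{R}^{2a!}$-valued kernel $\mathbf{h}$ of degree two, whose U-statistic is the vector $(\widehat{P}^{s_a}_c,\widehat{P}^{s_a}_d)_{s_a\in S_a}$. The convention in the footnote to the proposition — defining the estimator through the atoms of the theoretical law of $(X,Y)$, not through the realised sample — makes this vector have the fixed length $a!$ (respectively $a!\cdot b!$) for all $n\ge N$, so no growing-dimension issue arises. By the Hoeffding decomposition, under the i.i.d.\ sampling assumption $\sqrt{n}$ times the centred vector equals $2n^{-1/2}\sum_{i=1}^{n}(\mathbf{h}_1(X_i,Y_i)-\mathbb{E}\,\mathbf{h}_1)+o_p(1)$, where $\mathbf{h}_1(z):=\mathbb{E}[\mathbf{h}((X_1,Y_1),(X_2,Y_2))\mid (X_1,Y_1)=z]$ is the first Hájek projection; the ordinary multivariate CLT then yields convergence to $\mathcal{N}_{2a!}(0,\Sigma^1_U)$ with $\Sigma^1_U=4\,\mathrm{Cov}(\mathbf{h}_1(X_1,Y_1))$, which is the matrix $\Sigma^1_U$ (respectively $\Sigma^2_U$) of the statement, obtained entrywise from second moments of the projected kernels. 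No non-degeneracy of $\Sigma^1_U$ is needed (indeed the resulting $\Sigma^1$ is always singular, because reversing a numbering negates $\gamma$, so the coordinates occur in exact $\pm$ pairs), and a possibly-degenerate multivariate normal is still of the asserted form; the argument also covers the independence case.

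Finally I would apply the map $\Phi:\mathbb{R}^{2a!}\to\mathbb{R}^{a!}$ given coordinatewise by $\Phi((p^{s_a}_c,p^{s_a}_d)_{s_a})=((p^{s_a}_c-p^{s_a}_d)/(p^{s_a}_c+p^{s_a}_d))_{s_a}$, which by the first step is differentiable at the true value with a block-structured Jacobian $A^1$ of size $a!\times 2a!$ whose row indexed by $s_a$ has as its only nonzero entries $2P^{s_a}_d/(P^{s_a}_c+P^{s_a}_d)^2$ and $-2P^{s_a}_c/(P^{s_a}_c+P^{s_a}_d)^2$, in the two columns indexed by $(s_a,c)$ and $(s_a,d)$. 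The multivariate Delta method combined with the limit obtained above then gives (\ref{eq:jointasymptotics1}) with $\Sigma^1=A^1\Sigma^1_U A^{1\top}$, and case~2 follows verbatim with $S_a$ replaced throughout by $S_a\times S_b$, yielding (\ref{eq:jointasymptotics2}) and $\Sigma^2=A^2\Sigma^2_U A^{2\top}$ with the stated dimensions.

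The conceptual ingredients are standard; the real work is the simultaneous bookkeeping of $\Sigma^1_U$ and $A^1$ over all $a!$ (or $a!\cdot b!$) numberings and, above all, checking uniformly over numberings both that the denominators $P^{s_a}_c+P^{s_a}_d$ stay bounded away from zero and that no $\gamma(X^{s_a},Y)$ lies at the boundary $\pm 1$ — this is exactly where the non-perfect-dependence hypothesis and Lemma~\ref{lem:perfectdependence} enter. The per-numbering building blocks — boundedness and symmetry of the kernels, and the scalar asymptotics of a single $\widehat{\gamma}_n(X^{s_a},Y)$ — are already available from \citet{pohle2024inference}, so this proposition in effect only adds the joint covariance structure, which the common-projection argument above supplies directly.
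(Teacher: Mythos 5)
Your proposal is correct and follows essentially the same route as the paper: stack one pair of degree-two U-statistics per numbering, apply the multivariate CLT for U-statistics via the H\'ajek projection, and finish with the multivariate Delta method, exactly as in the paper's proof (which cites the multivariate U-statistic CLT of \citet{pohle2024inference}). The only cosmetic difference is the parametrization --- you use $(P_c^{s_a},P_d^{s_a})$ with $\gamma=(P_c-P_d)/(P_c+P_d)$, while the paper uses $(\tau,\nu)$ with $\gamma=\tau/(1-\nu)$; these are linearly equivalent coordinates giving the same $\Sigma$, and your extra worry about $\gamma(X^{s_a},Y)=\pm 1$ is unnecessary since differentiability only requires $P_c+P_d>0$.
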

	Given Proposition \ref{prop:jointasymptotics} (and the precise structure of the involved matrices, which is given in the proof), we can analyze the asymptotic distribution of our estimator. Since the maximum function is not differentiable, we cannot use the standard Delta Method to obtain our result.\footnote{However, a generalized Delta Method by \citet{fang2019inference} is applicable and constitutes an alternative way to obtain the result.} Instead, the asymptotic distribution is non-standard and contains several case distinctions. In case our population exhibits perfect dependence in the sense of Definition \ref{def:perfect_dependence}, the estimator is almost surely equal to 1 (if the sample is large enough such that there is variation in either of the two observed variables). Therefore, we exclude this case in the subsequent proposition.

	\begin{proposition}[Asymptotic Distribution]\label{prop:asymptotics}
		Assume (\ref{eq:jointasymptotics1}) or (\ref{eq:jointasymptotics2}).
		\begin{enumerate}
			\item There exists a unique $\gamma$-maximizing population numbering $s_a^{(*)}$ $(s_b^{(*)})$. In this case, $\widehat{\gamma}^*_n$ inherits the standard asymptotic distribution of $\widehat{\gamma}_n$ \citep{pohle2024inference}. It holds that $$\sqrt{n} \left( \widehat{\gamma}^*_n -\gamma^* \right) \stackrel{d}{\rightarrow} \mathcal{N}(0, \sigma_{\gamma^*}^2) \text{ with }  \sigma_{\gamma^*}^2 =  \frac{1}{(1-\nu^{(*)})^2} \left( \sigma_{\tau^{(*)}}^2 + \gamma^{*2} \sigma_{\nu^{(*)}}^2 + 2 \gamma^* \sigma_{\tau^{(*)} \nu^{(*)}} \right), $$
			where $^{(*)}$ denotes that the respective objects are computed with respect to the numbering $s_a^{(*)}$ $(s_b^{(*)})$. The covariances are in case 1 (see Definition \ref{def:estim}) defined as
			\begin{align*}\sigma_{v^{(*)}w^{(*)}} =&\  4  \E \left[ k_{1}^{(v^{(*)})} (X^{s_a^{(*)}},Y)\, k_{1}^{(w^{(*)})} (X^{s_a^{(*)}},Y) \right] \text{ with } v,w \in \{\tau, \nu\} \text{ and } \sigma^2_{v^{(*)}} := \sigma_{v^{(*)}v^{(*)}}
			\end{align*}
			and 
			\begin{align*}
				k_{1}^{(\nu^{(*)})} (x,y) 
				= \p\Big(X^{s_a^{(*)}}=x\Big) + \p\Big(Y=y\Big) - \p\Big(X^{s_a^{(*)}}=x,Y=y\Big) - \nu^{(*)},\\
				k^{(\tau^{(*)})}_1 (x,y)= 4\, G_{X^{s_a^{(*)}},Y} (x,y)  - 2 \Big(G_{X^{s_a^{(*)}}}(x) + G_Y (y)\Big) + 1 - \tau^{(*)}.
			\end{align*}
            In case 2, we have
            \begin{align*}\sigma_{v^{(*)}w^{(*)}} =&\  4  \E \left[ k_{1}^{(v^{(*)})} (X^{s_a^{(*)}},Y^{s_b^{(*)}})\, k_{1}^{(w^{(*)})} (X^{s_a^{(*)}},Y^{s_b^{(*)}}) \right]
			\end{align*}
			and 
			\begin{align*}
				k_{1}^{(\nu^{(*)})} (x,y) 
				= \p\Big(X^{s_a^{(*)}}=x\Big) + \p\Big(Y^{s_b^{(*)}}=y\Big) - \p\Big(X^{s_a^{(*)}}=x,Y^{s_b^{(*)}}=y\Big) - \nu^{(*)},\\
				k^{(\tau^{(*)})}_1 (x,y)= 4\, G_{X^{s_a^{(*)}},Y^{s_b^{(*)}}} (x,y)  - 2 \Big(G_{X^{s_a^{(*)}}}(x) + G_{Y^{s_b^{(*)}}}(y)\Big) + 1 - \tau^{(*)}.
			\end{align*}
			\item There exist $d$ population numberings $s_a^{(*^1)}, ..., s_b^{(*^d)}$ or $(s_a^{(*^1)}, s_b^{(*^1)}), ..., (s_a^{(*^d)}, s_b^{(*^d)})$ $(d\in \mathbb{N}_{\ge 2})$ with $*^1, ..., *^d \in \{1, ..., a!\}$ or $*^1, ..., *^d \in \{1, ..., a!\cdot b!\}$ that are solutions to the maximization problem in Definition \ref{def:wermuth_coefficient}. In case 1, it holds that $$\sqrt{n} \left( \widehat{\gamma}^*_n -\gamma^* \right) \stackrel{d}{\rightarrow} \max\{U^{(*^1)}, ..., U^{(*^d)}\}$$
			with $U^{(*^1)}, ..., U^{(*^d)}$ defined as in Proposition \ref{prop:jointasymptotics}. Similarly in case 2, we have $$\sqrt{n} \left( \widehat{\gamma}^*_n -\gamma^* \right) \stackrel{d}{\rightarrow} \max\{V^{(*^1)}, ..., V^{(*^d)}\}$$
			with $V^{(*^1)}, ..., V^{(*^d)}$ defined as in Proposition \ref{prop:jointasymptotics}.
		\end{enumerate}
	\end{proposition}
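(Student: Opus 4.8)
The plan is to localise the maximisation defining $\widehat{\gamma}^*_n$ to the finitely many population-optimal numberings and then push the joint limit law of Proposition \ref{prop:jointasymptotics} through a continuous map. Throughout I write $\gamma^{(s)}$ for the population value $\gamma(X^{s_a},Y)$ (case 1) or $\gamma(X^{s_a},Y^{s_b})$ (case 2), $\widehat{\gamma}^{(s)}_n$ for the corresponding estimator, and $\mathcal{S}^*:=\{s:\gamma^{(s)}=\gamma^*\}$ for the set of maximising numberings, so $|\mathcal{S}^*|=1$ in part 1 and $|\mathcal{S}^*|=d\ge 2$ in part 2. First I would note that (\ref{eq:jointasymptotics1}) or (\ref{eq:jointasymptotics2}) in particular implies $\widehat{\gamma}^{(s)}_n\overset{p}{\to}\gamma^{(s)}$ for each of the $a!$ (resp.\ $a!\cdot b!$) numberings; since there are only finitely many, a union bound gives, with probability tending to one, that $\widehat{\gamma}^{(s)}_n<\gamma^*-\tfrac12\eta$ for every sub-optimal $s\notin\mathcal{S}^*$ and $\widehat{\gamma}^{(s)}_n>\gamma^*-\tfrac12\eta$ for some $s\in\mathcal{S}^*$, where $\eta:=\min_{s\notin\mathcal{S}^*}(\gamma^*-\gamma^{(s)})>0$. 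Hence on an event whose probability tends to one we have $\widehat{\gamma}^*_n=\max_{s\in\mathcal{S}^*}\widehat{\gamma}^{(s)}_n$, and because $\gamma^{(s)}=\gamma^*$ on $\mathcal{S}^*$ this yields $\sqrt{n}(\widehat{\gamma}^*_n-\gamma^*)=\max_{s\in\mathcal{S}^*}\sqrt{n}(\widehat{\gamma}^{(s)}_n-\gamma^{(s)})$ up to an event of vanishing probability.

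For part 1, $\mathcal{S}^*=\{s_a^{(*)}\}$ (resp.\ $\{(s_a^{(*)},s_b^{(*)})\}$), so on the high-probability event $\widehat{\gamma}^*_n$ is identical to Goodman--Kruskal's $\widehat{\gamma}_n$ computed for the single fixed numbering, and it therefore inherits the latter's limit law. I would then just invoke that law: writing $\widehat{\gamma}_n=\widehat{\tau}_n/(1-\widehat{\nu}_n)$ as a smooth function of the U-statistics estimating $\tau^{(*)}=\Pi_c-\Pi_d$ and $\nu^{(*)}=1-(\Pi_c+\Pi_d)$, a multivariate CLT for U-statistics (with first-order H\'ajek projections $k_1^{(\tau^{(*)})}$ and $k_1^{(\nu^{(*)})}$) followed by the Delta method for $(\tau,\nu)\mapsto\tau/(1-\nu)$ gives $\mathcal{N}(0,\sigma^2_{\gamma^*})$ with the stated $\sigma^2_{\gamma^*}$; see \citet{pohle2024inference}. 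The remaining bookkeeping is to confirm that this variance agrees with the $s_a^{(*)}$-diagonal entry of $\Sigma^1$ (resp.\ $\Sigma^2$) from Proposition \ref{prop:jointasymptotics}, which it does by the way those matrices are built.

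For part 2, the sub-vector $\big(\sqrt{n}(\widehat{\gamma}^{(s)}_n-\gamma^{(s)})\big)_{s\in\mathcal{S}^*}$ is a coordinate projection of the vector in (\ref{eq:jointasymptotics1})/(\ref{eq:jointasymptotics2}) and hence converges in distribution to $(U^{(*^1)},\dots,U^{(*^d)})$ (resp.\ $(V^{(*^1)},\dots,V^{(*^d)})$), the Gaussian coordinates indexed by the maximising numberings. Since $\max:\mathbb{R}^d\to\mathbb{R}$ is continuous, the continuous mapping theorem gives $\max_{s\in\mathcal{S}^*}\sqrt{n}(\widehat{\gamma}^{(s)}_n-\gamma^{(s)})\overset{d}{\to}\max\{U^{(*^1)},\dots,U^{(*^d)}\}$, and combining this with the localisation identity above (the two quantities differ only on an event of vanishing probability) finishes the proof. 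Note that for $d\ge 2$ the maximum of a non-degenerate Gaussian vector is not Gaussian, which is why the limit here is non-standard; as an alternative one could reach the same conclusion directly via the generalised Delta method of \citet{fang2019inference} applied to the directionally differentiable map $\max$.

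The one genuinely delicate point is the localisation in the first paragraph: it is not enough that each $\widehat{\gamma}^{(s)}_n$ is individually consistent, one needs the sub-optimal numberings to be negligible for the maximum simultaneously. Because the index set has only $a!$ (or $a!\cdot b!$) elements, this is just a finite union of consistency statements and is therefore routine — all the substantive analytic content (the joint central limit theorem and the structure of $\Sigma^1,\Sigma^2$) already sits in Proposition \ref{prop:jointasymptotics}, so Proposition \ref{prop:asymptotics} is essentially a corollary of it together with a continuous-mapping argument.
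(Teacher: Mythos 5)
Your proposal is correct and follows essentially the same route as the paper's proof: localise the maximisation to the population-optimal numberings (eventually the sub-optimal ones are irrelevant), identify $\widehat{\gamma}^*_n$ with the fixed-numbering estimator in the unique case, and apply the continuous mapping theorem to the maximum of the relevant coordinates of the joint limit in the non-unique case. Your with-probability-tending-to-one localisation via a union bound over the finitely many numberings is in fact a slightly more careful rendering of the paper's brief ``for $n\ge N$ the estimator almost surely chooses an optimal numbering'' argument, but it is the same idea, not a different proof.
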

 \begin{rem}\label{rem:casedistinctions}
     The number of case distinctions implicit in Proposition \ref{prop:asymptotics} is very large. It amounts to
     \begin{align*}
         \sum_{d=1}^{a!}\binom{a!}{d} \quad \text{or} \quad \sum_{d=1}^{a!\cdot b!}\binom{a!\cdot b!}{d}
     \end{align*}
     in the respective cases of one or two nominal random variables.
 \end{rem}
 \subsection{Confidence Intervals}
 Constructing confidence intervals for the new correlation coefficient is complicated due to the non-standard asymptotic distribution described in Proposition \ref{prop:asymptotics}. Although a complicated distribution could still be exploited (see, e.g., the confidence intervals for Cole's C \citep{cole1949} in \citet{pohle2024measuring} constructed by an inverted testing procedure), the problem is that the researcher does not know which of the many possible asymptotic distributions prevails in her particular setting. 
 
 Facing a similar problem, \citet{holzmann2024lancaster} propose several methods to construct confidence intervals using standard procedures or bootstrap. For some of those methods, they also take into account the case distinctions they have in their asymptotic distribution. I refrain from such an involved approach for the following reasons. Firstly, I do not consider bootstrap because due to the rapidly increasing nature of the factorial, estimating the coefficient is already computationally intensive enough (see also footnote \ref{foot:computationalintensive}). Secondly, the number of case distinctions in our asymptotic distribution is so large that I do not see any promising path to combine them in an attempt to construct (possibly conservative) confidence intervals (see Remark \ref{rem:casedistinctions}). Instead, we make a simplifying assumption.
 \begin{assumption}\label{ass:simplifying}
     There exists a unique $\gamma$-maximizing population numbering $s_a^{(*)}$ $(s_b^{(*)})$.
 \end{assumption}
 With this assumption (and the assumption that $(X,Y)$ are not perfectly dependent), we are in the well-behaved first case of Proposition \ref{prop:asymptotics} and can construct our confidence intervals with asymptotic coverage probability $1-\alpha$ as
 \begin{align*}
     [\max\{\widehat{\gamma}^*_n-z_{1-\alpha/2}\widehat{\sigma}_{\gamma^*}/\sqrt{n},0\}, \min\{\widehat{\gamma}^*_n+z_{1-\alpha/2}\widehat{\sigma}_{\gamma^*}/\sqrt{n},1\}],
 \end{align*}
 where $z_{1-\alpha/2}=\Phi^{-1}(1-\alpha/2)$ and $\widehat{\sigma}_{\gamma^*}$ is equal to $\widehat{\sigma}_{\gamma}$ as defined in \citet{pohle2024inference} relative to the numbering that $\widehat{\gamma}^*_n$ has chosen. 
 \begin{lemma}\label{lem:variance_estimator_consistency}
 Under Assumptions \ref{ass:iid} and \ref{ass:simplifying},
     $\widehat{\sigma}^2_{\gamma^*} \overset{p}{\to} \sigma^2_{\gamma^*}$.
 \end{lemma}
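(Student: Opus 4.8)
The plan is to decouple the randomness in the numbering that the estimator selects from the behaviour of the plug-in variance formula for a \emph{fixed} numbering. Write $\widehat{s}_a$ (and $(\widehat{s}_a,\widehat{s}_b)$ in case 2) for a numbering attaining $\max_{s_a\in S_a}\widehat{\gamma}_n(X^{s_a},Y)$, so that by construction $\widehat{\sigma}^2_{\gamma^*}$ equals the estimator $\widehat{\sigma}^2_{\gamma}(X^{\widehat{s}_a},Y)$ of \citet{pohle2024inference} evaluated at $\widehat{s}_a$. The first step is to show that this data-dependent numbering eventually coincides with the population-optimal one. Because $\{X_i,Y_i\}_{i=1}^n$ is i.i.d.\ and every element of $\Omega'$ has positive probability, there almost surely exists $N$ such that all $a$ values of $X$ appear in the sample for $n\ge N$; for such $n$ the vector of candidate estimates has reached full length and, by the consistency of $\widehat{\gamma}_n$ established en route to Proposition \ref{prop:gamma*_consistency} (see also Proposition \ref{prop:jointasymptotics}), $\widehat{\gamma}_n(X^{s_a},Y)\overset{p}{\to}\gamma(X^{s_a},Y)$ simultaneously over the finitely many $s_a\in S_a$. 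Under Assumption \ref{ass:simplifying} the population maximizer $s_a^{(*)}$ is unique, so $\Delta:=\min_{s_a\ne s_a^{(*)}}\bigl(\gamma(X^{s_a^{(*)}},Y)-\gamma(X^{s_a},Y)\bigr)>0$; on the event $\{\max_{s_a\in S_a}|\widehat{\gamma}_n(X^{s_a},Y)-\gamma(X^{s_a},Y)|<\Delta/2\}$, which has probability tending to $1$, $s_a^{(*)}$ is the unique sample maximizer and hence $\widehat{s}_a=s_a^{(*)}$ irrespective of the tie-breaking rule. Case 2 is identical with $S_a\times S_b$ in place of $S_a$.

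The second step is the consistency of the plug-in variance for a fixed numbering. On the event $\{\widehat{s}_a=s_a^{(*)}\}$ we have $\widehat{\sigma}^2_{\gamma^*}=\widehat{\sigma}^2_{\gamma}(X^{s_a^{(*)}},Y)$, and $\{(X_i^{s_a^{(*)}},Y_i)\}_{i=1}^n$ is an i.i.d.\ sample of discrete $\mathbb{R}$-valued random variables (a relabelling of the $X_i$ together with the $Y_i$). The target $\sigma^2_{\gamma^*}$ is the continuous function of $\nu^{(*)},\tau^{(*)},\sigma^2_{\tau^{(*)}},\sigma^2_{\nu^{(*)}}$ and $\sigma_{\tau^{(*)}\nu^{(*)}}$ written out in Proposition \ref{prop:asymptotics}, and each of those quantities is an expectation of bounded functions of $(X^{s_a^{(*)}},Y)$ — finite sums of cell, row and column probabilities and of the bounded kernels $k_1^{(\tau^{(*)})},k_1^{(\nu^{(*)})}$. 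Replacing every probability and every expectation by its empirical analogue under $\widehat{\mathcal{L}}_{X,Y}$ gives consistent estimators by the law of large numbers (equivalently, by the $U$-statistic LLN underlying Proposition \ref{prop:jointasymptotics}), and the continuous mapping theorem yields $\widehat{\sigma}^2_{\gamma}(X^{s_a^{(*)}},Y)\overset{p}{\to}\sigma^2_{\gamma^*}$; the map $(\nu,\tau,\dots)\mapsto(1-\nu)^{-2}(\cdots)$ is continuous at the relevant point because $(X,Y)$ is not perfectly dependent, so $\nu^{(*)}<1$. This is precisely the consistency of $\widehat{\sigma}^2_{\gamma}$ from \citet{pohle2024inference}, applied to $(X^{s_a^{(*)}},Y)$; case 2 is verbatim the same.

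Combining the two steps, for every $\varepsilon>0$
\begin{align*}
\p\bigl(|\widehat{\sigma}^2_{\gamma^*}-\sigma^2_{\gamma^*}|>\varepsilon\bigr)\le \p\bigl(\widehat{s}_a\ne s_a^{(*)}\bigr)+\p\bigl(|\widehat{\sigma}^2_{\gamma}(X^{s_a^{(*)}},Y)-\sigma^2_{\gamma^*}|>\varepsilon\bigr),
\end{align*}
and both terms on the right-hand side tend to $0$ by the first and second steps, respectively, which gives $\widehat{\sigma}^2_{\gamma^*}\overset{p}{\to}\sigma^2_{\gamma^*}$. I expect the only genuinely delicate point to be the first step — turning ``the sample argmax equals the unique population argmax with probability tending to one'' into a rigorous statement, in particular dealing with the transient regime $n<N$ in which the candidate vector has not yet attained full length and with the (immaterial) tie-breaking in the definition of $\widehat{s}_a$ — whereas the second step merely imports the corresponding law-of-large-numbers-plus-continuous-mapping argument of \citet{pohle2024inference} essentially unchanged.
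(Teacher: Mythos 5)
Your proposal is correct and follows essentially the same route as the paper's (very terse) proof: Assumption \ref{ass:simplifying} pins down the numbering chosen by the estimator with probability tending to one, and the consistency of the plug-in variance for that fixed numbering is imported from \citet{pohle2024inference} via the law of large numbers (Glivenko--Cantelli in the paper's wording) and continuous mapping. Your write-up merely makes explicit the argmax-consistency step and the union-bound combination that the paper leaves implicit.
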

 The simulation section shows that this approach delivers good coverage probabilities for a set of reasonable DGPs.
 \subsection{Independence Test}
Independence tests for two discrete (or even nominal) random variables are readily available in the literature and have been in scientific use for decades. Notable examples are the Chi-square test of independence and the G-Test \citep{mcdonald2014handbook}. However, if we consider a nominal variable together with a continuous or discrete-continuous variable, testing options are quite limited. One possibility would be to discretize the continuous part and revert to the G-Test, but the arbitrariness in choosing the width and number of bins is suboptimal. Another option would be to regress the continuous variable on dummy variables that jointly determine the value of the nominal variable and subsequently perform a global F-Test on all the slope coefficients.
\begin{lemma}\label{lem:globalFTest}
    Let $X:\Omega \to \Omega'$ and $Y:\Omega \to \mathbb{R}$ with $\Omega':=\{\omega_1', ..., \omega_a'\}$. Consider the regression model $$Y=b_1+\sum_{j=2}^{a}b_j\mathds{1}_{\{X=\omega_j'\}}+U \ a.s.$$
    with $U:\Omega \to \mathbb{R}$ independent from $X$. It holds that $$X\perp\!\!\!\perp Y \Leftrightarrow b_j=0\ \forall j\in\{2, ..., a\}.$$
\end{lemma}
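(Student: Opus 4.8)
The plan is to prove the two implications of the equivalence separately. The direction ``$\Leftarrow$'' is immediate: if $b_j = 0$ for all $j \in \{2, \dots, a\}$, then $Y = b_1 + U$ almost surely, so $Y$ agrees up to a $\p$-null set with a measurable function of $U$ alone; since $U \perp\!\!\!\perp X$ by hypothesis, $Y \perp\!\!\!\perp X$ follows with no further work.

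For the direction ``$\Rightarrow$'', I would assume $X \perp\!\!\!\perp Y$, fix an index $j \in \{2, \dots, a\}$, and use the fact recorded in Subsection \ref{subsec:setup} that $\p(X = \omega_i') > 0$ for every $i$, so that conditioning on $\{X = \omega_1'\}$ and on $\{X = \omega_j'\}$ is legitimate. On $\{X = \omega_1'\}$ the structural equation reads $Y = b_1 + U$, whereas on $\{X = \omega_j'\}$ it reads $Y = b_1 + b_j + U$; invoking $U \perp\!\!\!\perp X$ to replace the conditional law of $U$ given $X$ by its unconditional law, I obtain $\mathcal{L}(Y \mid X = \omega_1') = \mathcal{L}(b_1 + U)$ and $\mathcal{L}(Y \mid X = \omega_j') = \mathcal{L}(b_1 + b_j + U)$. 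Independence of $X$ and $Y$ forces both of these conditional laws to equal $\mathcal{L}(Y)$, hence $U \stackrel{d}{=} U + b_j$.

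The final step is to deduce $b_j = 0$ from $U \stackrel{d}{=} U + c$ with $c := b_j$, i.e.\ to rule out a nontrivial translation being a symmetry of the law of a genuine real-valued random variable. The quickest route: writing $F$ for the cdf of $U$, the identity $F(x) = F(x - c)$ iterates to $F(x) = F(x - nc)$ for all $n \in \mathbb{N}$, and letting $n \to \infty$ contradicts $\lim_{x \to -\infty} F(x) = 0$ and $\lim_{x \to +\infty} F(x) = 1$ unless $c = 0$; alternatively the characteristic-function identity $\varphi_U(t) = e^{itc}\varphi_U(t)$ forces $\varphi_U$ to vanish in a neighbourhood of $0$ unless $c = 0$, contradicting $\varphi_U(0) = 1$. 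Since $j \in \{2, \dots, a\}$ was arbitrary, this yields $b_j = 0$ for all such $j$ and completes the proof.

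I expect the only point needing care to be the passage in the middle paragraph — turning the structural equation into an exact statement about the conditional distributions $\mathcal{L}(Y \mid X = \omega_i')$ by combining the positivity $\p(X = \omega_i') > 0$ with $U \perp\!\!\!\perp X$ — together with the (standard but easy to take for granted) observation that no non-degenerate shift can preserve a probability measure on $\mathbb{R}$. The rest is routine bookkeeping.
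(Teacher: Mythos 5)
Your proof is correct and follows essentially the same route as the paper: the ``$\Leftarrow$'' direction is identical, and the ``$\Rightarrow$'' direction rests on the same observation that a nonzero $b_j$ shifts the conditional law of $Y$ across the categories of $X$. The paper argues by contraposition and merely asserts that some Borel set $A$ then satisfies $\p(Y\in A\mid X=\omega_j')\ne \p(Y\in A)$, whereas you supply the justification it leaves implicit (no nondegenerate translation preserves a probability law on $\mathbb{R}$, via the cdf or characteristic-function argument) and compare two conditional laws rather than a conditional law with the marginal, which neatly sidesteps the mixture-versus-component subtlety; so your write-up is, if anything, the more complete one.
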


If the real-valued variable is discrete-continuous, one could in principle extend this strategy to corner solution models such as Tobit.

Since we know the asymptotic distribution of our correlation coefficient under independence, we can add a further option based on this coefficient. Unlike as in the construction of confidence intervals, Assumption \ref{ass:simplifying} is clearly inappropriate. Therefore, the asymptotic distribution is more complicated.\footnote{For an analytical expression of its pdf, see Corollary 4 in \citet{arellano2008exact}. Lemma 4 in \citet{holzmann2024lancaster} also gives the cdf, albeit only in the bivariate case.}
\begin{lemma}\label{lem:gamma*asymptotics_independence}
    Suppose $(X,Y)$ are independent and either only $X$ (case 1) or $X$ and $Y$ (case 2) have a nominal scale. In case 1, it holds that $$\sqrt{n}  \widehat{\gamma}^*_n \stackrel{d}{\rightarrow} \max\{U^{(1)}, ..., U^{(a!)}\}$$
			with $U^{(1)}, ..., U^{(a!)}$ defined as in Proposition \ref{prop:jointasymptotics}. Similarly in case 2, we have $$\sqrt{n}  \widehat{\gamma}^*_n  \stackrel{d}{\rightarrow} \max\{V^{(1)}, ..., V^{(a!\cdot b!)}\}$$
			with $V^{(1)}, ..., V^{(a!\cdot b!)}$ defined as in Proposition \ref{prop:jointasymptotics}.
\end{lemma}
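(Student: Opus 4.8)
The plan is to show that under independence the maximization problem defining $\gamma^*$ degenerates---every numbering is a maximizer and the common value is $0$---and then to read off the result from the joint central limit theorem of Proposition \ref{prop:jointasymptotics} together with the continuous mapping theorem. This is precisely case 2 of Proposition \ref{prop:asymptotics} specialized to $\gamma^*=0$ with $d=a!$ (resp.\ $d=a!\cdot b!$), so the argument is short.

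The first step is the observation that $\gamma(X^{s_a},Y)=0$ for every $s_a\in S_a$. Since $X^{s_a}$ is a deterministic function of $X$, independence of $X$ and $Y$ entails independence of $X^{s_a}$ and $Y$; and Goodman--Kruskal's $\gamma$, being a normalized difference of concordance and discordance probabilities, vanishes whenever its two arguments are independent (the two probabilities then coincide, e.g.\ by exchangeability of i.i.d.\ copies). Hence $\gamma^*(X,Y)=\max_{s_a\in S_a}\gamma(X^{s_a},Y)=0$ and every $s_a\in S_a$ solves the maximization in Definition \ref{def:wermuth_coefficient}; the identical reasoning in case 2 gives $\gamma(X^{s_a},Y^{s_b})=0$ for all $(s_a,s_b)\in S_a\times S_b$.

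Next I would verify that Proposition \ref{prop:jointasymptotics} applies: independence of $X$ and $Y$ excludes perfect dependence in the sense of Definition \ref{def:perfect_dependence}, because agreement of $F_{X^{s_a},Y}$ with a Fr\'echet--Hoeffding bound makes $X^{s_a}$ and $Y$ comonotone or countermonotone, which together with independence forces one marginal to be degenerate---impossible, since both marginals are non-deterministic. Proposition \ref{prop:jointasymptotics}, with centering vector equal to zero by the previous step, then yields
\[
\sqrt n\,\bigl(\widehat\gamma_n(X^{s_a},Y)\bigr)_{s_a\in S_a}\ \overset{d}{\to}\ (U^{(1)},\dots,U^{(a!)})\sim\mathcal N_{a!}(0,\Sigma^1),
\]
and analogously for the $a!\cdot b!$-dimensional vector in case 2. (The estimator of Definition \ref{def:estim} is indexed by the empirical number $k$ of observed values of $X$, but, as in the footnote to Proposition \ref{prop:jointasymptotics}, almost surely $k=a$---and $l=b$ in case 2---for all $n$ past some random $N$, so asymptotically the vector has the fixed length $a!$, resp.\ $a!\cdot b!$, and this indexing plays no role.)

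Finally, because $\sqrt n>0$ we have $\sqrt n\,\widehat\gamma^*_n=\max_{s_a\in S_a}\sqrt n\,\widehat\gamma_n(X^{s_a},Y)$, and the map $(u_1,\dots,u_{a!})\mapsto\max_i u_i$ on $\mathbb R^{a!}$ is continuous, so the continuous mapping theorem gives $\sqrt n\,\widehat\gamma^*_n\overset{d}{\to}\max\{U^{(1)},\dots,U^{(a!)}\}$, and likewise $\sqrt n\,\widehat\gamma^*_n\overset{d}{\to}\max\{V^{(1)},\dots,V^{(a!\cdot b!)}\}$ in case 2. I do not expect a genuine obstacle; the only points demanding care are that $\gamma$ vanishes under \emph{every} relabelling---so that the centering is exactly the zero vector and no numbering is asymptotically negligible, in contrast to Assumption \ref{ass:simplifying}---and the almost-sure eventual coincidence of $k$ with $a$ (and of $l$ with $b$), both of which are routine.
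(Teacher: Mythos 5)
Your proposal is correct and follows essentially the same route as the paper: under independence every relabelling yields $\gamma=0$, so all numberings are maximizers, and the result follows from the joint normality in Proposition \ref{prop:jointasymptotics} together with the continuity of the maximum. Your explicit check that independence rules out perfect dependence (so that Proposition \ref{prop:jointasymptotics} applies) is a detail the paper leaves implicit, but it does not change the argument.
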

In order to exploit this lemma for an independence test, we need to be able to compute p-values via the cdf of the limiting distribution. Since for arbitrary $\mathbb{R}$-valued random variables $W_1, ..., W_n$, it holds that
\begin{align*}
    \p(\max\{W_1, ..., W_n\}\le w)=\p(W_1\le w, ..., W_n\le w),
\end{align*}
this can easily be done via the results obtained in Proposition \ref{prop:jointasymptotics} and an \texttt{R} package that is able to evaluate the CDF of a multivariate normal distribution, for example \texttt{mvtnorm} \citep{mvtnorm}\footnote{If $a$ and $b$ are large (see Appendix \ref{sec-App:Computation} for precise values), a very high-dimensional normal CDF has to be evaluated. At least in case 2, it is then advisable to use a computationally less challenging independence test, of which plenty exist.}. However, we need to estimate the matrices $A^1$ and $\Sigma_U^1$ (case 1) or $A^2$ and $\Sigma_U^2$ (case 2). For the $A$-matrices, we just replace the several $\tau$ and $\nu$ by their empirical counterparts and possibly reduce the size of the matrix if not all values with positive probability are present in the sample. For the $\Sigma_U$-matrices, we estimate $\sigma_{v(i)w(j)}$ with $v, w\in\{\tau, \nu\}$ and $i,j\in \{1, ..., k!\}$ or $i,j\in \{1, ..., k!\cdot l!\}$ in the spirit of \citet{pohle2024inference}, where $k,l$ are defined as in Definition \ref{def:estim}.
	
	\section{Simulations}\label{sec:sims}
	As a means to evaluate the finite sample performance of the introduced confidence intervals and the independence test, this section presents simulated coverage rates as well as size and power values for a collection of DGPs. Additionally, it shows simulated mean bias values to showcase the finite sample performance of the proposed estimator. Each simulation contains $MC=1,000$ pairs of iid observations, features several degrees of dependence and different sample sizes.
	\subsection{DGPs}\label{subsec:DGPs}
	\subsubsection{Nominal-Continuous DGPs}
	The first two DGPs are inspired by classical linear regression in which nominal explanatory variables can be accounted for by including binary indicators. $X$ is defined via $\mathbb{P}(X=A)=\mathbb{P}(X=B)=\mathbb{P}(X=C)=1/3$ and $Y$ via 
	\begin{align*}
		Y=\alpha \mathds{1}_{\{X=B\}}-\alpha \mathds{1}_{\{X=C\}}+U, \quad U\sim \mathcal{N}(0,1), t(1), \quad \alpha \in \mathbb{R}.
	\end{align*}
	The second pair of DGPs is based on the multinomial logit model with $X\sim \mathcal{N}(0,1), t(1)$ and \begin{align*}\mathbb{P}(Y=A)&=\exp(-\alpha X) / (1 + \exp(-\alpha X) + \exp(\alpha X)),\\
	\mathbb{P}(Y=B)&=\exp(\alpha X) / (1 + \exp(-\alpha X) + \exp(\alpha X)),\\
	\mathbb{P}(Y=C)&=1 / (1 + \exp(-\alpha X) + \exp(\alpha X)), \quad \alpha \in \mathbb{R}.
	\end{align*}
    For both DGPs, $\alpha = 0$ leads to independent $X$ and $Y$ and the larger $\alpha$ gets in absolute value, the stronger the dependence\footnote{\label{foot:stronger_dep}In Section \ref{sec:desirable_properties}, I admit that I do not have any concept of dependence ordering in the nominal case. Therefore, the word ``stronger'' shall be understood in a heuristic sense.}.
	
	\subsubsection{Nominal-Nominal DGPs}
    Table \ref{tab:DGPsnomnom} shows two DGPs in which both variables have a nominal scale. The first has one uniform and one skewed marginal distribution while the second features two uniform distributions. Again, $\alpha = 0$ corresponds to independence and the farther $\alpha$ is away from 0, the stronger the dependence\textsuperscript{\ref{foot:stronger_dep}}.
    
    \begin{table}[tb]
        \centering
        \caption{Two DGPs in which either of the variables has a nominal scale.}
            \label{tab:DGPsnomnom}
        \begin{tabular}{c}
        \begin{subtable}{\textwidth}
			\centering
            \subcaption{DGP with one uniform and one skewed distribution.}
            \label{tab:DGPunifskew}
			\begin{tabular}{ccccc}
				\toprule
				& $y_1$ &  $y_2$ & $y_3$ \\
				\midrule 
				$x_1$&$76/300+(2/30)\alpha$&$4/100-(1/30)\alpha$&$4/100-(1/30)\alpha$&$1/3$\\
				$x_2$&$76/300+(2/30)\alpha$&$4/100-(1/30)\alpha$&$4/100-(1/30)\alpha$&$1/3$\\  
				$x_3$&$76/300-(4/30)\alpha$&$4/100+(2/30)\alpha$&$4/100+(2/30)\alpha$&$1/3$\\
				&$76/100$&$12/100$&$12/100$&1\\
				\bottomrule
			\end{tabular}
            \end{subtable}\\
            
            \begin{subtable}{\textwidth}
			\centering
            \subcaption{DGP with two uniform distributions.}
            \label{tab:DGPunifunif}
			\begin{tabular}{ccccc}
				\toprule
				& $y_1$ &  $y_2$ & $y_3$ \\
				\midrule 
				$x_1$&$1/9+(2/30)\alpha$&$1/9-(1/30)\alpha$&$1/9-(1/30)\alpha$&$1/3$\\
				$x_2$&$1/9+(2/30)\alpha$&$1/9-(1/30)\alpha$&$1/9-(1/30)\alpha$&$1/3$\\  
				$x_3$&$1/9-(4/30)\alpha$&$1/9+(2/30)\alpha$&$1/9+(2/30)\alpha$&$1/3$\\
				&$1/3$&$1/3$&$1/3$&1\\
				\bottomrule
			\end{tabular}
            \end{subtable}
            \end{tabular}
		\end{table}
	\subsection{Results}
    This subsection presents the simulation results for the confidence intervals and the estimator by means of empirical coverage rate and mean bias graphs and the simulation results for the independence test by means of p-value histograms and power tables. 
  
    \begin{figure}
    	\begin{subfigure}{0.49\textwidth}
    		\includegraphics[width=\linewidth]{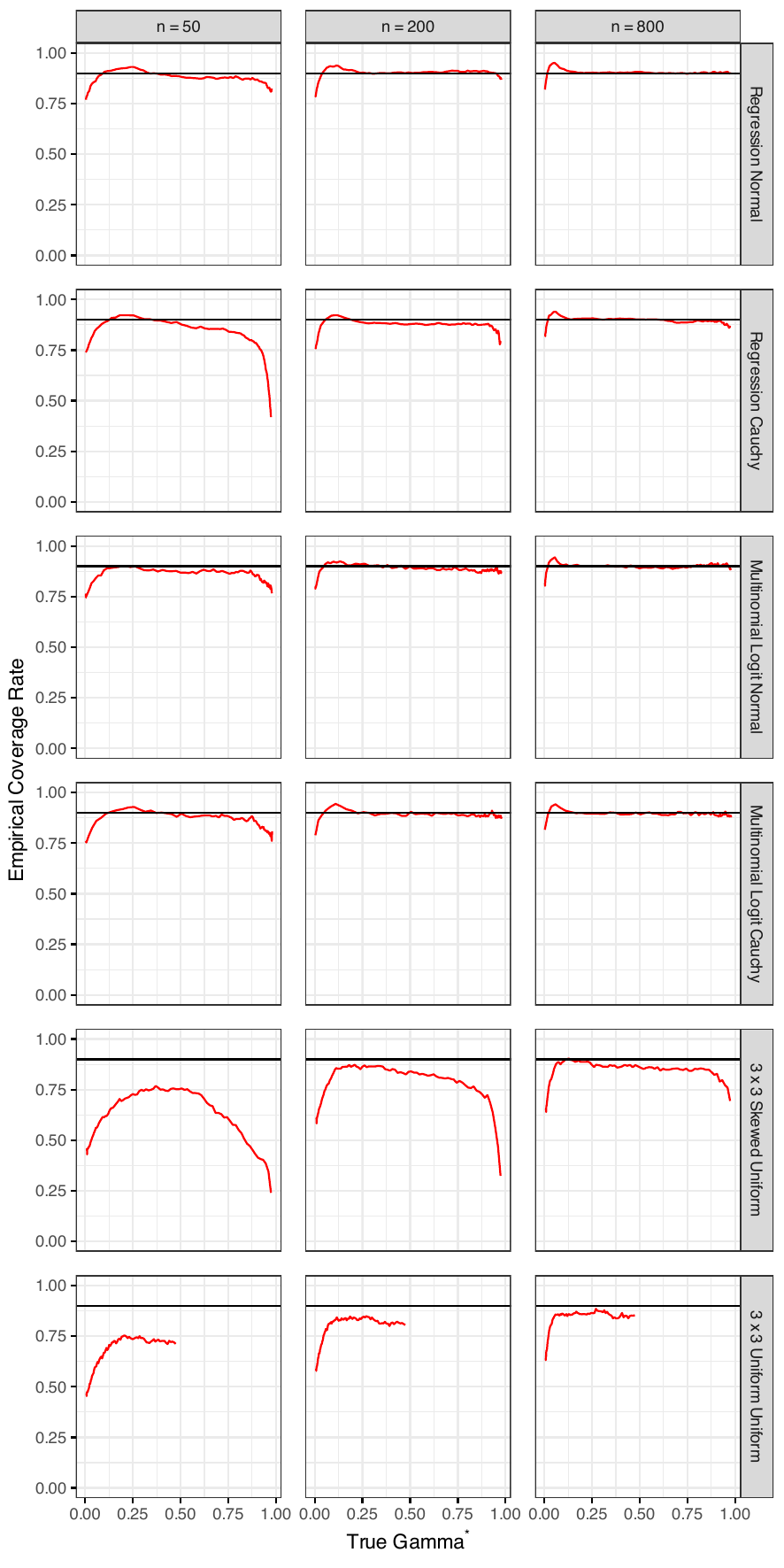}
    		\caption{Confidence interval coverage}
    		\label{fig:coverage}
    	\end{subfigure}
    	\hfill
    	\begin{subfigure}{0.49\textwidth}
    		\includegraphics[width=\linewidth]{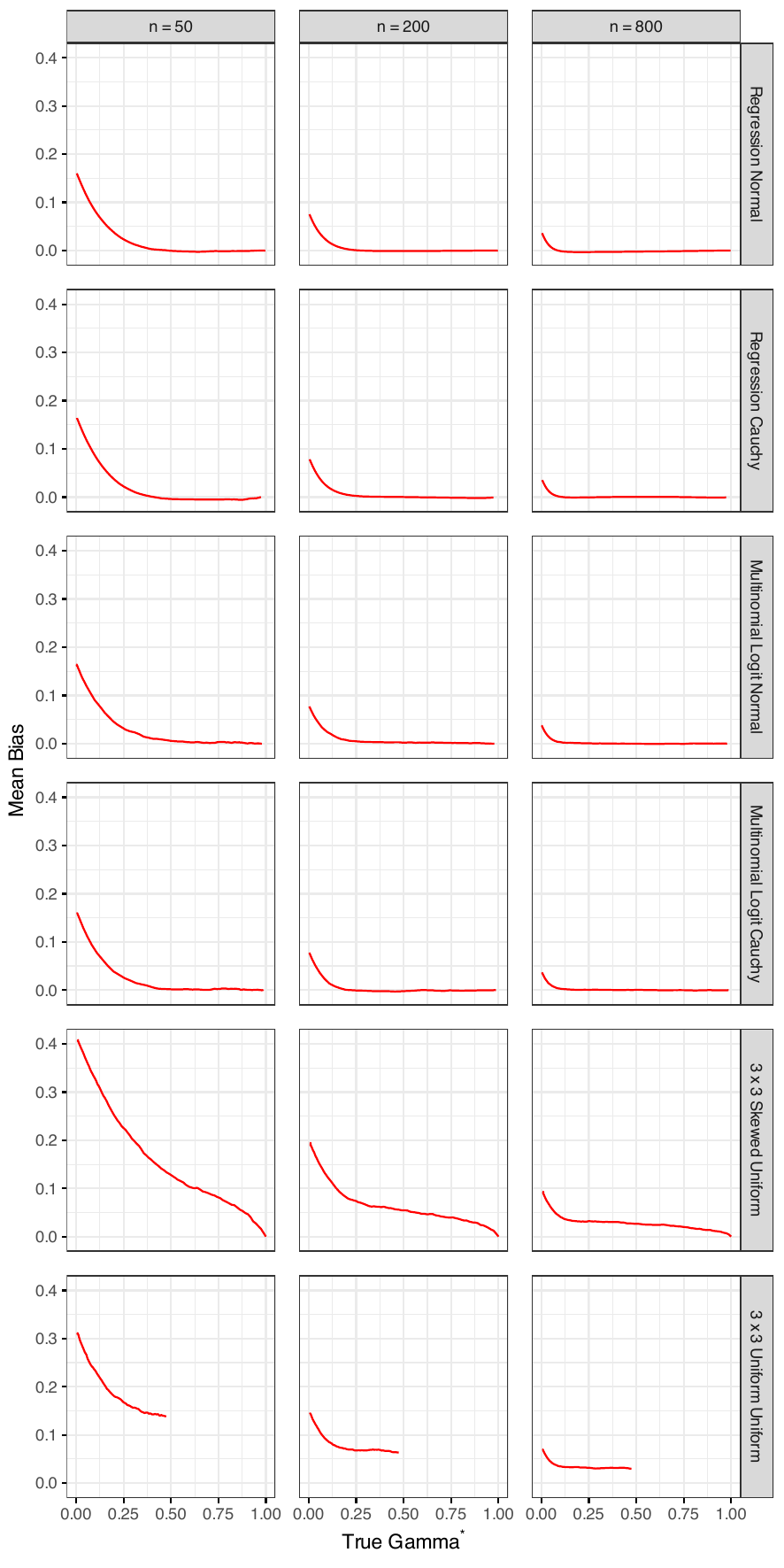}
    		\caption{Mean bias}
    		\label{fig:bias}
    	\end{subfigure}
        \caption{This figure shows empirical confidence interval coverage rates (nominal level: $0.9$) and mean bias plots for $\gamma^*$ with $MC = 1,000$ simulation replications. The sample sizes $n\in \{50, 200, 800\}$ are given in the column and the DGPs are given in the row descriptions, respectively. The x-axis displays the true values of $\gamma^*$ that arise for a specific choice of $\alpha$.}
    \end{figure}
    For the empirical coverage rates, we desire a value close to the nominal value of $90\%$. Although Assumption \ref{ass:simplifying} is clearly not satisfied for the nominal-continuous DGPs in the independence case $\gamma^*=0$, the coverage rates in Figure \ref{fig:coverage} still appear acceptable, especially for stronger dependencies. The nominal-nominal DGPs are deliberately designed such that Assumption \ref{ass:simplifying} is never satisfied (note that $y_2/y_3$ and $x_1/x_2$ occur with the same marginal and joint probabilities irrespective of the choice of $\alpha$) and this feature is reflected by low coverage probabilities, especially for $n=50$. For larger sample sizes, the empirical coverage rate increases almost to the nominal level. Interestingly, the mean bias plots in Figure \ref{fig:bias} show a similar behavior as the empirical coverage rate plots. The bias is large and positive under independence and decreases both with the sample size and the degree of dependence. That is because the bias arises also mainly due to violations of Assumpotion \ref{ass:simplifying}. Intuitively, the bias increases with the intensity of permutation switching of the estimator across the Monte Carlo replications. If Assumption \ref{ass:simplifying} was satisfied robustly, i.e.\ in the sense that also in finite samples, the estimator (almost) always chooses the same numbering, there would be no bias.

    Regarding the independence test, recall that under the null hypothesis $(\alpha = 0)$, the histograms shall converge to a continuous uniform distribution on $[0,1]$. For $\alpha \ne 0$, we desire a high rejection rate of the null hypothesis, i.e.\ high power.
    
    Figure \ref{fig:gamma*_a0} displays p-value histograms for all the considered DGPs under independence where the test has been conducted based on $\gamma^*$. They all show convergence towards the desired distribution, albeit at different rates. Figure \ref{fig:gamma*_a0_reg} in turn shows the results if we change our testing procedure to traditional approaches (global F-Test and $\chi^2$ test\footnote{The G-Test shows worse results than the $\chi^2$ test and is therefore omitted.}). The uniformity of our p-value histograms no longer holds for those DGPs in which the continuous variable follows a Cauchy distribution. This illustrates the major strength of the new testing approach: Due to the rank-based nature of the new measure, it does not rely on any moment conditions. 

    Comparing the power values of the different tests in Table \ref{tab:power}, we find major advantages of the $\gamma^*$ independence test in the two Cauchy cases. However, there are also large differences in one DGP for which no assumption that the traditional tests make is violated. The DGP defined in Table \ref{tab:DGPunifskew} delivers a remarkable increase in power if we switch from the $\chi^2$ independence test to the $\gamma^*$ independence test. The reason for that may be the slow convergence of the $\chi^2$ test statistic to its limiting distribution if some cells in the contingency table have very small probabilities \citep{bruce2015}. For the remaining DGPs, the traditional tests have usually more power.
    \begin{figure}
    \begin{subfigure}{0.49\textwidth}
        \includegraphics[width=\linewidth]{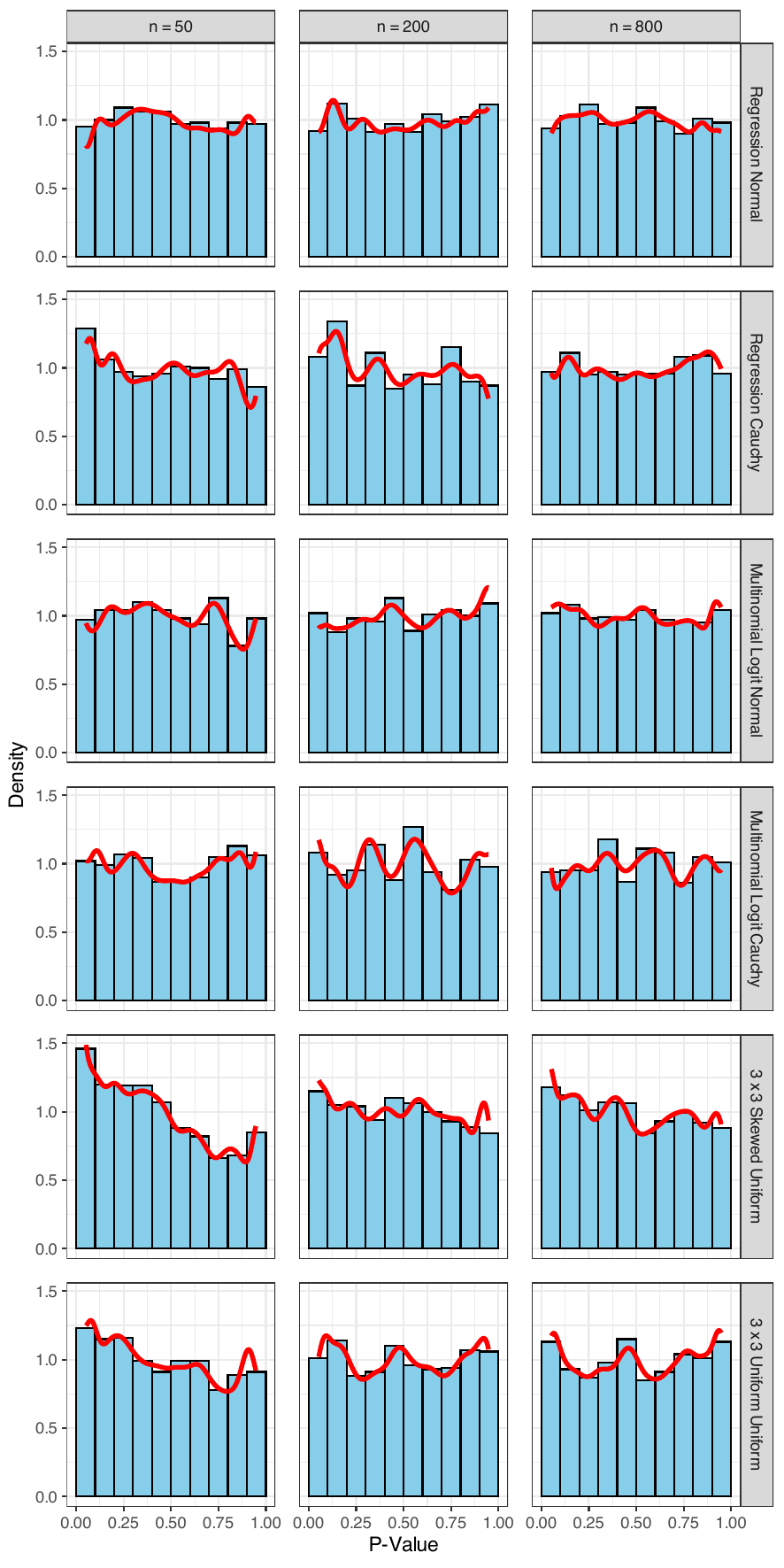}
        \caption{$\gamma^*$ independence test}
        \label{fig:gamma*_a0}
    \end{subfigure}
    \hfill
    \begin{subfigure}{0.49\textwidth}
        \includegraphics[width=\linewidth]{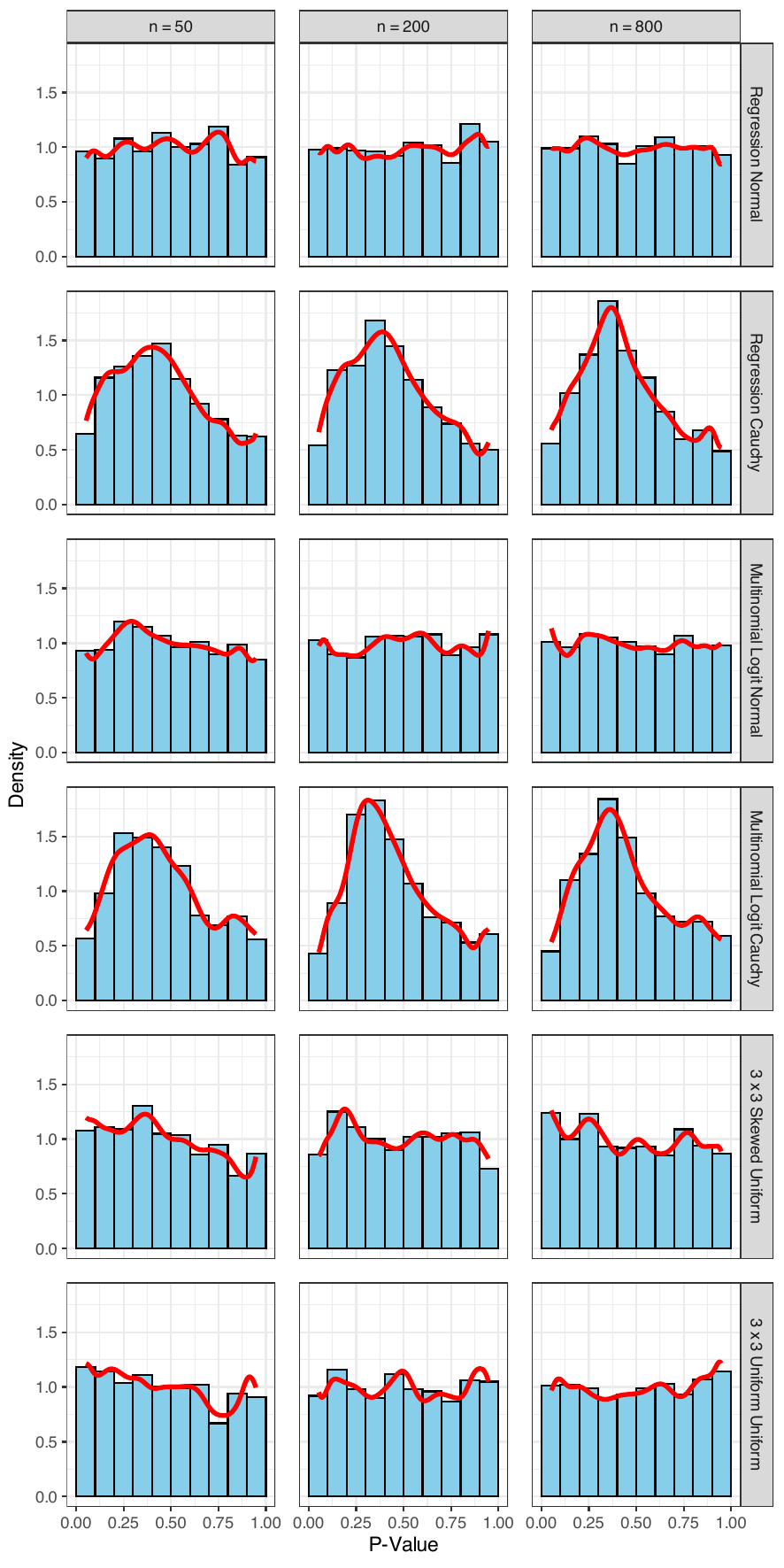}
        \caption{global F-test or $\chi^2$ independence test}
        \label{fig:gamma*_a0_reg}
    \end{subfigure}
    \caption{This figure shows p-value histograms that arise either from the newly introduced testing scheme (left subplot) or from traditional testing approaches (right subplot). It considers three different sample sizes and six different DGPs which are explained in Subsection \ref{subsec:DGPs}. For all DGPs, sampling happens under independence $(\alpha = 0)$ with 1,000 Monte Carlo repetitions.}
    \end{figure}

    \begin{table}[tb]                
    \centering
    \caption{Power values for the $\gamma^*$ independence test and the traditional tests (global F-test and $\chi^2$ independence test) at 10\% significance with 1,000 Monte Carlo repetitions for all DGPs described in Subsection \ref{subsec:DGPs}. The following abbreviations are used: Regression Normal (RN), Regression Cauchy (RC), Multinomial Logit Normal (MLN), Multinomial Logit Cauchy (MLC), 3x3 Skewed Uniform (3x3 SU), 3x3 Uniform Uniform (3x3 UU).}
            \label{tab:power}  
        	\begin{tabular}{ccccccc}
			\addlinespace
			\toprule
			$\gamma^*$&\multicolumn{5}{c}{$\gamma^*=0.1$}\\			
            \midrule 
            & RN & RC & MLN & MLC & 3x3 SU & 3x3 UU\\
			$n=50$&0.153&0.174&0.186&0.177&0.174&0.156\\
			$n=200$&0.356&0.395&0.391&0.338&0.175&0.26\\
			$n=800$&0.831&0.857&0.859&0.859&0.43&0.718 \\
			\midrule
            F, $\chi^2$&\multicolumn{5}{c}{$\gamma^*=0.1$}\\			
            \midrule 
            & RN & RC & MLN & MLC & 3x3 SU & 3x3 UU\\
			$n=50$&0.152&0.076&0.185&0.126&0.131&0.153\\
			$n=200$&0.358&0.063&0.400&0.505&0.165&0.261\\
			$n=800$&0.860&0.053&0.878&0.832&0.395&0.729\\
            \bottomrule
		\end{tabular}
\end{table}

    \section{Application}\label{sec:case_study}
    This section illustrates the two major innovations of this paper in two data examples. 
    \subsection{Case Study on Countries and Income}
    Firstly, the proposed coefficient is the first coefficient which is able to summarize the dependence within a bivariate random vector if one marginal distribution is continuously distributed and the other has a nominal scaling. A self-evident example for such a situation is the fundamental observation of economics: People in some countries have higher incomes than people in others. 
    
    Figure \ref{fig:CountryIncome} displays the dependence between the variables country and income measured by the new coefficient. Each colored point is located at a border triangle and represents the value of the coefficient for those three countries. That is, the variable country always takes exactly three values and the income data is chosen accordingly. The same information with the corresponding confidence intervals is displayed in Figure \ref{fig:CountryIncome_CIs}, where the x-axis shows the border triangle (see the country abbreviations in Table \ref{tab:samplesizes}). It is clearly visible that border triangles featuring countries like France or Poland that have large sample sizes exhibit narrower confidence intervals than other border triangles, for example the one between Hungary, Austria and Slovenia.

    With respect to the sizes of the coefficients, those comparisons in which two countries with a similar mean income (see Table \ref{tab:samplesizes}) such as France and Germany are a part of exhibit rather small values. In contrast to that, the Republic of Serbia, Hungary and Romania have a clear income order which is reflected by a larger coefficient. However, other comparisons like Slovakia, the Czech Republic and Austria also have an unambiguous income order in terms of the mean but still yield a small coefficient estimate. The reason for this behavior is of course that even heavily differing means can be explained by few outliers, a differing number of zeros and other factors that have little influence on a rank-based measure like $\widehat{\gamma}^*$.
    \begin{figure}
    \begin{subfigure}{0.49\textwidth}
        \includegraphics[width=\linewidth]{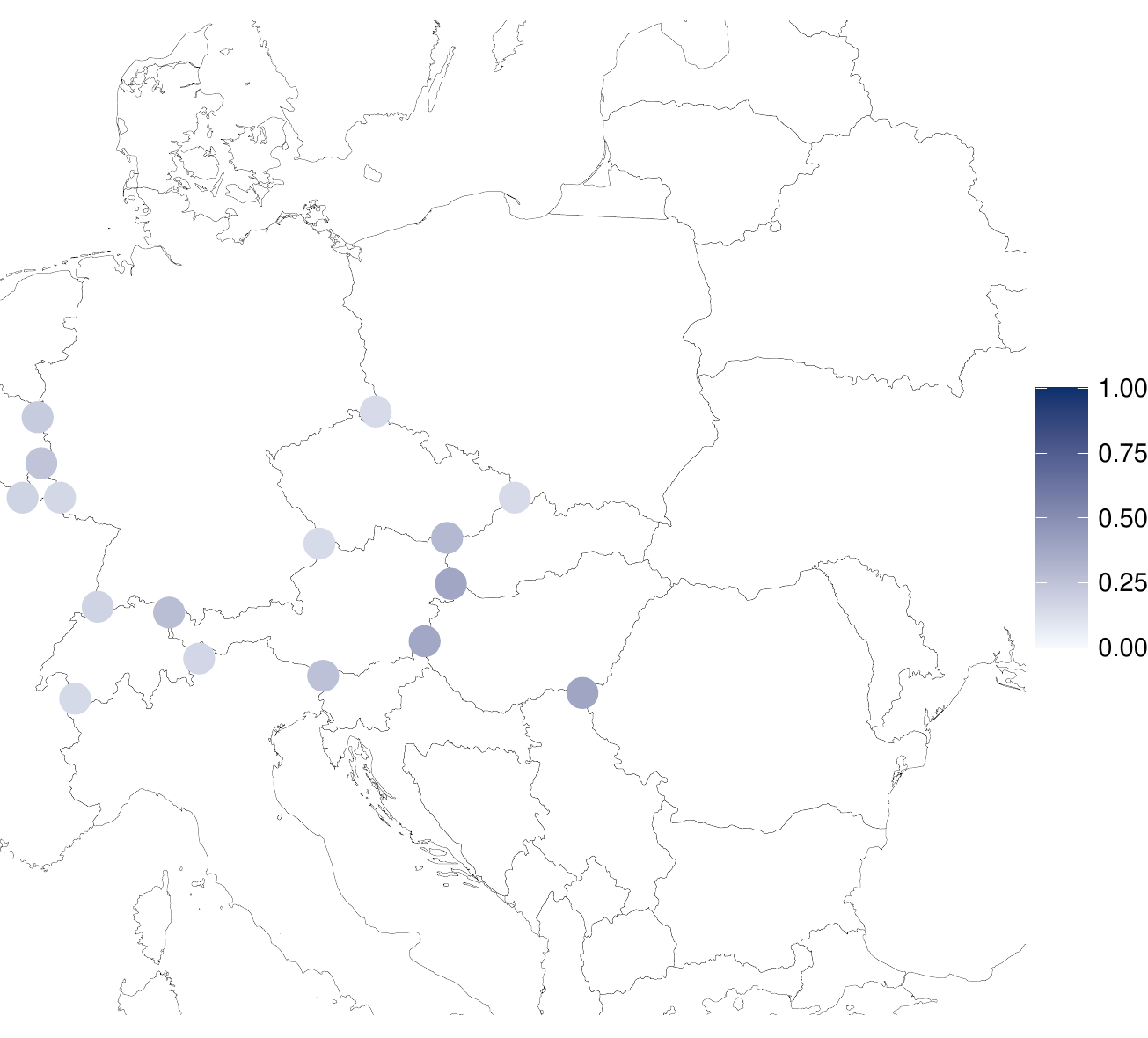}
        \caption{without confidence intervals}
        \label{fig:CountryIncome}
    \end{subfigure}
    \hfill
    \begin{subfigure}{0.49\textwidth}
        \includegraphics[width=0.9\linewidth]{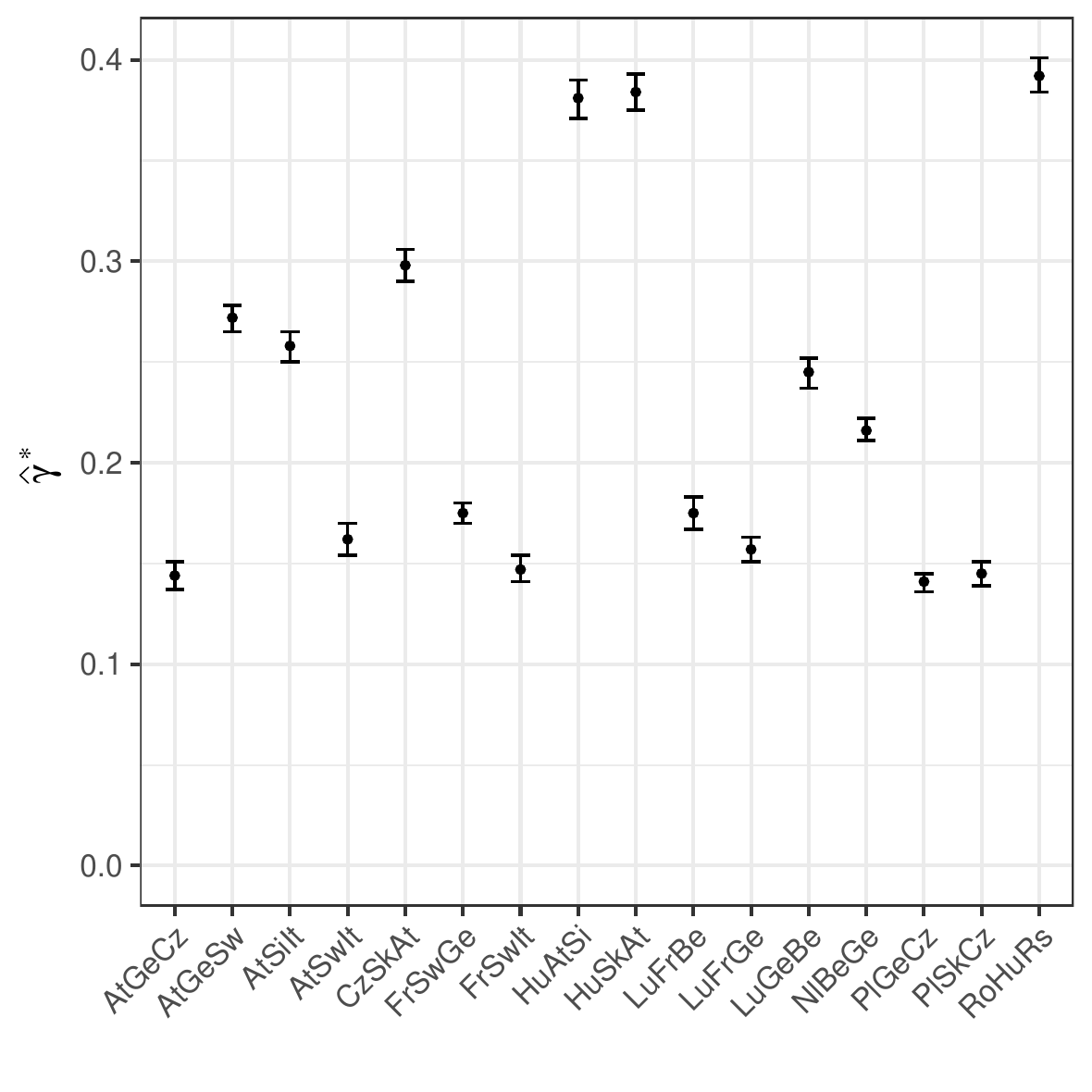}
        \caption{with confidence intervals}
        \label{fig:CountryIncome_CIs}
    \end{subfigure}
    \caption{This figure shows $\widehat{\gamma}^*$ for the comparison between the variables country and income chosen from the Luxembourg Income Study (LIS) Database (2011 PPP).}
    \end{figure}
    \begin{table}[tb]
	\centering
\caption{This table lists all countries that are considered in this case study together with their abbreviations, their respective numbers of income replies and their respective mean income values.}
	\label{tab:samplesizes}
    \begin{threeparttable}
		\begin{tabular}{cccc}
			\toprule
			\textbf{abbreviation} & \textbf{country}  & \textbf{sample size}&\textbf{mean income} \\
			\midrule
			At & Austria &12,096 & 32,597.75\\
			Be & Belgium & 15,030 &28,415.33\\
			Cz & Czech Republic& 19,205& 14,264.06\\
			Fr & France & 86,392& 23,135.25\\
			Ge & Germany & 50,563& 24,230.60\\
			Hu & Hungary& 6,237& 8,317.53\\
			It & Italy& 15,198& 35,588.65 \\
			Lu & Luxembourg& 9,098& 46,569.78\\
			Nl & Netherlands& 30,378& 32,691.95\\
			Pl & Poland & 87,603& 10,128.91\\
			Ro & Romania & 16,609& 13,997.20\\
			Rs & Republic of Serbia& 15,309& 5,110.57\\
			Si & Slovenia & 11,228& 11,824.54\\
			Sk & Slovakia & 14,653& 10,548.63\\
            Sw & Switzerland & 18,215& 38,723.87\\
			\bottomrule
		\end{tabular}
	\end{threeparttable}
\end{table}
    \nocite{LIS2025}
	\subsection{Case Study on Countries and Religions}
    Secondly, the proposed coefficient satisfies the property of attainability. As described in Section \ref{sec:improper_measures}, the traditional measures only satisfy this property if the marginal distributions have a certain structure. In the balanced setting of a $3\times 3$ contingency table, this amounts to the requirement that the marginals are identical in terms of the probability mass. That means that the values of the variables can differ but for each value of the first variable there needs to exist a value of the second variable with the same marginal probability mass. 
    
     An example for a comparison in which this requirement is particularly heavily violated is the one between the variables country and religion. Population data on those variables is available in the world religion database
     \citep{WRD2025}.
     Since we only consider the three monotheistic world religions Christianity, Islam and Judaism and perform the analysis for each border triangle separately, we are indeed in the setting of $3\times 3$ contingency tables. However, the marginals in those tables are typically very different. For example, European countries usually have Christian population majorities and very little Muslims and even fewer Jews. As a consequence, the marginal distribution of the variable religion is skewed towards Christianity but the marginal distribution of the variable country may be more balanced.

    \begin{figure}
        \centering
        \includegraphics[width=\textwidth]{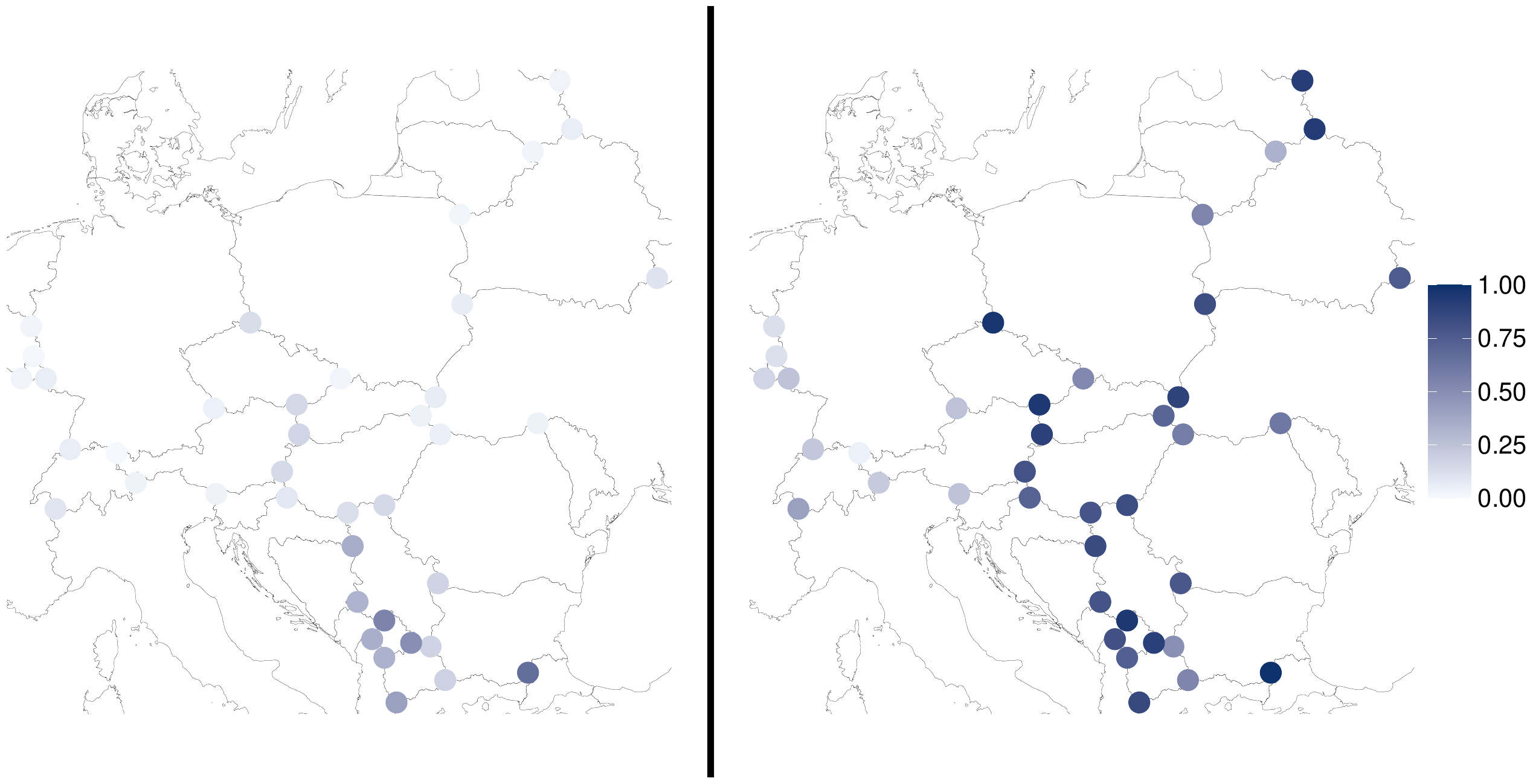}
        \caption{This figure shows Cramér's $V$ (left plot) and $\gamma^*$ (right plot) for the dependence between the variables country and religion. Each colored dot represents the value of the respective coefficient if the variable country is restricted to the three countries featuring in the border triangle at which the dot is located. The variable religion is always restricted to the three monotheistic world religions Christianity, Islam and Judaism.}
        \label{fig:Rel_CramerWermuth}
    \end{figure}
    
     The differing assessments of dependence between the new coefficient and the old coefficients, where Cramér's V serves as a representative for the traditional measures, are now nicely illustrated in Figure \ref{fig:Rel_CramerWermuth}. The maps for the remaining measures are located in Appendix \ref{sec-App:addGraphsTables} (Figure \ref{fig:Rel_PC_tau_lambda_uncert}), but display qualitatively similar correlation patterns to the one for Cramér's V. Two observations are immediate: On the one hand, the new coefficient generally yields larger values. On the other hand, the difference varies heavily across the comparisons. One particularly good example for the strength of the new coefficient are the border triangles Germany, Austria, Switzerland and Germany, Poland, Czechia. Cramér's V indicates a rather weak dependence for both triangles, although the value for the latter triangle is a little larger. The newly proposed coefficient however yields almost the same value as Cramér's V for the German-speaking triangle and a value close to one for the other triangle. Indeed, the dependence in both comparisons is very different. Germany, Austria and Switzerland have very similar social structures and religious groups of similar (relative) size. Thus, the dependence between country and religion is weak in this example. But in contrast to Germany, Poland and Czechia have enforced very restrictive migration policies in the past, which results in small religious minorities, especially on the Muslim side. Because both countries also barely have Jews, the Muslims and the Jews are in this example almost fully concentrated in Germany. The reason why Cramér's V yields such a small value is therefore not the weak dependence but the extremely skewed marginal distribution of the variable religion which is not matched by an equally skewed marginal distribution of the variable country. An additional reason is the structurally different definition of perfect dependence that I defend in this paper which can lead to large differences between traditional measures and $\gamma^*$ even if the marginal probabilities were such that the traditional measures are attainable (see Table \ref{tab:MSC_DC_counterexample}).
    
     Another nice example is the border triangle Bulgaria, Greece and Turkey. While Bulgaria and Greece have Christian majorities, Turkey is a primarily Muslim country. This situation of an obvious statistical dependence which consists of large religious blocks concentrated in different countries is nicely captured by both coefficients.

	\section{Conclusion}
	\label{sec:conclusion}
	This paper has proposed a new concept of perfect dependence between two random variables of which at least one has a nominal scale that is attainable irrespective of the involved marginal distributions. In addition to that, it introduces the notion of \textit{proper} dependence measures to the nominal case by defining a set of desirable properties which each such measure has to fulfill. All existing dependence measures fail to satisfy some of the axioms, especially since the existence axiom requires the ability to deal with a nominal-continuous combination of marginal distributions and the attainability axiom requires compliance with the previously defined concept of perfect dependence. As a result, new measures are necessary which are also proposed in this paper. For one of those, I develop asymptotic theory that can  subsequently be exploited for confidence intervals and independence testing. Simulations show good finite sample performance for both methods of statistical inference and two applications illustrate the superiority of the new coefficient relative to existing ones.
	
	Future research could continue the quest for a proper dependence measure also in the case in which only one nominal random variable is present. Also, it would be interesting to extend the notion of margin-freeness to nominal random variables and include this feature into the set of desirable properties \citep{geenens2020copula}.
    \bibliography{literature_autocorrelations_ordinal}

\clearpage
    \appendix
    \appendixpage
\numberwithin{equation}{section}
\numberwithin{table}{section}
\numberwithin{figure}{section}
\numberwithin{theorem}{section}
\section{Further Improper Dependence Measures}\label{sec:AppendixImproperMeasures}
	\subsection{Measures Based on Proportional Reduction in Predictive Error}
	The second most popular class of dependence measures has been proposed in the first of a series of seminal papers by \citet{Goodman1954}. It consists of Goodman-Kruskal's $\lambda$ and $\tau$. Again, both measures only work in a setting of two discrete random variables. Introducing the notations \begin{align*}
		p_{m\cdot}:= \max_{i}p_{i\cdot},\quad p_{\cdot m}:= \max_{j}p_{\cdot j}, \quad  p_{mj}:=\max_{i}p_{ij}\quad \text{and} \quad p_{im}:=\max_{j}p_{ij},
	\end{align*}
	one can define 
	\begin{align*}
		\lambda_y(X,Y):=\frac{\sum_{i=1}^{a}p_{im}-p_{\cdot m}}{1-p_{\cdot m}} \quad \text{and} \quad \lambda_x(X,Y):=\frac{\sum_{j=1}^{b}p_{mj}-p_{m \cdot}}{1-p_{m \cdot}}.
	\end{align*}
	The interpretation of $\lambda_y$ is the following: If someone (with knowledge about the distribution of $(X,Y)$) has to guess the category of $Y$ in which an unknown realization of $(X,Y)$ is in, the best guess would be to choose the category with the highest probability. This results in a misclassification probability of $1-p_{\cdot m}$. If on the other hand the value of $X$ that this unknown realization takes was known (and e.g.\ $x_i$), the probability of misclassification would be $1-p_{im}/p_{i\cdot}$. However, since this is a conditional probability we can use the law of total probability to obtain
	\begin{align}\label{eq:LTP}
		\mathbb{P}(Error)=\sum_{i=1}^{a}\mathbb{P}(Error|X=x_i)\mathbb{P}(X=x_i)=\sum_{i=1}^{a}(1-p_{im}/p_{i\cdot})p_{i\cdot}=1-\sum_{i=1}^{a}p_{im}.
	\end{align}
	$\lambda_y$ is now the relative reduction in error when guessing the category of $Y$ with knowing $X$ relative to without knowing $X$. The interpretation of $\lambda_x$ is analogous.
	
	Since both measures are asymmetric, one can construct a symmetrized measure as
	\begin{align*}
		\lambda(X,Y):=\frac{\frac{1}{2}(\sum_{i=1}^{a}p_{im}+\sum_{j=1}^{b}p_{mj}-p_{\cdot m}-p_{m \cdot})}{1-\frac{1}{2}(p_{\cdot m}+p_{m\cdot})}.
	\end{align*}
	The interpretation is again a reduction in predictive error but this time with a $50\%$ probability that the category of $X$ and a $50\%$ probability that the category of $Y$ has to be guessed.
	
	Goodman-Kruskal's $\tau$ in turn can be defined as
	\begin{align*}\tau_y(X,Y):=\frac{\sum_{i=1}^{a}\sum_{j=1}^{b}p_{ij}^2/p_{i\cdot}-\sum_{j=1}^{b}p_{\cdot j}^2}{1-\sum_{j=1}^{b}p_{\cdot j}^2} \quad \text{and} \quad \tau_x(X,Y):=\frac{\sum_{i=1}^{a}\sum_{j=1}^{b}p_{ij}^2/p_{\cdot j}-\sum_{i=1}^{a}p_{i\cdot}^2}{1-\sum_{i=1}^{a}p_{i\cdot}^2}.
	\end{align*}
	It has a similar interpretation as $\lambda$ but with a different classification rule. In the case without any information, we guess the category of $Y$ by choosing $y_j$ with probability $p_{\cdot j}$ for $j=1, ..., b$. This results in a misclassification probability of $1-\sum_{j=1}^{b}p_{\cdot j}^2$. In the case with information about $X$ (say $X=x_i$), we guess $y_j$ with probability $p_{ij}/p_{i\cdot}$ for $j=1, ..., b$. The probability of misclassification given $X=x_i$ is thus $1-\sum_{j=1}^{b}(p_{ij}/p_{i\cdot})^2$. A calculation similar to equation (\ref{eq:LTP}) leads to an unconditional misclassification probability of $1-\sum_{i=1}^{a}\sum_{j=1}^{b}p_{ij}^2/p_{i\cdot}$.
	
	Again, a symmetrized version 
	\begin{align*}
		\tau(X,Y):=\frac{\frac{1}{2}(\sum_{i=1}^{a}\sum_{j=1}^{b}p_{ij}^2/p_{i\cdot}+\sum_{i=1}^{a}\sum_{j=1}^{b}p_{ij}^2/p_{\cdot j}-\sum_{j=1}^{b}p_{\cdot j}^2-\sum_{i=1}^{a}p_{i\cdot}^2)}{1-\frac{1}{2}(\sum_{j=1}^{b}p_{\cdot j}^2+\sum_{i=1}^{a}p_{i\cdot}^2)}
	\end{align*}
	can be easily defined.
	
	Similarly to the contingency coefficients, the improperness of $\lambda$ and $\tau$ is immediate because of the necessity to have two discrete marginal distributions. However, the underlying concept of perfect dependence is still interesting.
	\begin{lemma}[attainability]\label{lem:lambdatau_attainability}
		It holds that $0\le \lambda(X,Y),\tau(X,Y)\le 1$. Additionally, the following statements hold:
		\begin{itemize}
			\item We have $\lambda(X,Y)=0$ if $X$ and $Y$ are independent but not the converse.
			\item We have $\tau(X,Y)=0$ if and only if $X$ and $Y$ are independent.
			\item We have $\lambda(X,Y)=1$ if and only if $\tau(X,Y)=1$ if and only if each row and each column of Table \ref{tab:contingency_table} contains only one nonzero entry.
		\end{itemize}
	\end{lemma}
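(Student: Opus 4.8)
The plan is to handle the range bounds $0\le\lambda,\tau\le 1$, the characterizations of the value $0$, and the characterization of the value $1$ in turn, reducing each to elementary inequalities for the table entries $p_{ij}$ and using throughout that all marginals are strictly positive, so that the denominators of $\lambda$ and $\tau$ are strictly positive.

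For the bounds, the workhorse inequalities are
\begin{align*}
p_{\cdot j}^{2}\;\le\;\sum_{i=1}^{a}\frac{p_{ij}^{2}}{p_{i\cdot}}\;\le\;p_{\cdot j},
\qquad
p_{\cdot m}\;\le\;\sum_{i=1}^{a}p_{im}\;\le\;1,
\end{align*}
together with their row/column transposes. The left inequality in the first chain is Cauchy--Schwarz applied to $a_i=p_{ij}/\sqrt{p_{i\cdot}}$ and $b_i=\sqrt{p_{i\cdot}}$ (using $\sum_i p_{i\cdot}=1$); the right inequality follows from $p_{ij}^{2}/p_{i\cdot}=p_{ij}\cdot(p_{ij}/p_{i\cdot})\le p_{ij}$. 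In the second chain the left inequality holds because $p_{im}=\max_j p_{ij}\ge p_{ij^{\star}}$ for the modal column $j^{\star}$, the right one because $\max_j p_{ij}\le p_{i\cdot}$. Summing over $j$ (resp.\ $i$) gives $\sum_j p_{\cdot j}^{2}\le\sum_{i,j}p_{ij}^{2}/p_{i\cdot}\le 1$, $\sum_i p_{i\cdot}^{2}\le\sum_{i,j}p_{ij}^{2}/p_{\cdot j}\le 1$, $p_{\cdot m}\le\sum_i p_{im}\le 1$ and $p_{m\cdot}\le\sum_j p_{mj}\le 1$; in each case the numerator of the measure is thereby seen to be nonnegative and bounded above by its denominator, so $0\le\lambda,\tau\le 1$.

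For the zero statements, independence ($p_{ij}=p_{i\cdot}p_{\cdot j}$) turns all the left inequalities above into equalities (in particular $\sum_i p_{ij}^{2}/p_{i\cdot}=p_{\cdot j}^{2}$ and $p_{im}=p_{i\cdot}p_{\cdot m}$, so $\sum_i p_{im}=p_{\cdot m}$), hence both numerators vanish and $\lambda=\tau=0$. Conversely, $\tau=0$ forces its numerator to vanish; being a sum of two nonnegative blocks, this gives $\sum_i p_{ij}^{2}/p_{i\cdot}=p_{\cdot j}^{2}$ for every $j$, and the equality case of the Cauchy--Schwarz step says $p_{ij}/p_{i\cdot}$ does not depend on $i$, i.e.\ $\mathbb{P}(Y=y_j\mid X=x_i)=\mathbb{P}(Y=y_j)$ for all $i,j$, which is independence. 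For $\lambda$ the converse genuinely fails, and I would close this part by exhibiting an explicit $3\times 3$ table, obtained as a small perturbation of a product table, in which every row maximum lies in the modal column and every column maximum lies in the modal row -- so that $\sum_i p_{im}=p_{\cdot m}$ and $\sum_j p_{mj}=p_{m\cdot}$, whence $\lambda=0$ -- but which does not satisfy $p_{ij}=p_{i\cdot}p_{\cdot j}$.

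For the value $1$: $\lambda=1$ means its numerator equals its denominator, i.e.\ $\sum_i p_{im}+\sum_j p_{mj}=2$, which, since each term lies in $[0,1]$, forces $\sum_i p_{im}=1$ and $\sum_j p_{mj}=1$. Then $\sum_i\bigl(p_{i\cdot}-\max_j p_{ij}\bigr)=0$ with nonnegative summands gives $\max_j p_{ij}=p_{i\cdot}$ for every $i$, i.e.\ each row has exactly one nonzero entry, and symmetrically each column has exactly one nonzero entry; the converse is the one-line check that under this structure $\sum_i p_{im}=\sum_j p_{mj}=1$. The argument for $\tau$ is structurally identical: $\tau=1$ forces $\sum_{i,j}p_{ij}^{2}/p_{i\cdot}=\sum_{i,j}p_{ij}^{2}/p_{\cdot j}=1$, and then $\sum_{i,j}p_{ij}\bigl(1-p_{ij}/p_{i\cdot}\bigr)=0$ with nonnegative summands forces each $p_{ij}\in\{0,p_{i\cdot}\}$, hence one nonzero entry per row, and symmetrically per column; the converse is again immediate. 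Chaining the two equivalences yields $\lambda=1\iff\tau=1$. I do not expect a real obstacle anywhere: the only delicate points are tracking the equality cases in Cauchy--Schwarz and in $\max_j p_{ij}\le p_{i\cdot}$ carefully enough to extract the structural conclusions, and pinning down the counterexample in the $\lambda=0$ statement so that it is simultaneously non-degenerate, mode-aligned, and non-product.
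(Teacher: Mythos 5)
Your proposal is correct, and its overall skeleton (reduce $\lambda=1$ and $\tau=1$ to $\sum_i p_{im}+\sum_j p_{mj}=2$ resp.\ $\sum_{i,j}p_{ij}^2/p_{i\cdot}+\sum_{i,j}p_{ij}^2/p_{\cdot j}=2$, then force both sums to equal $1$ and read off the one-nonzero-entry-per-row/column structure from the equality cases of $\max_j p_{ij}\le p_{i\cdot}$ and $p_{ij}/p_{i\cdot}\le 1$) is the same as the paper's. The one place where you take a genuinely different route is the direction $\tau=0\Rightarrow$ independence: you run Cauchy--Schwarz on $a_i=p_{ij}/\sqrt{p_{i\cdot}}$, $b_i=\sqrt{p_{i\cdot}}$ and exploit its equality case to get $\mathbb{P}(Y=y_j\mid X=x_i)$ constant in $i$, whereas the paper rewrites the numerator of $\tau_y$ as the weighted sum of squares $\sum_{i,j}(p_{ij}-p_{i\cdot}p_{\cdot j})^2/p_{i\cdot}$, from which independence drops out directly; the two arguments are equivalent in substance (the sum of squares is exactly the Cauchy--Schwarz defect), and yours also delivers the lower bound needed for $\tau\ge 0$, while the paper's identity makes the equality case slightly more transparent. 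Two further small points: you prove the range bounds $0\le\lambda,\tau\le 1$ explicitly, which the paper's proof leaves implicit; and for ``$\lambda=0$ does not imply independence'' you only describe the counterexample (a mode-aligned perturbation of a product table) rather than exhibiting one, while the paper gives an explicit $3\times 3$ table (entries $0.3,0.12,0.08;\ 0.12,0.11,0.07;\ 0.08,0.07,0.05$) which is precisely of the form you describe, so your sketch is easily completed -- a minor incompleteness, not a gap.
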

	Again, we are confronted with an even stricter concept of perfect dependence.
	\begin{lemma}[dependence concepts]\label{lem:lambda_DC}
		If a discrete random vector $(X,Y)$ fulfills $\lambda(X,Y)=1$ (or equivalently $\tau(X,Y)=1$), it also fulfills $MSC(X,Y)=\min\{a,b\}-1$. The reverse is not true.
	\end{lemma}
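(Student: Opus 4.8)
The plan is to obtain the forward implication as an immediate consequence of Lemmas~\ref{lem:MSC_attainability} and~\ref{lem:lambdatau_attainability}, and to refute the converse with a small rectangular table.

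\emph{Forward direction.} Suppose $\lambda(X,Y)=1$, which by Lemma~\ref{lem:lambdatau_attainability} is the same as $\tau(X,Y)=1$. That lemma says each row and each column of the contingency table in Table~\ref{tab:contingency_table} contains only one nonzero entry. By the standing assumption that $\Omega'$ and $\Omega''$ consist only of values with positive probability, every $p_{i\cdot}$ and every $p_{\cdot j}$ is strictly positive, so no row or column vanishes identically; hence each row and each column has \emph{exactly} one nonzero entry. Counting the nonzero entries first by rows and then by columns forces $a=b$, and in particular for every $i$ there is a $j$ with $p_{ij}=p_{i\cdot}$. This is exactly the condition that, by Lemma~\ref{lem:MSC_attainability} in the case $a\ge b$ (applicable since $a=b$), is equivalent to $MSC(X,Y)=\min\{a,b\}-1$. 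So the first assertion follows with no real computation.

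\emph{The converse fails.} Here I would display a table for which $MSC$ is maximal but $\lambda$ (hence $\tau$) is not. Take $a=3$, $b=2$ with joint law
\[
\begin{pmatrix} 1/3 & 0 \\ 1/3 & 0 \\ 0 & 1/3 \end{pmatrix},
\]
so $p_{1\cdot}=p_{2\cdot}=p_{3\cdot}=1/3$, $p_{\cdot 1}=2/3$, $p_{\cdot 2}=1/3$, all positive. Each of the three rows has a single nonzero entry, so Lemma~\ref{lem:MSC_attainability} (case $a\ge b$) gives $MSC(X,Y)=\min\{3,2\}-1=1$. But the first column has two nonzero entries, so the characterization in Lemma~\ref{lem:lambdatau_attainability} fails and therefore $\lambda(X,Y)\ne 1$ and $\tau(X,Y)\ne 1$; plugging into the definitions yields the explicit values $\lambda(X,Y)=2/3$ and $\tau(X,Y)=7/10$ for anyone who wants them.

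\emph{Main obstacle.} There is essentially none: both directions collapse onto the two preceding lemmas. The only steps needing a moment's care are (i) using the positive-marginals assumption to upgrade ``only one nonzero entry'' to ``exactly one'' and to conclude $a=b$, which is what licenses the application of Lemma~\ref{lem:MSC_attainability}, and (ii) making sure the counterexample is genuinely non-square ($a\neq b$), since when $a=b$ the two maximality conditions in fact coincide, so no square table could serve.
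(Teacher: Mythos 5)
Your proof is correct and takes essentially the same route as the paper, whose own proof is just the observation that the claim follows by combining Lemma \ref{lem:MSC_attainability} with Lemma \ref{lem:lambdatau_attainability}; you merely spell out what the paper leaves implicit, namely an explicit rectangular counterexample for the converse (your values $MSC=1=\min\{3,2\}-1$, $\lambda=2/3$, $\tau=7/10$ all check out) and the correct remark that no square table with positive marginals could serve. One minor point: since the paper's setup requires a nominal variable to have at least three categories, your $3\times 2$ table only fits the case where $Y$ is the non-nominal discrete variable; if one wants a counterexample inside the nominal--nominal setting, the same argument works verbatim for a $4\times 3$ table with single nonzero entries $p_{11}=p_{21}=p_{32}=p_{43}=1/4$.
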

	
	\subsection{Entropy-Based Measure}
	The last measure to review is the so-called uncertainty coefficient\footnote{In many publications it is mentioned that the coefficient goes back to Henri Theil. However, an original source is hard to find. As a consequence, we refuse to adopt the sometimes used name ``Theil's U''.}. It is originally only defined for two discrete random variables and reads as 
	\begin{align*}
		U(X,Y):=2\left(\frac{H(X)+H(Y)-H(X,Y)}{H(X)+H(Y)}\right),
	\end{align*}
	where \begin{align*}
		H(X):=-\sum_{i=1}^{a}p_{i\cdot}\ln(p_{i\cdot}), \quad H(Y):=-\sum_{j=1}^{b}p_{\cdot j}\ln(p_{\cdot j}), \quad \text{and}\quad  H(X,Y):=-\sum_{i=1}^{a}\sum_{j=1}^{b}p_{ij}\ln(p_{ij}).
	\end{align*}
	We obtain another attainability and dependence concepts lemma.
	\begin{lemma}[attainability]\label{lem:U_attainability}It holds that $0\le U(X,Y)\le 1$. Additionally, the following statements hold:
		\begin{itemize}
			\item We have $U(X,Y)=0$ if and only if $X$ and $Y$ are independent.
			\item We have $U(X,Y)=1$ if and only if each row and each column of Table \ref{tab:contingency_table} contains only one nonzero entry.
		\end{itemize}
	\end{lemma}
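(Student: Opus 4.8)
The plan is to recognise $U(X,Y)$ as twice a normalised mutual information and then lean on the standard nonnegativity and chain-rule properties of Shannon entropy. Setting $I(X,Y) := H(X) + H(Y) - H(X,Y)$, one has $U(X,Y) = 2I(X,Y)/(H(X)+H(Y))$; since $X$ and $Y$ are non-deterministic the denominator is strictly positive, so $U$ is well defined, and the whole lemma reduces to two inequalities for $I(X,Y)$ together with their equality cases.

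First I would prove $I(X,Y) \ge 0$, with equality if and only if $X$ and $Y$ are independent. Writing $I(X,Y) = -\sum_{i,j} p_{ij}\ln\big(p_{i\cdot}p_{\cdot j}/p_{ij}\big)$ (the $p_{ij}=0$ terms being dropped via the convention $0\ln 0 = 0$) and applying Jensen's inequality to the concave function $\ln$ gives $I(X,Y) \ge -\ln\big(\sum_{i,j} p_{i\cdot}p_{\cdot j}\big) = 0$; strict concavity of $\ln$ forces $p_{ij} = p_{i\cdot}p_{\cdot j}$ for all $i,j$ in the equality case, which is exactly independence. This yields $U(X,Y)\ge 0$ and the first bullet.

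For the upper bound I would invoke the chain rule $H(X,Y) = H(Y) + H(X\mid Y) = H(X) + H(Y\mid X)$, where the conditional entropies $H(X\mid Y) = \sum_j p_{\cdot j}\big(-\sum_i (p_{ij}/p_{\cdot j})\ln(p_{ij}/p_{\cdot j})\big)$ and $H(Y\mid X)$ are nonnegative. Substituting gives $I(X,Y) = H(X) - H(X\mid Y) = H(Y) - H(Y\mid X) \le \min\{H(X), H(Y)\}$, hence $2I(X,Y) \le H(X)+H(Y)$, i.e.\ $U(X,Y)\le 1$. Equality holds if and only if $H(X\mid Y) = 0$ and $H(Y\mid X) = 0$ simultaneously. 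Because every marginal $p_{\cdot j}$ (resp.\ $p_{i\cdot}$) is positive, $H(X\mid Y)=0$ is equivalent to each conditional law $X\mid Y=y_j$ being a point mass, i.e.\ each column of Table \ref{tab:contingency_table} having exactly one nonzero entry, and symmetrically $H(Y\mid X)=0$ to each row having exactly one nonzero entry; the converse direction is the elementary check that such a table makes both conditional entropies vanish. The only point requiring care --- and the place I would be most careful --- is the equality-case bookkeeping: confirming that the zero cells ($0\ln 0$) never spoil the Jensen or chain-rule arguments, and that positivity of the marginals rules out an all-zero row or column, so that ``contains only one nonzero entry'' and ``contains at most one nonzero entry'' describe the same tables.
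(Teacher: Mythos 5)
Your proof is correct, but it takes a different route from the paper: the paper does not argue the lemma itself at all and instead just points to equations (14.4.11)--(14.4.16) and (14.4.19) in \citet{press1992C}, whereas you give a self-contained information-theoretic derivation. Your lower bound is the standard nonnegativity of mutual information (Gibbs/Jensen), with the equality case yielding independence, and your upper bound uses the chain rule $I(X,Y)=H(X)-H(X\mid Y)=H(Y)-H(Y\mid X)$ together with nonnegativity of conditional entropy, so that $U=1$ forces $H(X\mid Y)=H(Y\mid X)=0$, i.e.\ degenerate conditionals in both directions; combined with the paper's standing assumption of strictly positive marginal cell probabilities this is exactly ``one nonzero entry per row and per column.'' What your approach buys is precisely the equality-case bookkeeping that the citation hides, and it keeps the paper self-contained; the paper's citation buys brevity. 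One tiny wrinkle in your Jensen step: the sum $\sum_{(i,j):p_{ij}>0}p_{i\cdot}p_{\cdot j}$ need not equal $1$, so the bound reads $I(X,Y)\ge -\ln\bigl(\sum_{(i,j):p_{ij}>0}p_{i\cdot}p_{\cdot j}\bigr)\ge 0$, and the equality case additionally forces that sum to be $1$, which is what rules out mass of the product measure off the support — this is exactly the zero-cell care you flagged, and it does not affect the conclusion.
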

	\begin{lemma}[dependence concepts]\label{lem:U_DC}
		If a discrete random vector $(X,Y)$ fulfills $U(X,Y)=1$, it also fulfills $MSC(X,Y)=\min\{a,b\}-1$. The reverse is not true.
	\end{lemma}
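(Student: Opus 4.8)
The plan is to read off both parts of the lemma from the two structural characterizations already in hand: Lemma~\ref{lem:U_attainability}, which says $U(X,Y)=1$ exactly when every row and every column of Table~\ref{tab:contingency_table} has a single nonzero entry, and Lemma~\ref{lem:MSC_attainability}, which says $MSC(X,Y)=\min\{a,b\}-1$ exactly when every row (when $a\ge b$) or every column (when $a\le b$) has a single nonzero entry. No new computation is needed; the task is just to match up the structural conditions. For the forward implication, suppose $U(X,Y)=1$. By Lemma~\ref{lem:U_attainability} each row \emph{and} each column of Table~\ref{tab:contingency_table} has exactly one nonzero entry. If $a\ge b$ the ``one nonzero entry per row'' condition is precisely the condition appearing in the second bullet of Lemma~\ref{lem:MSC_attainability}; if $a\le b$ the ``one nonzero entry per column'' condition is. Either way Lemma~\ref{lem:MSC_attainability} yields $MSC(X,Y)=\min\{a,b\}-1$, which is the claim.

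For the converse, I would simply exhibit a discrete random vector with $MSC(X,Y)=\min\{a,b\}-1$ but $U(X,Y)\neq 1$. Take $a=3$, $b=2$ and the joint law given by
\begin{align*}
\begin{pmatrix} 1/3 & 0 \\ 1/3 & 0 \\ 0 & 1/3 \end{pmatrix},
\end{align*}
whose row marginals are all $1/3$ and whose column marginals are $2/3$ and $1/3$, so all marginals are strictly positive and we are inside the admissible setup. Each row has exactly one nonzero entry and $a\ge b$, so Lemma~\ref{lem:MSC_attainability} gives $MSC(X,Y)=\min\{3,2\}-1=1$. However the first column has two nonzero entries, so by Lemma~\ref{lem:U_attainability} we have $U(X,Y)\neq 1$. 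This shows the reverse implication fails, exactly as in Lemmas~\ref{lem:lambda_DC} and~\ref{lem:lambdatau_attainability}.

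There is no real difficulty here; the only point worth a moment's thought is the choice of counterexample. Because all marginals are assumed strictly positive, in a \emph{square} table the property ``one nonzero entry per row'' already forces (by a surjection-is-a-bijection argument on the row-to-column assignment) ``one nonzero entry per column,'' so a square table can never separate the $MSC$-condition from the $U$-condition. Hence the witnessing distribution for the converse must be taken non-square, which is why I use a $3\times 2$ table rather than, say, a $3\times 3$ one.
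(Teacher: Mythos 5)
Your proof is correct and follows essentially the paper's route: the paper also deduces the forward implication by matching the structural characterizations (it cites Lemma~\ref{lem:lambda_DC}, which itself is just Lemma~\ref{lem:MSC_attainability} plus the row/column condition, together with Lemma~\ref{lem:U_attainability}), so going directly through Lemma~\ref{lem:MSC_attainability} changes nothing of substance. Your explicit $3\times 2$ counterexample for the converse, and the observation that strictly positive marginals force any square-table witness to fail (so the table must be non-square), are welcome additions the paper leaves implicit in its one-line proof.
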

    \clearpage
\section{Perfect Dependence via Binary Decomposition}
A self-evident viewpoint towards perfect dependence in contingency tables with nominal random variables is to reduce the problem to the binary case in which nominal random variables can be treated as ordinal. One could, for example, define perfect dependence in such a setting as follows.
\begin{definition}\label{def:perfect_dependence_binary}
    Let $X$ and $Y$ be nominal, that is, $\Omega'=\{\omega'_1, ..., \omega'_a\}$ and $\Omega''=\{\omega''_1, ..., \omega''_b\}$. $X$ and $Y$ are perfectly dependent if every element of $\{\mathds{1}_{\{\omega'_1\}}(X), ..., \mathds{1}_{\{\omega'_a\}}(X)\}\times \{\mathds{1}_{\{\omega''_1\}}(Y), ..., \mathds{1}_{\{\omega''_b\}}(Y)\}$ is perfectly dependent, i.e.\ comonotonic or countermonotonic.
\end{definition}
However, with such a definition, two problems arise. Firstly, we are left without a definition of perfect dependence for the case in which $X$ is nominal and $Y$ is continuous. Secondly, it is easily possible to construct marginal distributions that do not allow for such perfect dependence. 
\begin{example}
    Let $\Omega'=\{A,B,C\}$ and $\Omega''=\{a,b\}$. Consider the marginal probabilities of Table \ref{tab:counterexample_binary}. We start in Table \ref{tab:decomposition1A} with a comonotonic coupling and fill in the corresponding interior fields in Table \ref{tab:counterexample_binary}. In Table \ref{tab:decomposition1B}, we must choose a countermonotonic coupling to respect the marginal probabilities in Table \ref{tab:counterexample_binary}. As a consequence, the decomposition in Table \ref{tab:decomposition1C} can no longer be either co- or countermonotonic. The same logic can be applied in Table \ref{tab:decomposition2} and again we are left with one binary comparison that is not perfectly dependent. Since there are no further possible decompositions, we have proved that our example does not allow for perfect dependence in the sense of Definition \ref{def:perfect_dependence_binary}.
    		\begin{table}[H]
			\centering
     \begin{subtable}{0.49\textwidth}
     \centering
			\begin{tabular}{cccc}
				\toprule
				& a &  b \\
				\midrule 
				$A$& $4/15$&$0$ &$4/15$\\
				$B$&$0$ & $5/15$&$5/15$\\  
				$C$& $3/15$& $3/15$&$6/15$\\
				&$7/15$&$8/15$&1\\
				\bottomrule
			\end{tabular}
			\caption{Decomposition 1}
   \end{subtable}
   \hfill
        \begin{subtable}{0.49\textwidth}
     \centering
			\begin{tabular}{cccc}
				\toprule
				& a &  b \\
				\midrule 
				$A$& $0$& $4/15$&$4/15$\\
				$B$&$5/15$ & $0$&$5/15$\\  
				$C$&$2/15$ & $4/15$&$6/15$\\
				&$7/15$&$8/15$&1\\
				\bottomrule
			\end{tabular}
			\caption{Decomposition 2}
   \end{subtable}
   \caption{Contingency table which does not allow for a decomposition into co- or countermonotonic binary comparisons.}
\label{tab:counterexample_binary}
\end{table}
\begin{table}[H]
  \centering
  \begin{subtable}{0.3\textwidth}
    \centering
			\begin{tabular}{cccc}
				\toprule
				& a &  b \\
				\midrule 
				$A$& $4/15$&$0$ &$4/15$\\
				$A^c$& $3/15$&$8/15$ &$11/15$\\  
				&$7/15$&$8/15$&1\\
				\bottomrule
			\end{tabular}
      \caption{Comonotonic}
      \label{tab:decomposition1A}
\end{subtable}
\hfill
  \begin{subtable}{0.3\textwidth}
    \centering
    \begin{tabular}{cccc}
				\toprule
				& a &  b \\
				\midrule 
				$B$&$0$ & $5/15$&$5/15$\\
				$B^c$&$7/15$ & $3/15$&$10/15$\\  
				&$7/15$&$8/15$&1\\
				\bottomrule
			\end{tabular}
   \caption{Countermonotonic}
\label{tab:decomposition1B}
\end{subtable}
\hfill
  \begin{subtable}{0.3\textwidth}
    \centering
    \begin{tabular}{cccc}
				\toprule
				& a &  b \\
				\midrule 
				$C$&$3/15$ & $3/15$&$6/15$\\
				$C^c$& $4/15$& $5/15$&$9/15$\\  
				&$7/15$&$8/15$&1\\
				\bottomrule
			\end{tabular}
      \caption{Neither co- nor countermon.}
      \label{tab:decomposition1C}
\end{subtable}
\caption{Decomposition 1}
\label{tab:decomposition1}
\end{table}
\begin{table}[H]
  \centering
  \begin{subtable}{0.3\textwidth}
    \centering
			\begin{tabular}{cccc}
				\toprule
				& a &  b \\
				\midrule 
				$A$& $0$& $4/15$&$4/15$\\
				$A^c$& $7/15$& $4/15$&$11/15$\\  
				&$7/15$&$8/15$&1\\
				\bottomrule
			\end{tabular}
      \caption{Countermonotonic}
      \label{tab:decomposition2A}
\end{subtable}
\hfill
  \begin{subtable}{0.3\textwidth}
    \centering
    \begin{tabular}{cccc}
				\toprule
				& a &  b \\
				\midrule 
				$B$& $5/15$& $0$&$5/15$\\
				$B^c$& $2/15$& $8/15$&$10/15$\\  
				&$7/15$&$8/15$&1\\
				\bottomrule
			\end{tabular}
   \caption{Comonotonic}
\label{tab:decomposition2B}
\end{subtable}
\hfill
  \begin{subtable}{0.3\textwidth}
    \centering
    \begin{tabular}{cccc}
				\toprule
				& a &  b \\
				\midrule 
				$C$& $2/15$& $4/15$&$6/15$\\
				$C^c$&$5/15$ & $4/15$&$9/15$\\  
				&$7/15$&$8/15$&1\\
				\bottomrule
			\end{tabular}
      \caption{Neither co- nor countermon.}
      \label{tab:decomposition2C}
\end{subtable}
\caption{Decomposition 2}
\label{tab:decomposition2}
\end{table}
\end{example}
\clearpage
\section{Proofs}
This section provides the proofs for all the conceptual results of Sections \ref{sec:perfect_dependence}, \ref{sec:improper_measures} and \ref{sec:proper_measures} as well as Appendix \ref{sec:AppendixImproperMeasures} in Subsection \ref{subsec-App:conceptual}. After that, it presents the proofs for all the asymptotic results of Section \ref{sec:statistical_inference} in Subsection \ref{subsec-App:asymptotic}.
\subsection{Proofs for the Results of Sections \ref{sec:perfect_dependence}, \ref{sec:improper_measures}, \ref{sec:proper_measures} and Appendix \ref{sec:AppendixImproperMeasures}}\label{subsec-App:conceptual}
	\begin{proof}[Proof of Lemma \ref{lem:perfectdependence}]
		Case 1: ``$(ii)\Rightarrow (i)$''. By assumption it holds that $F_{X^{s_a},Y}(x,y)=\min\left\{F_{X^{s_a}}(x),F_Y(y)\right\}$. This is equivalent to $(X^{s_a},Y)\sim (F^{\leftarrow}_{X^{s_a}}(U), F^{\leftarrow}_Y(U))$, where $U\sim U(0,1)$ and $F^{\leftarrow}(u):=\inf\{x:F(x)\ge u\}$ for some CDF $F$ \citep{Embrechts2002}. Denote now by $s_a^-\in S_a$ the complementary permutation of $s_a$, i.e.\ $$s_a^-=(a-s_a(1)+1, ..., a-s_a(a)+1).$$
		It follows that $\left(X^{s_a^-},Y\right)\sim \left(F^{\leftarrow}_{X^{s_a^-}}(1-U), F^{\leftarrow}_Y(U)\right)$, which is equivalent to $F_{X^{s_a^-},Y}(x,y)=\max\left\{F_{X^{s_a}}(x)+F_Y(y)-1,0\right\}$. The proof for the opposite direction works similarly.
		
		Case 2: Perform the same procedure as in the proof for case 1 while holding $s_b$ constant.
	\end{proof}
    \begin{proof}[Proof of Lemma \ref{lem:MSC_attainability}]
    The first statement is immediate from the definitions of independence and $MSC$.
    
    For the second statement, assume w.l.o.g.\ that $a\ge b$. We can rewrite
    \begin{align*}
        MSC(X,Y)=&\ \sum_{i=1}^{a}\sum_{j=1}^{b}\frac{(\mathbb{P}(X=x_i, Y=y_j)-\mathbb{P}(X=x_i)\mathbb{P}(Y=y_j))^2}{\mathbb{P}(X=x_i)\mathbb{P}(Y=y_j)}\\
        =&\ \sum_{i=1}^{a}\sum_{j=1}^{b}\frac{\mathbb{P}(X=x_i, Y=y_j)^2}{\mathbb{P}(X=x_i)\mathbb{P}(Y=y_j)}-2\sum_{i=1}^{a}\sum_{j=1}^{b}\mathbb{P}(X=x_i, Y=y_j)\\
        &+\sum_{i=1}^{a}\sum_{j=1}^{b}\mathbb{P}(X=x_i)\mathbb{P}(Y=y_j).
    \end{align*}
    The latter two sums are 1 and therefore it follows that 
    \begin{align*}
        MSC(X,Y)
        =&\ \sum_{i=1}^{a}\sum_{j=1}^{b}\frac{\mathbb{P}(X=x_i, Y=y_j)^2}{\mathbb{P}(X=x_i)\mathbb{P}(Y=y_j)}-1.
    \end{align*}
    We can further rewrite 
    \begin{align*}
        \sum_{i=1}^{a}\sum_{j=1}^{b}\frac{\mathbb{P}(X=x_i, Y=y_j)^2}{\mathbb{P}(X=x_i)\mathbb{P}(Y=y_j)}
        = \sum_{j=1}^{b}\frac{1}{\mathbb{P}(Y=y_j)}\sum_{i=1}^{a}\underbrace{\frac{\mathbb{P}(X=x_i, Y=y_j)}{\mathbb{P}(X=x_i)}}_{\le 1}\mathbb{P}(X=x_i, Y=y_j)
        \le b. 
    \end{align*}
    It is obvious that this upper bound is reached if and only if $\mathbb{P}(X=x_i, Y=y_j)=\mathbb{P}(X=x_i)$ for all $i=1, ..., a$ and $j=1, ..., b$ for which $\mathbb{P}(X=x_i, Y=y_j)\ne 0$ holds. Since there can always be only one $j$ for which this condition holds, the other joint probabilities have to be 0.
    \end{proof}
    \begin{proof}[Proof of Lemma \ref{lem:V_attainability}]
    We only prove the third statement. W.l.o.g.\ we assume $a\ge b$. It holds that
    \begin{align*}
        \min\{a-1,b-1\}=b-1=\sqrt{(b-1)^2}\le \sqrt{(b-1)(a-1)}
    \end{align*}
    with equality if and only if $a=b$.
	\end{proof}
    \begin{proof}[Proof of Lemma \ref{lem:MSC_DC}]
    We know that $MSC(X,Y)=\min\{a,b\}-1$ if and only if each row (if $a\ge b$) or each column (if $a\le b$) contains only one non-zero entry. If we have two nominal variables, we can thus choose a numbering $(s_a, s_b)$ leading to $(X^{s_a},Y^{s_b})=(F^{\leftarrow}_{X^{s_a}}(U), F^{\leftarrow}_{Y^{s_b}}(U))$. The same holds true if we only have one variable of a nominal scale. 

    For the second statement, observe the following two counterexamples:
    \begin{enumerate}
        \item Consider $a=b=3$ and $\mathbb{P}(X=x_1)=0.1,\ \mathbb{P}(X=x_2)=0.7, \ \mathbb{P}(X=x_3)=0.2,\ \mathbb{P}(Y=y_1)=0.3,\ \mathbb{P}(Y=y_2)=0.6$ and $\mathbb{P}(Y=y_3)=0.1$. With such marginal probabilities, it is impossible to form a contingency table in which every row and every column contain only one non-zero entry.
        \item Table \ref{tab:MSC_DC_counterexamplea} and \ref{tab:MSC_DC_counterexampleb} constitute a counterexample.
    \end{enumerate}
    \begin{table}[tb]
    \centering
    \begin{subtable}{0.49\textwidth}
        \centering
        \small
        \begin{tabular}{ccccc}
            \toprule
            & $y_1$ &  $y_2$ & $y_3$ \\
            \midrule 
            $x_1$&$0.4$&$0.2$&$0.1$&$0.7$\\
            $x_2$&$0.2$&$0$&$0$&$0.2$\\  
            $x_3$&$0.1$&$0$&$0$&$0.1$\\
            &$0.7$&$0.2$&$0.1$&1\\
            \bottomrule
        \end{tabular}
        \caption{$MSC(X,Y)\ne \min\{a,b\}-1$ and perfect dependence as in Definition \ref{def:perfect_dependence}.}
        \label{tab:MSC_DC_counterexamplea}
        \end{subtable}
        \hfill
        \begin{subtable}{0.49\textwidth}
        \centering
        \small
        \begin{tabular}{ccccc}
            \toprule
            & $y_1$ &  $y_2$ & $y_3$ \\
            \midrule 
            $x_1$&$0.7$&$0$&$0$&$0.7$\\
            $x_2$&$0$&$0.2$&$0$&$0.2$\\  
            $x_3$&$0$&$0$&$0.1$&$0.1$\\
            &$0.7$&$0.2$&$0.1$&1\\
            \bottomrule
        \end{tabular}
        \caption{$MSC(X,Y)=\min\{a,b\}-1$ and perfect dependence as in Definition \ref{def:perfect_dependence}.}
        \label{tab:MSC_DC_counterexampleb}
        \end{subtable}
        \caption{Fictitious counterexample supporting the proof for Lemma \ref{lem:MSC_DC}.}
        \label{tab:MSC_DC_counterexample}
    \end{table}
	\end{proof}
    \begin{proof}[Proof of Proposition \ref{prop:propercase1}]
    Axioms (A) and (F) are clear.

    For axiom (B), note that the complementary permutation (see the proof for Lemma \ref{lem:perfectdependence} for a definition) constitutes a strictly monotonic transformation and it holds that $\gamma(X^{s_a}, Y)=-\gamma(X^{s_a^-},Y)$. The weak positivity of $\gamma^*$ follows.

    For axiom (C), suppose that $X$ and $Y$ are independent. Then, $X^{s_a}$ and $Y$ are also independent for every $s_a \in S_a$. Therefore, $\gamma^*(X,Y)=0$ follows. For the other direction, consider the following counterexample. Let $Y\sim \mathcal{N}(0,1)$ and $X$ being defined via $\p(X=A\mid 3\le |Y|)=1,\ \p(X=B\mid 2\le |Y|<3)=1$ and $\p(X=C\mid 2> |Y|)=1$. Clearly, $X$ and $Y$ are not independent but it still holds that $\gamma^*(X,Y)=0$.

    Axiom (D) follows because if $X$ and $Y$ are perfectly dependent, the maximum operator will find $s_a \in S_a$ such that $\gamma(X^{s_a}, Y)=1$. This is true because $\gamma$ is 1 if and only if the supplied bivariate CDF is an upper Fréchet-Hoeffding bound (see Remark 2.8 in \cite{mesfioui2005properties}). Conversely, if $\gamma^{*}(X,Y)=1$, there exists $s_a \in S_a$ such that $\gamma(X^{s_a}, Y)=1$ and therefore, $(X^{s_a}, Y)$ are comonotonic which is equivalent to $X$ and $Y$ being perfectly dependent.

    Axiom $(E)$ follows for strictly increasing transformations because of the rank-based nature of $Y$. For strictly decreasing transformations $g$, we can utilize that $\gamma(X^{s_a}, Y)=-\gamma(X^{s_a^-}, Y)=\gamma(X^{s_a^-}, g(Y))$, where $s_a^-$ is again the complementary permutation from Lemma \ref{lem:perfectdependence}.
    \end{proof}
    \begin{proof}[Proof of Proposition \ref{prop:propercase2}]
    Axioms (A) and (F) are clear. 

    For axiom (B), note that the complementary permutation (see the proof for Lemma \ref{lem:perfectdependence} for a definition) constitutes a strictly monotonic transformation and it holds that $\gamma(X^{s_a}, Y^{s_b})=-\gamma(X^{s_a^-}, Y^{s_b})=-\gamma(X^{s_a}, Y^{s_b^-})=\gamma(X^{s_a^-}, Y^{s_b^-})$. The weak positivity of $\gamma^*$ follows.
    
    Axiom (D) follows because if $X$ and $Y$ are perfectly dependent, the maximum operator will find $s_a \in S_a$ and $s_b \in S_b$ such that $\gamma(X^{s_a}, Y^{s_b})=1$ (see again Remark 2.8 in \cite{mesfioui2005properties}). Conversely, if $\gamma^{*}(X,Y)=1$, there exist $s_a \in S_a$ and $s_b \in S_b$ such that $\gamma(X^{s_a}, Y^{s_b})=1$ and therefore, $(X^{s_a}, Y^{s_b})$ is comonotonic or countermonotonic which is equivalent to $X$ and $Y$ being perfectly dependent.

    For axiom (C), consider the following reasoning. Suppose $X$ and $Y$ are independent. Then, $X^{s_a}$ and $Y^{s_b}$ are also independent for every $s_a \in S_a$ and $s_b \in S_b$. Therefore, $\gamma^*(X,Y)=0$ follows. 

    Conversely, assume $\gamma^*(X,Y)=0$. Then, the reasoning in the proof for axiom (B) applies and we have $\gamma(X^{s_a}, Y^{s_b})=0$ for all $s_a\in S_a$ and $s_b\in S_b$. We now proceed in three steps. Firstly, we show the binary case $(a=b=2)$. After that, we prove the result for arbitrary $a\in \mathbb{N}_{\ge 2}$ and $b=2$ and then for arbitrary $a\in \mathbb{N}_{\ge 2}$ and $b\in \mathbb{N}_{\ge 2}$. For the binary case, simple calculations yield $\tau^{K}\left(\mathds{1}_{\omega_1'}, \mathds{1}_{\omega_1''}\right)=2(\mathbb{P}(X=\omega_1', Y=\omega_1'')-\mathbb{P}(X=\omega_1') \mathbb{P}(Y=\omega_1''))$, where $\tau^{K}$ denotes Kendall's tau. Therefore, by assumption we get $$\mathbb{P}(X=\omega_1', Y=\omega_1'')=\mathbb{P}(X=\omega_1') \mathbb{P}(Y=\omega_1''), \mathbb{P}(X=\omega_1', Y=\omega_2'')=\mathbb{P}(X=\omega_1') \mathbb{P}(Y=\omega_2''),$$ $$\mathbb{P}(X=\omega_2', Y=\omega_1'')=\mathbb{P}(X=\omega_2') \mathbb{P}(Y=\omega_1'') \text{ and } \mathbb{P}(X=\omega_2', Y=\omega_2'')=\mathbb{P}(X=\omega_2') \mathbb{P}(Y=\omega_2'').$$ This is equivalent to $X$ and $Y$ being independent. 

    For arbitrary $a\in \mathbb{N}_{\ge 2}$ and $b=2$, our assumption implies for fixed $s_a$ and $s_b$ that \begin{align*}&\quad \ 2\sum_{j=1}^{a-1}\mathbb{P}(X^{s_a}=j,Y^{s_b}=1)\sum_{i=j+1}^{a}\mathbb{P}(X^{s_a}=i,Y^{s_b}=2)\\&=\mathbb{P}((X^{s_a}-\widetilde{X}^{s_a})(Y^{s_b}-\widetilde{Y}^{s_b})>0)\\ &=\mathbb{P}((X^{s_a}-\widetilde{X}^{s_a})(Y^{s_b}-\widetilde{Y}^{s_b})<0)\\ &=2\sum_{j=1}^{a-1}\mathbb{P}(X^{s_a}=j,Y^{s_b}=2)\sum_{i=j+1}^{a}\mathbb{P}(X^{s_a}=i,Y^{s_b}=1),
    \end{align*}
    where $(\widetilde{X}^{s_a}, \widetilde{Y}^{s_b})$ denotes an independent copy of $(X^{s_a}, Y^{s_b})$.
    We can rewrite \begin{align}\nonumber
    &\mathbb{P}(X^{s_a}=1,Y^{s_b}=1)\mathbb{P}(X^{s_a}>1,Y^{s_b}=2)+\sum_{j=2}^{a-1}\mathbb{P}(X^{s_a}=j,Y^{s_b}=1)\sum_{i=j+1}^{a}\mathbb{P}(X^{s_a}=i,Y^{s_b}=2)\\
    &=\mathbb{P}(X^{s_a}=1,Y^{s_b}=2)\mathbb{P}(X^{s_a}>1,Y^{s_b}=1)+\sum_{j=2}^{a-1}\mathbb{P}(X^{s_a}=j,Y^{s_b}=2)\sum_{i=j+1}^{a}\mathbb{P}(X^{s_a}=i,Y^{s_b}=1).\label{eq:orig_perm}
    \end{align}
    The double sum at the left hand side of equation (\ref{eq:orig_perm}) can be reordered as 
    \begin{align*}
    \sum_{j=2}^{a-1}\sum_{i=j+1}^{a}\mathbb{P}(X^{s_a}=j,Y^{s_b}=1)\mathbb{P}(X^{s_a}=i,Y^{s_b}=2)=\sum_{i=3}^{a}\sum_{j=2}^{i-1}\mathbb{P}(X^{s_a}=i,Y^{s_b}=2)\mathbb{P}(X^{s_a}=j,Y^{s_b}=1)
    \end{align*}
    
    By assumption, the equation still holds if we permute the $a-1$ largest values for $X^{s_a}$ according to the complementary permutation $(a\mapsto 2, a-1 \mapsto 3, ..., 2\mapsto a)$. Therefore, it follows that
    \begin{align*}
    \sum_{i=3}^{a}\sum_{j=2}^{i-1}\mathbb{P}(X^{s_a}=i,Y^{s_b}=2)\mathbb{P}(X^{s_a}=j,Y^{s_b}=1)=\sum_{j=2}^{a-1}\sum_{i=j+1}^{a}\mathbb{P}(X^{s_a}=j,Y^{s_b}=2)\mathbb{P}(X^{s_a}=i,Y^{s_b}=1)
    \end{align*}
    
    We arrive at a second equation
    \begin{align}\nonumber
    &\mathbb{P}(X^{s_a}=1,Y^{s_b}=1)\mathbb{P}(X^{s_a}>1,Y^{s_b}=2)+\sum_{j=2}^{a-1}\mathbb{P}(X^{s_a}=j,Y^{s_b}=2)\sum_{i=j+1}^{a}\mathbb{P}(X^{s_a}=i,Y^{s_b}=1)\\
    &=\mathbb{P}(X^{s_a}=1,Y^{s_b}=2)\mathbb{P}(X^{s_a}>1,Y^{s_b}=1)+\sum_{j=2}^{a-1}\mathbb{P}(X^{s_a}=j,Y^{s_b}=1)\sum_{i=j+1}^{a}\mathbb{P}(X^{s_a}=i,Y^{s_b}=2),\label{eq:cont_perm}
    \end{align}
    because the double sum at the right hand side of equation (\ref{eq:orig_perm}) gets transformed in the opposite direction. Subtracting both equations from each other, it follows that
    \begin{align*}
    &\mathbb{P}(X^{s_a}=1,Y^{s_b}=1)\mathbb{P}(X^{s_a}>1,Y^{s_b}=2)-\mathbb{P}(X^{s_a}=1,Y^{s_b}=2)\mathbb{P}(X^{s_a}>1,Y^{s_b}=1)\\
    &=\mathbb{P}(X^{s_a}=1,Y^{s_b}=2)\mathbb{P}(X^{s_a}>1,Y^{s_b}=1)-\mathbb{P}(X^{s_a}=1,Y^{s_b}=1)\mathbb{P}(X^{s_a}>1,Y^{s_b}=2),
    \end{align*}
    which is equivalent to 
    \begin{align*}
    \mathbb{P}(X^{s_a}=1,Y^{s_b}=2)\mathbb{P}(X^{s_a}>1,Y^{s_b}=1)=\mathbb{P}(X^{s_a}=1,Y^{s_b}=1)\mathbb{P}(X^{s_a}>1,Y^{s_b}=2).
    \end{align*}
    This in turn implies uncorrelatedness of the events $\{X^{s_a}=1\}$ and $\{Y^{s_b}=2\}$ and their corresponding complements. In the binary case, uncorrelatedness is equivalent to independence. Since $s_a$ was arbitrary, we could start with any other permutation and arrive at the result that any $\mathds{1}_{\{\omega_1'\}}, ..., \mathds{1}_{\{\omega_a'\}}$ is independent of $Y$. As a result, we get $\mathbb{P}(A\cap B)=\mathbb{P}(A)\mathbb{P}(B)$ for any $A\in \mathcal{P}(\Omega')$ and $B\in \mathcal{P}(\Omega'')$, which is equivalent to independence between $X$ and $Y$.

    For the general case, we can apply the previous reasoning $b$ times and arrive at the conclusion that \begin{align*}
    \mathbb{P}(X=\omega'_i, Y=\omega''_j)=\mathbb{P}(X=\omega'_i)\mathbb{P}(Y=\omega''_j)\quad \forall (i,j)\in \{1, ..., a\}\times \{1, ..., b\}
    \end{align*}
    This is enough for independence between $X$ and $Y$.
	\end{proof}
    \begin{proof}[Proof of Lemma \ref{lem:maximalcorrelation}]
    The set of all bijective Borel-measurable functions has infinite cardinality. Since any two functions $f$ and $f'$ who order the values of $X$ in the same way also lead to the same coefficient value, we only have to consider either $f$ or $f'$ and not both. Therefore, a function class with one representative for each numbering suffices for the maximization problem to remain unchanged. Then, we maximize over a set of finite cardinality and the supremum can be replaced by a maximum.
	\end{proof}
    \begin{proof}[Proof of Lemma \ref{lem:lambdatau_attainability}]
    We start with statement 1. Under independence it holds that
    \begin{align*}\sum_{i=1}^{a}p_{im}+\sum_{j=1}^{b}p_{mj}-p_{\cdot m}-p_{m \cdot}=p_{\cdot m}\sum_{i=1}^{a}p_{i\cdot}+p_{m\cdot}\sum_{j=1}^{b}p_{\cdot j}-p_{\cdot m}-p_{m \cdot}=0.
    \end{align*}
    \begin{table}[tb]
        \centering
        \small
        \begin{tabular}{ccccc}
            \toprule
            & 1 &  2 & 3 \\
            \midrule $A$&$0.3$&$0.12$&$0.08$&$0.5$\\
            $B$&$0.12$&$0.11$&$0.07$&$0.3$\\  
            $C$&$0.08$&$0.07$&$0.05$&$0.2$\\
            &$0.5$&$0.3$&$0.2$&1\\
            \bottomrule
        \end{tabular}
        \caption{Fictitious counterexample supporting the proof for Lemma \ref{lem:lambdatau_attainability}.}
        \label{tab:lambdatau_att_counterexample}
    \end{table}
    As a counterexample that $\lambda=0$ does not imply independence, consider Table \ref{tab:lambdatau_att_counterexample}. It holds that 
    \begin{align*}
        \sum_{i=1}^{a}p_{im}+\sum_{j=1}^{b}p_{mj}-p_{\cdot m}-p_{m \cdot}=2(0.3+0.12+0.08-0.5)=0,
    \end{align*}
    but the involved random variables are clearly not independent.
    
    For the second statement, note that $\tau$ can be rewritten as 
    \begin{align*}
        \tau(X,Y)=\frac{\tau_y(1-\sum_{j=1}^{b}p_{\cdot j}^2)+\tau_x(1-\sum_{i=1}^{a}p_{i\cdot}^2)}{2-\sum_{j=1}^{b}p_{\cdot j}^2-\sum_{i=1}^{a}p_{i\cdot}^2}.
    \end{align*}
    Therefore, $\tau=0$ is equivalent to $\tau_y=0$ and $\tau_x=0$.
    We reformulate the numerator in $\tau_y$ as
    \begin{align*}
        \sum_{i=1}^{a}\sum_{j=1}^{b}\frac{p_{ij}^2}{p_{i\cdot}}-\sum_{j=1}^{b}p_{\cdot j}^2&=\sum_{i=1}^{a}\sum_{j=1}^{b}\frac{p_{ij}^2}{p_{i\cdot}}-\sum_{i=1}^{a}\sum_{j=1}^{b}p_{i\cdot}p_{\cdot j}^2\\
        &=\sum_{i=1}^{a}\sum_{j=1}^{b}\frac{p_{ij}^2}{p_{i\cdot}}-2p_{i\cdot}p_{\cdot j}^2+p_{i\cdot}p_{\cdot j}^2\\
        &=\sum_{i=1}^{a}\sum_{j=1}^{b}\frac{p_{ij}^2}{p_{i\cdot}}-2p_{ij}p_{\cdot j}+p_{i\cdot}p_{\cdot j}^2\\
        &=\sum_{i=1}^{a}\sum_{j=1}^{b}\frac{(p_{ij}-p_{i\cdot}p_{\cdot j})^2}{p_{i\cdot}}.
    \end{align*}
    Note that the third equality did not use the independence between $X$ and $Y$ since it holds that $$\sum_{i=1}^{a}\sum_{j=1}^{b}p_{i\cdot}p_{\cdot j}^2=\sum_{j=1}^{b}p_{\cdot j}^2\sum_{i=1}^{a}p_{i\cdot}=\sum_{j=1}^{b}p_{\cdot j}^2=\sum_{j=1}^{b}p_{\cdot j}^2\sum_{i=1}^{a}\frac{p_{ij}}{p_{\cdot j}}=\sum_{i=1}^{a}\sum_{j=1}^{b}p_{ij}p_{\cdot j}$$
    without such an assumption. The same reformulation can be conducted for $\tau_x$ and the result follows.
    
    Statement 3 follows by the following reasoning: We have \begin{align*}
        \lambda(X,Y)=1\Leftrightarrow \sum_{i=1}^{a}p_{im}+\sum_{j=1}^{b}p_{mj}=2
    \end{align*}
    and the latter equality is equivalent to the condition in the statement. Similarly, it holds that
    \begin{align*}
        \tau(X,Y)=1 \Leftrightarrow \sum_{i=1}^{a}\sum_{j=1}^{b}p_{ij}^2/p_{i\cdot}+\sum_{i=1}^{a}\sum_{j=1}^{b}p_{ij}^2/p_{\cdot j}=2.
    \end{align*}
    Since $p_{ij}/p_{i\cdot}\le 1$, the inequality $\sum_{i=1}^{a}\sum_{j=1}^{b}p_{ij}^2/p_{i\cdot}\le\sum_{i=1}^{a}\sum_{j=1}^{b}p_{ij}=1$ holds where equality holds if and only if $p_{ij}=p_{i\cdot} \ \forall \ i=1, ..., a$ and $j=1, ..., b$ for which $p_{ij}\ne 0$. A similar argument holds for the second summand and the statement follows.
	\end{proof}
    \begin{proof}[Proof of Lemma \ref{lem:lambda_DC}]
    The lemma follows by invoking Lemma \ref{lem:MSC_attainability} and Lemma \ref{lem:lambdatau_attainability}.
	\end{proof}
    \begin{proof}[Proof of Lemma \ref{lem:U_attainability}]
    For both statements, see equations (14.4.11)-(14.4.16) and (14.4.19) in \citet{press1992C}.
	\end{proof}
    \begin{proof}[Proof of Lemma \ref{lem:U_DC}]
    The proof is immediate by Lemma \ref{lem:lambda_DC} and Lemma \ref{lem:U_attainability}.
	\end{proof}
\subsection{Proofs for the Results of Section \ref{sec:statistical_inference}}\label{subsec-App:asymptotic}
	\begin{proof}[Proof of Proposition \ref{prop:gamma*_consistency}]
    We only prove the proposition for the case of two nominal random variables. The proof for case 1 is analogous. With probability 1, there exists a sample size $N\in \mathbb{N}$ for which every element of $\Omega'$ and $\Omega''$ occurs in the sample. Hence, for $n\ge N$ we have $k=a$ and $l=b$ in the notation of Definition \ref{def:estim}. Since $\widehat{\gamma}_n$ is a consistent estimator for $\gamma$ \citep{pohle2024inference}, $\widehat{\gamma}_n(X^{s_a},Y^{s_b})$ is a consistent estimator for $\gamma(X^{s_a},Y^{s_b})$ for fixed permutations $s_a\in S_a$ and $s_b \in S_b$. If we write all coefficient values and estimates over which subsequently the maximum is taken in vectors, it holds that
    \begin{align}\label{eq:estim}
        \begin{pmatrix}
            \vdots\\
            \widehat{\gamma}_n(X^{s_a},Y^{s_b})\\
            \vdots
        \end{pmatrix}
        \overset{p}{\rightarrow}
        \begin{pmatrix}
            \vdots\\
            \gamma(X^{s_a},Y^{s_b})\\
            \vdots
        \end{pmatrix}
    \end{align}
    because consistency of a vector is equivalent to element-wise consistency \citep{lehmann1999elements}. The maximum function is continuous and thus, the continuous mapping theorem delivers the claim.
	\end{proof}
    \begin{proof}[Proof of Proposition \ref{prop:jointasymptotics}]
    Assume w.l.o.g.\ that we are in case 2. We use the multivariate CLT for U-statistics given in the appendix of \citet{pohle2024inference}. Our underlying random vector consists of all variables $X^{s_a}$ and $Y^{s_b}$ with $s_a\in S_a$ and $s_b\in S_b$ and is thus of dimension $a!+ b!$. We denote Kendall's $\tau$ by $\tau^K:=\tau^K(X^{s_a}, Y^{s_b})=\mathbb{P}((X^{s_a}-\widetilde{X}^{s_a})(Y^{s_b}-\widetilde{Y}^{s_b})>0)-\mathbb{P}((X^{s_a}-\widetilde{X}^{s_a})(Y^{s_b}-\widetilde{Y}^{s_b})<0)$ and $\nu:=\nu(X^{s_a}, Y^{s_b})=\mathbb{P}((X^{s_a}-\widetilde{X}^{s_a})(Y^{s_b}-\widetilde{Y}^{s_b})=0)$, where $(\widetilde{X}^{s_a},\widetilde{Y}^{s_b})$ is an independent copy of $(X^{s_a},Y^{s_b})$. The vector of U-statistics thus consists of $a!\cdot b!$ pairs $(\widehat{\tau}_n, \widehat{\nu}_n)$ and is thus of length $2\cdot a!\cdot b!$. In order to be able to distinguish those pairs from each other, we choose an arbitrary numbering of all elements of $S_a\times S_b$ and write $(\hdots \ \widehat{\tau}_n(i)\ \widehat{\nu}_n(i)\ \hdots)^{\top}$, where $1\le i \le a!\cdot b!$. It follows that
    \begin{align*}\sqrt{n}\left(
        \begin{pmatrix}
            \vdots\\
            \widehat{\tau}_n(i)\\
            \widehat{\nu}_n(i)\\
            \vdots
        \end{pmatrix}-
        \begin{pmatrix}
            \vdots\\
            \tau(i)\\
            \nu(i)\\
            \vdots
        \end{pmatrix}
        \right)\overset{d}{\rightarrow} \mathcal{N}_{2\cdot a!\cdot b!}(0,\Sigma_U^2) \quad \text{with} \quad \Sigma_U^2=(4\sigma_{v(i) w(j)})_{1\le i,j\le a!\cdot b!},
    \end{align*}
    where 
    \begin{align*}
        \sigma_{v(i) w(j)}=\E[k_1^{(v(i))}(X^{s_a(i)}, Y^{s_b(i)})k_1^{(w(j))}(X^{s_a(j)}, Y^{s_b(j)})],\quad v,w\in \{\tau, \nu\}
    \end{align*}
    and 
    \begin{align*}
        k_1^{(\tau(i))}(x, y)&=4G_{X^{s_a(i)}, Y^{s_b(i)}}(x,y)-2(G_{X^{s_a(i)}}(x)+G_{Y^{s_b(i)}}(y))+1-\tau(i),\\
        k_1^{(\nu(i))}(x, y)&=\p(X^{s_a(i)}=x)+\p(Y^{s_b(i)}=y)-\p(X^{s_a(i)}=x, Y^{s_b(i)}=y)-\nu(i).
    \end{align*}
    We now want to apply the Delta Method \citep[Theorem 3.1]{vandervaart2000} to obtain our result. For any fixed numbering $i$, it holds that $\gamma(i):=\gamma(X^{s_a(i)},Y^{s_b(i)})=f(\tau(i), \nu(i))=\frac{\tau(i)}{1-\nu(i)}$. Also, we obtain the derivatives \begin{align*}
        \frac{d\gamma(i)}{d\tau(i)}=\frac{1}{1-\nu(i)}, \quad \frac{d\gamma(i)}{d\nu(i)}=\frac{\tau(i)}{(1-\nu(i))^2}, \quad \frac{d\gamma(i)}{d\tau(j)}=0\quad \text{and} \quad \frac{d\gamma(i)}{d\nu(j)}=0
    \end{align*}
    for $j\ne i$. If we collect those derivatives in a matrix of dimension $a!\cdot b!\times 2\cdot a!\cdot b!$, we obtain
    \begin{align*}
        A^2:=\begin{pmatrix}
            \frac{1}{1-\nu(1)}&\frac{\tau(1)}{(1-\nu(1))^2}&0&0&0&0&\hdots&0\\
            0&0&\frac{1}{1-\nu(2)}&\frac{\tau(2)}{(1-\nu(2))^2} &0&0&\hdots&0\\
            \vdots&\vdots&\vdots&\vdots&\ddots&\ddots&\vdots&\vdots\\
            0&0&0&0&0&\hdots&\frac{1}{1-\nu(a!\cdot b!)}&\frac{\tau(a!\cdot b!)}{(1-\nu(a!\cdot b!))^2}
        \end{pmatrix}
    \end{align*}
    The statement in the proposition follows with $\Sigma^2=A\Sigma_U^2A^{\top}$ which holds because of the Delta Method.
	\end{proof}
    \begin{proof}[Proof of Proposition \ref{prop:asymptotics}]
    We start with the first statement. There exists some $N\in \mathbb{N}$ for which it holds that for $n\ge N$, $\widehat{\gamma}^*_n$ will almost surely always choose the population numbering $s_a^{(*)}$ $(s_b^{(*)})$. Therefore, $\sqrt{n} \left( \widehat{\gamma}^*_n -\gamma^* \right)$ is asymptotically equivalent to $\sqrt{n} \left( \widehat{\gamma}_n^{(*)} -\gamma^{(*)} \right)$ and the statement follows.

    For the second statement, note that $\widehat{\gamma}^*_n$ will also asymptotically oscillate between the $d$ population numberings. However, we almost surely find a $N\in \mathbb{N}$ such that for any sample size $n\ge N$, $\widehat{\gamma}^*_n$ almost surely always chooses one of the $d$ population numberings. The statement follows by the continuous mapping theorem because the maximum function is continuous.
	\end{proof}
 \begin{proof}[Proof of Lemma \ref{lem:variance_estimator_consistency}]
 Follows directly from Assumption \ref{ass:simplifying}, Glivenko-Cantelli's Theorem and the proof that $\widehat{\sigma}^2_{\gamma}$ is consistent for $\sigma^2_{\gamma}$ in the case of real-valued random variables \citep{pohle2024inference}.
 \end{proof}
 \begin{proof}[Proof of Lemma \ref{lem:globalFTest}]
    If $b_j=0\ \forall j\in\{2, ..., a\}$, then $Y=b_1+U\ a.s.$ and the independence follows by assumption. For the other direction, assume by contraposition that $b_j\ne 0$ for some $j\in \{2, ..., a\}$. Then, there exists some $A\in \mathcal{B}(\mathbb{R})$ for which $\mathbb{P}(Y\in A|X=\omega_j')\ne \mathbb{P}(Y\in A)$ holds.
\end{proof}
\begin{proof}[Proof of Lemma \ref{lem:gamma*asymptotics_independence}]
    By the Propositions \ref{prop:propercase1} and  \ref{prop:propercase2}, any population numbering will deliver $\gamma=0$. Therefore, we have to include every one-dimensional marginal distribution computed in Proposition \ref{prop:jointasymptotics} in our limiting distribution.
\end{proof}
\clearpage
\section{Computational Implementation}\label{sec-App:Computation}
The estimators introduced in Definition \ref{def:estim} involve a maximization and therefore, their computation is non-trivial. A brute-force approach would calculate $\widehat{\gamma}$ for each of the $k!$ (case 1) or $k!l!$ (case 2) bivariate samples and then choose the maximum value. However, due to the quickly increasing nature of the factorial, this algorithm gets infeasible for larger values of $k$ (and $l$). 

One key observation for accelerating the computation time in both cases is that for a given sample, maximizing $\widehat{\gamma}$ is equivalent to maximizing the number of concordant pairs. That is because the number of joint ties (and thus the sum of the number of concordant and discordant pairs) that determine the denominator of $\widehat{\gamma}$ is fixed irrespective of which permutation we assign.\footnote{Note that altering the dependence structure in a bivariate (empirical) distribution while at the same time keeping the marginal tie probabilities constant (which is what reassigning permutations is effectively doing) does in general not lead to identical joint tie probabilities (see Appendix C in \cite{pohle2024inference}). In this specific situation however, the invariance holds.}

For the case with only one nominal variable (case 1), we can circumvent the $\mathcal{O}(k!)$ complexity by employing a dynamic programming approach that yields the result in $\mathcal{O}(k2^k)$ time. We proceed as follows:
\begin{enumerate}
    \item Compute a matrix indexed by the nominal values of $X$: $$H[x_l,x_m]:=\#\{(i,j):Y_i<Y_j,X_i=x_l, X_j=x_m\}$$ with $l,m\in \{1, ..., k\}$ and $l\ne m$. Its $(x_l,x_m)$-entry amounts to the number of concordant pairs that would occur had $x_l$ been assigned a smaller number than $x_m$. We set $H[x_l,x_m]=0$ for $l,m\in \{1, ..., k\}$ and $l=m$.
    \item Compute the total number of non-tied pairs: $T=\sum_{l=1}^k\sum_{m=1}^kH[x_l,x_m]$
    \item For a set $S\in \mathcal{P}(\{x_1, ..., x_k\})$, where $\mathcal{P}$ denotes the power set operator, define \begin{align*}dp[S]:=\text{The most concordant pairs one can get using exactly the set of categories S.}\end{align*}
    \item Initialize $dp[\emptyset]=0$ and apply the recursion \begin{align}dp[S]=\max_{x\in S}\left(dp[S\backslash \{x\}]+\sum_{x'\in S\backslash \{x\}}H[x',x]\right)\label{eq:dp}\end{align} that allows to iteratively compute $dp[S]$ for sets $S$ of increasing cardinality.
    \item Denote the maximum number of concordant pairs by $C=dp[\{x_1, ..., x_k\}]$ and report the result $$\widehat{\gamma}^{*}=\frac{2C-T}{T}$$
\end{enumerate}
The crucial step in this algorithm is step 4. It relies on the characteristic of the problem that including a new value $x$ with the hitherto largest number adds the same number of concordant pairs irrespective of the order that the previously added values have between each other. As a consequence, we break the optimization problem into $2^k$ smaller pieces of which each has a complexity of $\mathcal{O}(k)$ due to the sum in (\ref{eq:dp}). Step 5 follows by $C+D=T$ for a fixed (at least ordinal) sample with $D$ denoting the number of discordant pairs and the ubiquitous definition $\widehat{\gamma}=(C-D)/(C+D)$. This algorithm is implemented in the corresponding \texttt{R} package \texttt{NCor} (\texttt{https://github.com/jan-lukas-wermuth/NCor}) and roughly allows for $k=30$ if only the estimator is desired and $k=6$ if on top of that the independence test is executed. 

Case 2 is much more complicated from a discrete optimization point of view. While it is definitely possible to construct heuristic approaches that approximate the maximum, the only algorithm that I am aware of which always finds the true maximum is indeed brute force. However, calculating the number of concordant pairs for a fixed numbering can be done efficiently. More precisely, we denote $n_{ij}$ with $i\in \{1, ..., k\}$ and $j\in \{1, ..., l\}$ as the absolute frequency associated with the (bivariate) realization after numbering $(i,j)$. Equivalently, it can be understood as the $(i,j)$-entry in an ordered contingency table that has been created after assigning the permutations. One can then compute $B_{ij}:=\sum_{k>i}n_{kj}$ for all $i\in \{1, ..., k\}$ by using the recursion $B_{ij}=B_{i+1,j}+n_{i+1,j}$ for all $j\in \{1, ..., l\}$. The same logic applies to $N_{ij}^{SE}:=\sum_{l>j}B_{il}$, which can be calculated by $N_{ij}^{SE}=N_{i,j+1}^{SE}+B_{i,j+1}$. $N_{ij}^{SE}$ amounts to the total number of observations for which both coordinates are strictly larger than $i$ or $j$, respectively. Heuristically, it is the sum of all the entries in the ordered contingency table that are south-east (hence the superscript $SE$) to the $(ij)$-entry. The total number of concordant pairs is then determined by $$C=\sum_{i=1}^k\sum_{j=1}^ln_{ij}N_{ij}^{SE}.$$
Note that this entire algorithm has complexity $\mathcal{O}(kl)$ and is therefore independent of the sample size. Yet, it has to be performed for every permutation and has therefore complexity $\mathcal{O}(k\cdot l\cdot k!\cdot l!)$. A similar algorithm can be applied for the discordant pairs and the result follows. In practice, the \texttt{R} implementation \texttt{NCor} yields fast results for the estimator roughly until $k=l=8$. If the independence test is also desired, results are quickly available until $k=l=4$.
\clearpage
\section{Additional Graphs}\label{sec-App:addGraphsTables}
\begin{figure}[h]
    \centering
    \includegraphics[width=\linewidth]{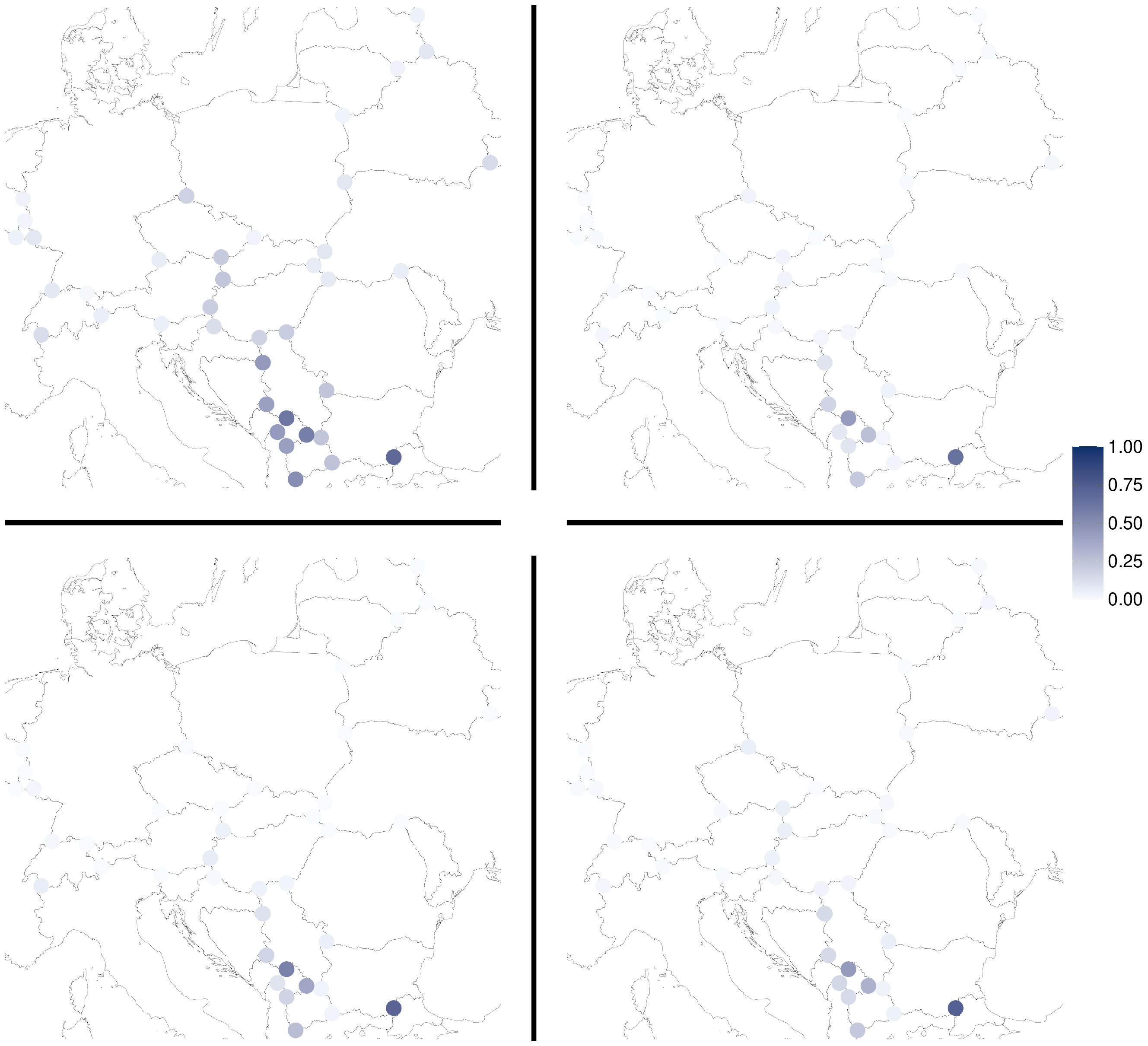}
    \caption{This figure shows $PC$ (upper left plot), $\tau$ (upper right plot), $\lambda$ (lower left plot) and $U$ (lower right plot) for the dependence between the variables country and religion. Each colored dot represents the value of the respective coefficient if the variable country is restricted to the three countries featuring in the border triangle at which the dot is located. The variable religion is always restricted to the three monotheistic world religions Christianity, Islam and Judaism.}
    \label{fig:Rel_PC_tau_lambda_uncert}
\end{figure}

\end{document}